\newif\iffullversion
\newif\ifnotfullversion
\theoremstyle{definition}
\newtheorem{theorem}{Theorem}[section]
\newtheorem{lemma}[theorem]{Lemma}
\newtheorem{corollary}[theorem]{Corollary}
\newtheorem{example}[theorem]{Example}
\newcommand{\DCESHi}{DCESH$_1$}
\newcommand{\DCESHn}{DCESH}
\newcommand{\removecodespace}{\vspace{-0.8cm}}
   \newcommand\SkipToFmtEnd{}%
   \newcommand\EndFmtInput{}%
   \long\def\SkipToFmtEnd#1\EndFmtInput{}%
\newcommand\ReadOnlyOnce[1]{\@ifundefined{#1}{\@namedef{#1}{}}\SkipToFmtEnd}
\DeclareFontFamily{OT1}{cmtex}{}
\DeclareFontShape{OT1}{cmtex}{m}{n}
  {<5><6><7><8>cmtex8
   <9>cmtex9
   <10><10.95><12><14.4><17.28><20.74><24.88>cmtex10}{}
\DeclareFontShape{OT1}{cmtex}{m}{it}
  {<-> ssub * cmtt/m/it}{}
\DeclareFontShape{OT1}{cmtt}{bx}{n}
  {<5><6><7><8>cmtt8
   <9>cmbtt9
   <10><10.95><12><14.4><17.28><20.74><24.88>cmbtt10}{}
\DeclareFontShape{OT1}{cmtex}{bx}{n}
  {<-> ssub * cmtt/bx/n}{}
\newcommand{\Conid}[1]{\mathit{#1}}
\newcommand{\Varid}[1]{\mathit{#1}}
\newcommand{\anonymous}{\kern0.06em \vbox{\hrule\@width.5em}}
\newcommand{\plus}{\mathbin{+\!\!\!+}}
\newdimen\mathindent\mathindent\leftmargini}%
\def\resethooks{%
  \global\let\SaveRestoreHook\empty
  \global\let\ColumnHook\empty}
\newcommand*{\savecolumns}[1][default]%
  {\g@addto@macro\SaveRestoreHook{\savecolumns[#1]}}
\newcommand*{\restorecolumns}[1][default]%
  {\g@addto@macro\SaveRestoreHook{\restorecolumns[#1]}}
\newcommand*{\aligncolumn}[2]%
  {\g@addto@macro\ColumnHook{\column{#1}{#2}}}
\newcommand{\onelinecommentchars}{\quad-{}- }
\newcommand{\commentbeginchars}{\enskip\{-}
\newcommand{\commentendchars}{-\}\enskip}
\newcommand{\visiblecomments}{%
  \let\onelinecomment=\onelinecommentchars
  \let\commentbegin=\commentbeginchars
  \let\commentend=\commentendchars}
\newcommand{\invisiblecomments}{%
  \let\onelinecomment=\empty
  \let\commentbegin=\empty
  \let\commentend=\empty}
\newlength{\blanklineskip}
\newcommand{\hsindent}[1]{\quad}
\let\hspre\empty
\let\hspost\empty
\newcommand{\hsnewpar}[1]%
  {{\parskip=0pt\parindent=0pt\par\vskip #1\noindent}}
\newcommand{\hscodestyle}{}
\newcommand{\sethscode}[1]%
  {\expandafter\let\expandafter\hscode\csname #1\endcsname
   \expandafter\let\expandafter\endhscode\csname end#1\endcsname}
   \let\hspre\(\let\hspost\)%
   \let\hspre\(\let\hspost\)%
\newcommand{\plainhs}{\sethscode{plainhscode}}
\def\codeframewidth{\arrayrulewidth}
   \let\endoflinesave=\\
   \framedhslinecorrect\endoflinesave{.5ex}\hline
\newcommand{\framedhslinecorrect}[2]%
  {#1[#2]}
\def\column##1##2{}%
   \newcommand\>[1][]{}\newcommand\<[1][]{}\newcommand\\[1][]{}%
   \def\fromto##1##2##3{##3}%
\let\orighscode=\hscode
   \let\origendhscode=\endhscode
   \def\endhscode{\def\hscode{\endgroup\def\@currenvir{hscode}\\}\begingroup}
\def\hscode{\endgroup\def\@currenvir{hscode}}}%
   \global\let\hscode=\orighscode
   \global\let\endhscode=\origendhscode}%
\renewcommand\Varid[1]{\mathord{\textsf{#1}}}
\let\Conid\Varid
\newcommand\Keyword[1]{\textsf{\textbf{#1}}}
\renewcommand{\hscodestyle}{\smaller[.5]}
\newcommand{\indentcolumn}[1]{\aligncolumn{#1}{@{}>{\hspre \quad}l<{\hspost}@{}}}
\newcommand{\rightaligncolumn}[1]{\aligncolumn{#1}{@{}>{\hspre}r<{\hspost}@{}}}
\newcommand{\centeraligncolumn}[1]{\aligncolumn{#1}{@{}>{\hspre}c<{\hspost}@{}}}
\newcommand\MyConid[1]{\mathord{\textsf{\textbf{#1}}}}
\renewcommand\Keyword[1]{\textsf{\underline{#1}}}
\renewcommand\Varid[1]{\textsf{#1}}
\newcommand{\dummy}{\Varid{\_}}
\title{Distributed call-by-value machines}
\title{Distributed call-by-value machines \\ {\large Extended abstract}}
\newcommand\theauthor{Olle Fredriksson}
\newcommand\theaffiliation{University of Birmingham, UK}
  \author\theauthor
  \affil\theaffiliation
  \author{\IEEEauthorblockN\theauthor
          \IEEEauthorblockA\theaffiliation
  }
\begin{document}
\iffullversion
\date{}
\fi
\maketitle


\begin{abstract}
We present a new abstract machine, called \DCESHn{}, which describes the
execution of higher-order programs running in distributed architectures.
\DCESHn{} implements a generalised form of Remote Procedure Call that supports
calling higher-order functions across node boundaries, without sending actual
code.  Our starting point is a variant of the SECD machine that we call the CES
machine, which implements reduction for untyped call-by-value PCF.  We
successively add the features that we need for distributed execution and show
the correctness of each addition.  First we add heaps, forming the CESH
machine, which provides features necessary for more efficient execution, and
show that there is a bisimulation between the CES and the CESH machine.  Then
we construct a two-level operational semantics, where the high level is a
network of communicating machines, and the low level is given by local machine
transitions.  Using these networks, we arrive at our final system, the
distributed CESH machine (\DCESHn{}).  We show that there is a bisimulation
relation also between the CESH machine and the \DCESHn{} machine.
All the technical results have been formalised and proved correct
in Agda, and a prototype compiler has been developed.

\end{abstract}

\iffullversion
  \newpage
  \tableofcontents
  \newpage
\fi

\section{Seamless computing}
Suppose we need to program a system in which the function {\textsmaller[.5]{\ensuremath{\Conid{F}}}} runs on node {\textsmaller[.5]{\ensuremath{\Conid{A}}}}
in a distributed system, for instance because {\textsmaller[.5]{\ensuremath{\Conid{F}}}} depends on a local resource
residing on node {\textsmaller[.5]{\ensuremath{\Conid{A}}}}. Suppose further that we need to write a program
{\textsmaller[.5]{\ensuremath{\Conid{G}}}}, running on node {\textsmaller[.5]{\ensuremath{\Conid{B}}}}, that uses {\textsmaller[.5]{\ensuremath{\Conid{F}}}}.  How to achieve this depends on what
programming language or library for distributed computing we choose. One of the
most prominent ways to do it is using message passing, for instance with the
Message-Passing Interface \cite{gropp1999using}.  This involves writing {\textsmaller[.5]{\ensuremath{\Conid{F}}}} and
{\textsmaller[.5]{\ensuremath{\Conid{G}}}} as separate processes, and explicitly constructing messages that are sent
between them.

Suppose now that our specification changes: A part {\textsmaller[.5]{\ensuremath{\Conid{F'}}}} of {\textsmaller[.5]{\ensuremath{\Conid{F}}}} actually needs
to run on a \emph{third} node {\textsmaller[.5]{\ensuremath{\Conid{C}}}}. Using conventional languages or libraries,
this means that we have to rewrite big parts of the program since a substantial
part of it deals with the architecture-specific details of the problem.
Languages with support for Remote Procedure
Calls~\cite{DBLP:journals/tocs/BirrelN84} can help mitigate this, since
such a call has the same syntax as a local procedure call, but will
not work if {\textsmaller[.5]{\ensuremath{\Conid{F'}}}} is a higher-order function that is invoked with a function
as its argument.
In previous
papers~\cite{DBLP:conf/lics/FredrikssonG13,DBLP:conf/tgc/FredrikssonG12} we
suggest the following alternative way to express the two programs above:
\begin{hscode}\SaveRestoreHook
\column{B}{@{}>{\hspre}l<{\hspost}@{}}%
\column{17}{@{}>{\hspre}l<{\hspost}@{}}%
\column{28}{@{}>{\hspre}l<{\hspost}@{}}%
\column{E}{@{}>{\hspre}l<{\hspost}@{}}%
\>[B]{}\Keyword{let}\;\Conid{F}\;\mathrel{=}\;\{\mskip1.5mu \Varid{...}\;{}\<[17]%
\>[17]{}\Conid{F'}\;{}\<[28]%
\>[28]{}\Varid{...}\mskip1.5mu\}\;\MyConid{$\boldsymbol{@}$}\;\Conid{A}\;\Keyword{in}\;\{\mskip1.5mu \Conid{G}\mskip1.5mu\}\;\MyConid{$\boldsymbol{@}$}\;\Conid{B}{}\<[E]%
\\
\>[B]{}\Keyword{let}\;\Conid{F}\;\mathrel{=}\;\{\mskip1.5mu \Varid{...}\;\{\mskip1.5mu {}\<[17]%
\>[17]{}\Conid{F'}\mskip1.5mu\}\;\MyConid{$\boldsymbol{@}$}\;\Conid{C}\;{}\<[28]%
\>[28]{}\Varid{...}\mskip1.5mu\}\;\MyConid{$\boldsymbol{@}$}\;\Conid{A}\;\Keyword{in}\;\{\mskip1.5mu \Conid{G}\mskip1.5mu\}\;\MyConid{$\boldsymbol{@}$}\;\Conid{B}{}\<[E]%
\ColumnHook
\end{hscode}\resethooks
Here we write the whole program as if it was running on a single computer, and
use pragma-like annotations, written {\textsmaller[.5]{\ensuremath{\{\mskip1.5mu \Varid{x}\mskip1.5mu\}\;\MyConid{$\boldsymbol{@}$}\;\Conid{A}}}}, to indicate the node of
execution. We call such annotations \emph{locus specifiers}. The compiler uses
the annotations to automatically handle architecture-specific details like
communication. We call this \emph{seamless computing}.  A key feature is full
support for higher-order functions, even across node boundaries, without
sending actual code (in contrast to e.g. Remote
Evaluation~\cite{DBLP:journals/toplas/StamosG90}). This is important for full
generality, since it is not always the case that all code is meaningful on all
nodes (for example because of resource locality or platform differences).

Our previous work enables writing these programs but uses an execution model
based on game semantics that is vastly different from conventional compilation
techniques.  In this paper we instead develop an approach which is a conservative
extension of existing abstract machines. This means that the vast literature on
compiler optimisation more readily applies, and makes it possible to interface
with legacy code.  The key idea in this work, like in our previous work, is
that computational phenomena like function calls can be subsumed by simple
communication protocols. We assume that a run-time infrastructure can handle
system-level aspects associated with distribution such as failure, load
balancing, global reset, and so on.

\paragraph*{Technical outline}
To achieve the goal of an abstract machine for seamless computing, we make
gradual refinements to a machine, based on Landin's SECD
machine~\cite{Landin64}, that we call the \emph{CES machine}
(Sec.~\ref{section:CES}). The first change is to add heaps 
\iffullversion 
(Sec.~\ref{section:Heaps})
\fi
for dynamically allocating closures, forming the \emph{CESH machine}
(Sec.~\ref{section:CESH}), which provides features necessary for more efficient
execution, and we show the CES and CESH machines to be
\iffullversion
bisimilar (Sec.~\ref{section:CESH-bisim}).
\else
bisimilar.
\fi
We then add communication primitives (synchronous and asynchronous) by defining
a general form of networks of nodes that run an instance of an underlying abstract
machine (Sec.~\ref{section:Networks}).  Using these networks, we illustrate the
idea of subsuming function calls by communication protocols by constructing a
degenerate distributed machine, \DCESHi{} (Sec.~\ref{section:ADCESH}), that decomposes
some machine instructions into message passing, but only runs on one node.
Finally, the main contribution is the fully distributed CESH machine (\DCESHn{}, Sec.~\ref{section:DCESH}), which is shown to be bisimilar to the CESH
\iffullversion
machine (Sec.~\ref{section:DCESH-bisim}).
\else
machine.
\fi

\paragraph*{Formalisation in Agda}
The theorems that we present in this paper have been proved correct in
Agda~\cite{norell:thesis}, an interactive proof assistant and programming
language based on intuitionistic type theory.  The definitions and proofs in
this paper are intricate and often consist of many cases, so carrying them out manually would
be error-prone and arduous.  Agda has been a helpful tool in producing these
proofs, and also allows us to easily play with alternative definitions (even wrong
ones). To eliminate another source of error, we do not adopt the usual practice
of writing up the results in informal mathematics; in fact, the paper is built
from a collection of literate Agda source files and the code blocks come
directly from the
\ifnotfullversion
formalisation \cite{SourceCode}.
\else
formalisation.
\fi
Although our work is not
about Agda \emph{per se}, we believe that this presentation is beneficial also
to you, the reader, since you can trust that the propositions do not contain
mistakes.
Since Agda builds on a constructive foundation, it also means that the
formalisation of an abstract machine in Agda can act as a verified prototype
implementation.
\iffullversion
\paragraph*{Syntax and notation for code}
We assume a certain familiarity with the syntax of Agda, but since it is close
to that of several popular functional programming languages we believe that
this will not cause much difficulty for the audience.
We will use {\textsmaller[.5]{\ensuremath{\star}}} for the type of types.
We will use \emph{implicit parameters}, written e.g. {\textsmaller[.5]{\ensuremath{\Varid{f}\;\mathbin{:}\;\{\mskip1.5mu \Conid{A}\;\mathbin{:}\;\star\mskip1.5mu\}\;\Varid{→}\;\Varid{...}}}}
which means that {\textsmaller[.5]{\ensuremath{\Varid{f}}}} takes, as its first argument, a type {\textsmaller[.5]{\ensuremath{\Conid{A}}}} that does not
need to be explicitly spelled out when it can be inferred from other
arguments.
We will sometimes use the same name for constructors of different types, and
rely on context for disambiguation.
Constructors will be written in $\MyConid{bold face}$ and keywords $\Keyword{underlined}$.
We make liberal use of Agda's ability to define \emph{mixfix} operators like
{\textsmaller[.5]{\ensuremath{{\MyConid{if0}\dummy\MyConid{then}\dummy\MyConid{else}\dummy}}}} which is a constructor that accepts arguments
in the positions of the underscores, as in {\textsmaller[.5]{\ensuremath{\MyConid{if0}\;\Varid{b}\;\MyConid{then}\;\Varid{t}\;\MyConid{else}\;\Varid{f}}}}.
\fi

This paper is organised as
follows, where the arrows denote dependence, the lines with
$\sim$ symbols bisimulations, and the parenthesised numerals section numbers:
\begin{center}
\begin{tikzpicture}[node distance=2.0cm, on grid,every node/.style={scale=0.8}]
  \node[label=below:(\ref{section:CES})]   (CES) {CES};
  \node[label=below:(\ref{section:CESH})]  (CESH) [right=of CES] {CESH};
  \node[label=below:(\ref{section:DCESH})] (DCESH) [right=of CESH] {\DCESHn};
\iffullversion
  \node[label=below:(\ref{section:Heaps})] (Heaps) [below left=1.0cm and 1.00cm of CESH] {Heaps};
\else
  \node (Heaps) [below left=1.0cm and 1.00cm of CESH] {Heaps};
\fi
  \node[label=below:(\ref{section:Networks})] (Networks) [below right=1.0cm and 1.00cm of CESH] {Networks};
  \node[label=below:(\ref{section:ADCESH})] (ADCESH) [below=of CESH] {\DCESHi};
  \draw (CES) to node[above] {$\sim$}
                 node[below] {(\ref{section:CESH-bisim})}
                 (CESH);
  \draw (CESH) to node[above] {$\sim$}
                  node[below] {(\ref{section:DCESH-bisim})}
                  (DCESH);
  \draw[->] (CESH) to[bend right=10] (Heaps);
  \draw[->] (DCESH) to[bend left=10] (Heaps);
  \draw[->] (DCESH) to (Networks);
  \draw[->] (ADCESH) to (Heaps);
  \draw[->] (ADCESH) to (Networks);
\end{tikzpicture}
\end{center}

\section{The CES machine} \label{section:CES}
Our goal is to make a compiler for a programming language with
locus specifiers that is based on conventional compilation techniques.
A very common technique is the usage of \emph{abstract machines} to describe
the evaluation at a level low enough to be used as a basis for compilation.
The starting point for our work is based on a variation of Landin's
well-studied SECD machine~\cite{Landin64} called Modern SECD~\cite{ModernSECD}.
Modern SECD itself can be traced back to the SECD machine of
Henderson~\cite{DBLP:books/daglib/0068837}, in that both use bytecode for the
control component of the 
\iffullversion
machine
(and so use explicit return instructions);
\else
machine;
\fi
and
to the CEK machine of Felleisen~\cite{Felleisen:1986:CEK}, in that they both
place the continuations that originally resided in the dump
\iffullversion
(the D component)
\fi
directly on the
\iffullversion
stack (the S component),
\else
stack,
\fi
simplifying the machine configurations.

We choose to call this variation the \emph{CES machine} because of its three
configuration constituents. This machine is important for us since it will be
used as the \emph{specification} for the elaborated machines that we later
construct. We will show that their termination and divergence behaviour is the
same as that of CES by constructing bisimulation relations.

A CES configuration ({\textsmaller[.5]{\ensuremath{\Conid{Config}}}}) is a tuple consisting of a fragment of code
({\textsmaller[.5]{\ensuremath{\Conid{Code}}}}), an environment ({\textsmaller[.5]{\ensuremath{\Conid{Env}}}}), and a stack ({\textsmaller[.5]{\ensuremath{\Conid{Stack}}}}).  Evaluation begins
with an empty stack and environment, and then follows a \emph{stack
discipline}. Sub-terms push their result on the stack so that their super-terms
can consume them. When (and if) the evaluation terminates, the program's result
is the sole stack element.
\paragraph*{Source language}
We show how to compile untyped call-by-value PCF \cite{DBLP:journals/tcs/Plotkin77}.
The source language has constructors for lambda abstractions ({\textsmaller[.5]{\ensuremath{\MyConid{$\boldsymbol{\lambda}$}\;\Varid{t}}}}), applications ({\textsmaller[.5]{\ensuremath{\Varid{t}\;\MyConid{\$}\;\Varid{t'}}}}), and variables ({\textsmaller[.5]{\ensuremath{\MyConid{var}\;\Varid{n}}}}).  Our representation uses De Bruijn
indices \cite{DeBruijn}, so a variable is simply a natural number.
\ifnotfullversion
Additionally, we have natural number literals ({\textsmaller[.5]{\ensuremath{\MyConid{lit}\;\Varid{n}}}}), binary operations on
them ({\textsmaller[.5]{\ensuremath{\MyConid{op}\;\Varid{f}\;\Varid{t}\;\Varid{t'}}}}), and conditionals ({\textsmaller[.5]{\ensuremath{\MyConid{if0}\;\Varid{t}\;\MyConid{then}\;\Varid{t₀}\;\MyConid{else}\;\Varid{t₁}}}}).
\fi
\iffullversion
\savecolumns
\begin{hscode}\SaveRestoreHook
\column{B}{@{}>{\hspre}l<{\hspost}@{}}%
\column{3}{@{}>{\hspre}l<{\hspost}@{}}%
\column{9}{@{}>{\hspre}l<{\hspost}@{}}%
\column{E}{@{}>{\hspre}l<{\hspost}@{}}%
\>[B]{}\Keyword{data}\;\Conid{Term}\;\mathbin{:}\;\star\;\Keyword{where}{}\<[E]%
\\
\>[B]{}\hsindent{3}{}\<[3]%
\>[3]{}{\MyConid{$\boldsymbol{\lambda}$}\dummy}\;{}\<[9]%
\>[9]{}\mathbin{:}\;\Conid{Term}\;\Varid{→}\;\Conid{Term}{}\<[E]%
\\
\>[B]{}\hsindent{3}{}\<[3]%
\>[3]{}{\dummy\MyConid{\$}\dummy}\;{}\<[9]%
\>[9]{}\mathbin{:}\;(\Varid{t}\;\Varid{t'}\;\mathbin{:}\;\Conid{Term})\;\Varid{→}\;\Conid{Term}{}\<[E]%
\\
\>[B]{}\hsindent{3}{}\<[3]%
\>[3]{}\MyConid{var}\;{}\<[9]%
\>[9]{}\mathbin{:}\;\Conid{ℕ}\;\Varid{→}\;\Conid{Term}{}\<[E]%
\ColumnHook
\end{hscode}\resethooks
We also have natural number literals, binary operations on them, and
conditionals:
\restorecolumns
\begin{hscode}\SaveRestoreHook
\column{B}{@{}>{\hspre}l<{\hspost}@{}}%
\column{3}{@{}>{\hspre}l<{\hspost}@{}}%
\column{9}{@{}>{\hspre}l<{\hspost}@{}}%
\column{19}{@{}>{\hspre}l<{\hspost}@{}}%
\column{E}{@{}>{\hspre}l<{\hspost}@{}}%
\>[3]{}\MyConid{lit}\;{}\<[9]%
\>[9]{}\mathbin{:}\;\Conid{ℕ}\;\Varid{→}\;\Conid{Term}{}\<[E]%
\\
\>[3]{}\MyConid{op}\;{}\<[9]%
\>[9]{}\mathbin{:}\;(\Varid{f}\;\mathbin{:}\;\Conid{ℕ}\;\Varid{→}\;\Conid{ℕ}\;\Varid{→}\;\Conid{ℕ})\;(\Varid{t}\;\Varid{t'}\;\mathbin{:}\;\Conid{Term})\;\Varid{→}\;\Conid{Term}{}\<[E]%
\\
\>[3]{}{\MyConid{if0}\dummy\MyConid{then}\dummy\MyConid{else}\dummy}\;{}\<[19]%
\>[19]{}\mathbin{:}\;(\Varid{b}\;\Varid{t}\;\Varid{f}\;\mathbin{:}\;\Conid{Term})\;\Varid{→}\;\Conid{Term}{}\<[E]%
\ColumnHook
\end{hscode}\resethooks
\fi
\iffullversion
The language can be thought of as an intermediate representation for a
compiler which may expose a more sugary front-end language.
\fi
Because
\iffullversion
it
\else
the language
\fi
is untyped, we can express fixed-point combinators without adding additional
constructors.

\iffullversion
We define the bytecode, {\textsmaller[.5]{\ensuremath{\Conid{Code}}}}, that the machine will operate on. A
fragment of {\textsmaller[.5]{\ensuremath{\Conid{Code}}}} is a list of instructions, {\textsmaller[.5]{\ensuremath{\Conid{Instr}}}}, terminated by
{\textsmaller[.5]{\ensuremath{\MyConid{END}}}}, {\textsmaller[.5]{\ensuremath{\MyConid{RET}}}}, or a conditional {\textsmaller[.5]{\ensuremath{\MyConid{COND}}}} which has code fragments for its
two branches:
\savecolumns
\begin{hscode}\SaveRestoreHook
\column{B}{@{}>{\hspre}l<{\hspost}@{}}%
\column{3}{@{}>{\hspre}l<{\hspost}@{}}%
\column{5}{@{}>{\hspre}l<{\hspost}@{}}%
\column{14}{@{}>{\hspre}l<{\hspost}@{}}%
\column{22}{@{}>{\hspre}l<{\hspost}@{}}%
\column{E}{@{}>{\hspre}l<{\hspost}@{}}%
\>[B]{}\Keyword{mutual}{}\<[E]%
\\
\>[B]{}\hsindent{3}{}\<[3]%
\>[3]{}\Keyword{data}\;\Conid{Instr}\;\mathbin{:}\;\star\;\Keyword{where}{}\<[E]%
\\
\>[3]{}\hsindent{2}{}\<[5]%
\>[5]{}\MyConid{VAR}\;{}\<[14]%
\>[14]{}\mathbin{:}\;\Conid{ℕ}\;{}\<[22]%
\>[22]{}\Varid{→}\;\Conid{Instr}{}\<[E]%
\\
\>[3]{}\hsindent{2}{}\<[5]%
\>[5]{}\MyConid{CLOS}\;{}\<[14]%
\>[14]{}\mathbin{:}\;\Conid{Code}\;{}\<[22]%
\>[22]{}\Varid{→}\;\Conid{Instr}{}\<[E]%
\\
\>[3]{}\hsindent{2}{}\<[5]%
\>[5]{}\MyConid{APPL}\;{}\<[14]%
\>[14]{}\mathbin{:}\;\Conid{Instr}{}\<[E]%
\\
\>[3]{}\hsindent{2}{}\<[5]%
\>[5]{}\MyConid{LIT}\;{}\<[14]%
\>[14]{}\mathbin{:}\;\Conid{ℕ}\;{}\<[22]%
\>[22]{}\Varid{→}\;\Conid{Instr}{}\<[E]%
\\
\>[3]{}\hsindent{2}{}\<[5]%
\>[5]{}\MyConid{OP}\;{}\<[14]%
\>[14]{}\mathbin{:}\;(\Conid{ℕ}\;\Varid{→}\;\Conid{ℕ}\;\Varid{→}\;\Conid{ℕ})\;\Varid{→}\;\Conid{Instr}{}\<[E]%
\ColumnHook
\end{hscode}\resethooks
\removecodespace
\restorecolumns
\begin{hscode}\SaveRestoreHook
\column{B}{@{}>{\hspre}l<{\hspost}@{}}%
\column{3}{@{}>{\hspre}l<{\hspost}@{}}%
\column{5}{@{}>{\hspre}l<{\hspost}@{}}%
\column{14}{@{}>{\hspre}l<{\hspost}@{}}%
\column{23}{@{}>{\hspre}l<{\hspost}@{}}%
\column{E}{@{}>{\hspre}l<{\hspost}@{}}%
\>[3]{}\Keyword{data}\;\Conid{Code}\;\mathbin{:}\;\star\;\Keyword{where}{}\<[E]%
\\
\>[3]{}\hsindent{2}{}\<[5]%
\>[5]{}{\dummy\MyConid{;}\dummy}\;{}\<[14]%
\>[14]{}\mathbin{:}\;\Conid{Instr}\;{}\<[23]%
\>[23]{}\Varid{→}\;\Conid{Code}\;\Varid{→}\;\Conid{Code}{}\<[E]%
\\
\>[3]{}\hsindent{2}{}\<[5]%
\>[5]{}\MyConid{COND}\;{}\<[14]%
\>[14]{}\mathbin{:}\;\Conid{Code}\;{}\<[23]%
\>[23]{}\Varid{→}\;\Conid{Code}\;\Varid{→}\;\Conid{Code}{}\<[E]%
\\
\>[3]{}\hsindent{2}{}\<[5]%
\>[5]{}\MyConid{END}\;{}\<[14]%
\>[14]{}\mathbin{:}\;\Conid{Code}{}\<[E]%
\\
\>[3]{}\hsindent{2}{}\<[5]%
\>[5]{}\MyConid{RET}\;{}\<[14]%
\>[14]{}\mathbin{:}\;\Conid{Code}{}\<[E]%
\ColumnHook
\end{hscode}\resethooks
\else
The machine operates on a bytecode and does not directly interpret the
source terms, so the terms need to be compiled before they can be
executed.
\fi
The main work of compilation is done by the function
{\textsmaller[.5]{\ensuremath{\Varid{compile'}}}}, which takes a term {\textsmaller[.5]{\ensuremath{\Varid{t}}}} to be compiled and a fragment of
code {\textsmaller[.5]{\ensuremath{\Varid{c}}}} that is placed after the instructions that the compilation
emits.
\ifnotfullversion
The bold upper-case names ({\textsmaller[.5]{\ensuremath{\MyConid{CLOS}}}}, {\textsmaller[.5]{\ensuremath{\MyConid{VAR}}}}, and so on) are the
bytecode instructions, which are sequenced using {\textsmaller[.5]{\ensuremath{{\dummy\MyConid{;}\dummy}}}}:
\fi
\begin{hscode}\SaveRestoreHook
\column{B}{@{}>{\hspre}l<{\hspost}@{}}%
\column{3}{@{}>{\hspre}l<{\hspost}@{}}%
\column{23}{@{}>{\hspre}l<{\hspost}@{}}%
\column{26}{@{}>{\hspre}l<{\hspost}@{}}%
\column{38}{@{}>{\hspre}l<{\hspost}@{}}%
\column{E}{@{}>{\hspre}l<{\hspost}@{}}%
\>[B]{}\Varid{compile'}\;\mathbin{:}\;\Conid{Term}\;\Varid{→}\;\Conid{Code}\;\Varid{→}\;\Conid{Code}{}\<[E]%
\\
\>[B]{}\Varid{compile'}\;(\MyConid{$\boldsymbol{\lambda}$}\;\Varid{t})\;{}\<[23]%
\>[23]{}\Varid{c}\;{}\<[26]%
\>[26]{}\mathrel{=}\;\MyConid{CLOS}\;(\Varid{compile'}\;\Varid{t}\;\MyConid{RET})\;\MyConid{;}\;\Varid{c}{}\<[E]%
\\
\>[B]{}\Varid{compile'}\;(\Varid{t}\;\MyConid{\$}\;\Varid{t'})\;{}\<[23]%
\>[23]{}\Varid{c}\;{}\<[26]%
\>[26]{}\mathrel{=}\;\Varid{compile'}\;{}\<[38]%
\>[38]{}\Varid{t}\;(\Varid{compile'}\;\Varid{t'}\;(\MyConid{APPL}\;\MyConid{;}\;\Varid{c})){}\<[E]%
\\
\>[B]{}\Varid{compile'}\;(\MyConid{var}\;\Varid{x})\;{}\<[23]%
\>[23]{}\Varid{c}\;{}\<[26]%
\>[26]{}\mathrel{=}\;\MyConid{VAR}\;\Varid{x}\;\MyConid{;}\;\Varid{c}{}\<[E]%
\\
\>[B]{}\Varid{compile'}\;(\MyConid{lit}\;\Varid{n})\;{}\<[23]%
\>[23]{}\Varid{c}\;{}\<[26]%
\>[26]{}\mathrel{=}\;\MyConid{LIT}\;\Varid{n}\;\MyConid{;}\;\Varid{c}{}\<[E]%
\\
\>[B]{}\Varid{compile'}\;(\MyConid{op}\;\Varid{f}\;\Varid{t}\;\Varid{t'})\;{}\<[23]%
\>[23]{}\Varid{c}\;{}\<[26]%
\>[26]{}\mathrel{=}\;\Varid{compile'}\;{}\<[38]%
\>[38]{}\Varid{t'}\;(\Varid{compile'}\;\Varid{t}\;(\MyConid{OP}\;\Varid{f}\;\MyConid{;}\;\Varid{c})){}\<[E]%
\\
\>[B]{}\Varid{compile'}\;(\MyConid{if0}\;\Varid{b}\;\MyConid{then}\;\Varid{t}\;\MyConid{else}\;\Varid{f})\;\Varid{c}\;\mathrel{=}\;{}\<[E]%
\\
\>[B]{}\hsindent{3}{}\<[3]%
\>[3]{}\Varid{compile'}\;\Varid{b}\;(\MyConid{COND}\;(\Varid{compile'}\;\Varid{t}\;\Varid{c})\;(\Varid{compile'}\;\Varid{f}\;\Varid{c})){}\<[E]%
\ColumnHook
\end{hscode}\resethooks
It should be apparent that the instructions correspond closely
to the constructs of the source language but are sequentialised.
Compilation of a term is simply a call to {\textsmaller[.5]{\ensuremath{\Varid{compile'}}}}, terminated by
{\textsmaller[.5]{\ensuremath{\MyConid{END}}}}:
\iffullversion
\begin{hscode}\SaveRestoreHook
\column{B}{@{}>{\hspre}l<{\hspost}@{}}%
\column{E}{@{}>{\hspre}l<{\hspost}@{}}%
\>[B]{}\Varid{compile}\;\mathbin{:}\;\Conid{Term}\;\Varid{→}\;\Conid{Code}{}\<[E]%
\\
\>[B]{}\Varid{compile}\;\Varid{t}\;\mathrel{=}\;\Varid{compile'}\;\Varid{t}\;\MyConid{END}{}\<[E]%
\ColumnHook
\end{hscode}\resethooks
\else
{\textsmaller[.5]{\ensuremath{\Varid{compile}\;\Varid{t}\;\mathrel{=}\;\Varid{compile'}\;\Varid{t}\;\MyConid{END}}}}.
\fi
\begin{example}[{\textsmaller[.5]{\ensuremath{\Varid{codeExample}}}}]
The term {\textsmaller[.5]{\ensuremath{(\Varid{λx.}\;\Varid{x})\;(\Varid{λx}\;\Varid{y.}\;\Varid{x})}}} is compiled as follows:
\begin{hscode}\SaveRestoreHook
\column{B}{@{}>{\hspre}l<{\hspost}@{}}%
\column{3}{@{}>{\hspre}l<{\hspost}@{}}%
\column{E}{@{}>{\hspre}l<{\hspost}@{}}%
\>[B]{}\Varid{compile}\;((\MyConid{$\boldsymbol{\lambda}$}\;\MyConid{var}\;\Varid{0})\;\MyConid{\$}\;(\MyConid{$\boldsymbol{\lambda}$}\;(\MyConid{$\boldsymbol{\lambda}$}\;\MyConid{var}\;\Varid{1})))\;\mathrel{=}\;{}\<[E]%
\\
\>[B]{}\hsindent{3}{}\<[3]%
\>[3]{}\MyConid{CLOS}\;(\MyConid{VAR}\;\Varid{0}\;\MyConid{;}\;\MyConid{RET})\;\MyConid{;}\;{}\<[E]%
\\
\>[B]{}\hsindent{3}{}\<[3]%
\>[3]{}\MyConid{CLOS}\;(\MyConid{CLOS}\;(\MyConid{VAR}\;\Varid{1}\;\MyConid{;}\;\MyConid{RET})\;\MyConid{;}\;\MyConid{RET})\;\MyConid{;}\;\MyConid{APPL}\;\MyConid{;}\;\MyConid{END}{}\<[E]%
\ColumnHook
\end{hscode}\resethooks
Compilation first emits two {\textsmaller[.5]{\ensuremath{\MyConid{CLOS}}}} instructions containing the code
of the function and its argument. The {\textsmaller[.5]{\ensuremath{\MyConid{APPL}}}} instruction is then used to
perform the actual application.
\end{example}

\ifnotfullversion
Environments ({\textsmaller[.5]{\ensuremath{\Conid{Env}}}}) are lists of values ({\textsmaller[.5]{\ensuremath{\Conid{List}\;\Conid{Value}}}}).  A value is
either a natural number ({\textsmaller[.5]{\ensuremath{\MyConid{nat}\;\Varid{n}}}}) or a closure ({\textsmaller[.5]{\ensuremath{\MyConid{clos}\;\Varid{cl}}}}).  A closure ({\textsmaller[.5]{\ensuremath{\Conid{Closure}}}})
is a fragment of code paired with an environment ({\textsmaller[.5]{\ensuremath{\Conid{Code}\;\Varid{×}\;\Conid{Env}}}}).
\else
We mutually define values, closures and environments. A closure is a
code fragment paired with an environment. A value is either a natural
number literal or a closure.  Since we are working in a call-by-value
setting an environment is a list of values.
\begin{hscode}\SaveRestoreHook
\column{B}{@{}>{\hspre}l<{\hspost}@{}}%
\column{3}{@{}>{\hspre}l<{\hspost}@{}}%
\column{5}{@{}>{\hspre}l<{\hspost}@{}}%
\column{11}{@{}>{\hspre}l<{\hspost}@{}}%
\column{22}{@{}>{\hspre}l<{\hspost}@{}}%
\column{E}{@{}>{\hspre}l<{\hspost}@{}}%
\>[B]{}\Keyword{mutual}{}\<[E]%
\\
\>[B]{}\hsindent{3}{}\<[3]%
\>[3]{}\Conid{Closure}\;\mathrel{=}\;\Conid{Code}\;\Varid{×}\;\Conid{Env}{}\<[E]%
\\
\>[B]{}\hsindent{3}{}\<[3]%
\>[3]{}\Keyword{data}\;\Conid{Value}\;\mathbin{:}\;\star\;\Keyword{where}{}\<[E]%
\\
\>[3]{}\hsindent{2}{}\<[5]%
\>[5]{}\MyConid{nat}\;{}\<[11]%
\>[11]{}\mathbin{:}\;\Conid{ℕ}\;{}\<[22]%
\>[22]{}\Varid{→}\;\Conid{Value}{}\<[E]%
\\
\>[3]{}\hsindent{2}{}\<[5]%
\>[5]{}\MyConid{clos}\;{}\<[11]%
\>[11]{}\mathbin{:}\;\Conid{Closure}\;{}\<[22]%
\>[22]{}\Varid{→}\;\Conid{Value}{}\<[E]%
\\
\>[B]{}\hsindent{3}{}\<[3]%
\>[3]{}\Conid{Env}\;\mathrel{=}\;\Conid{List}\;\Conid{Value}{}\<[E]%
\ColumnHook
\end{hscode}\resethooks
\fi
\ifnotfullversion
Stacks ({\textsmaller[.5]{\ensuremath{\Conid{Stack}}}}) are lists of stack elements ({\textsmaller[.5]{\ensuremath{\Conid{List}\;\Conid{StackElem}}}}), where
stack elements are either values ({\textsmaller[.5]{\ensuremath{\MyConid{val}\;\Varid{v}}}}) or continuations ({\textsmaller[.5]{\ensuremath{\MyConid{cont}\;\Varid{cl}}}}), represented by closures.
\else
A stack is a list of stack elements, defined to be either values
or continuations (represented by closures):
\begin{hscode}\SaveRestoreHook
\column{B}{@{}>{\hspre}l<{\hspost}@{}}%
\column{3}{@{}>{\hspre}l<{\hspost}@{}}%
\column{9}{@{}>{\hspre}l<{\hspost}@{}}%
\column{20}{@{}>{\hspre}l<{\hspost}@{}}%
\column{E}{@{}>{\hspre}l<{\hspost}@{}}%
\>[B]{}\Keyword{data}\;\Conid{StackElem}\;\mathbin{:}\;\star\;\Keyword{where}{}\<[E]%
\\
\>[B]{}\hsindent{3}{}\<[3]%
\>[3]{}\MyConid{val}\;{}\<[9]%
\>[9]{}\mathbin{:}\;\Conid{Value}\;{}\<[20]%
\>[20]{}\Varid{→}\;\Conid{StackElem}{}\<[E]%
\\
\>[B]{}\hsindent{3}{}\<[3]%
\>[3]{}\MyConid{cont}\;{}\<[9]%
\>[9]{}\mathbin{:}\;\Conid{Closure}\;{}\<[20]%
\>[20]{}\Varid{→}\;\Conid{StackElem}{}\<[E]%
\\
\>[B]{}\Conid{Stack}\;\mathrel{=}\;\Conid{List}\;\Conid{StackElem}{}\<[E]%
\ColumnHook
\end{hscode}\resethooks
A configuration is, as stated, a tuple consisting of a code fragment, an
environment and a stack:
\begin{hscode}\SaveRestoreHook
\column{B}{@{}>{\hspre}l<{\hspost}@{}}%
\column{E}{@{}>{\hspre}l<{\hspost}@{}}%
\>[B]{}\Conid{Config}\;\mathrel{=}\;\Conid{Code}\;\Varid{×}\;\Conid{Env}\;\Varid{×}\;\Conid{Stack}{}\<[E]%
\ColumnHook
\end{hscode}\resethooks
\fi

\iffullversion
\begin{sidewaysfigure}
\else
\begin{figure*}[!t]
\fi
\centering
\begin{varwidth}[t]{\textwidth}
\rightaligncolumn{36}
\rightaligncolumn{54}
\indentcolumn{3}
\iffullversion
\begin{hscode}\SaveRestoreHook
\column{B}{@{}>{\hspre}l<{\hspost}@{}}%
\column{E}{@{}>{\hspre}l<{\hspost}@{}}%
\>[B]{}\Keyword{data}\;{\dummy\xrightarrow[\Varid{CES}]{}\dummy}\;\mathbin{:}\;\Conid{Rel}\;\Conid{Config}\;\Conid{Config}\;\Keyword{where}{}\<[E]%
\ColumnHook
\end{hscode}\resethooks
\removecodespace
\indentcolumn{3}
\fi
\begin{hscode}\SaveRestoreHook
\column{B}{@{}>{\hspre}l<{\hspost}@{}}%
\column{3}{@{}>{\hspre}l<{\hspost}@{}}%
\column{13}{@{}>{\hspre}l<{\hspost}@{}}%
\column{36}{@{}>{\hspre}l<{\hspost}@{}}%
\column{54}{@{}>{\hspre}l<{\hspost}@{}}%
\column{92}{@{}>{\hspre}l<{\hspost}@{}}%
\column{103}{@{}>{\hspre}l<{\hspost}@{}}%
\column{105}{@{}>{\hspre}l<{\hspost}@{}}%
\column{E}{@{}>{\hspre}l<{\hspost}@{}}%
\>[3]{}\MyConid{VAR}\;{}\<[13]%
\>[13]{}\mathbin{:}\;\Varid{∀}\;\{\mskip1.5mu \Varid{n}\;\Varid{c}\;\Varid{e}\;\Varid{s}\;\Varid{v}\mskip1.5mu\}\;\Varid{→}\;\Varid{lookup}\;\Varid{n}\;\Varid{e}\;\Varid{≡}\;\MyConid{just}\;\Varid{v}\;\Varid{→}\;{}\<[54]%
\>[54]{}(\MyConid{VAR}\;\Varid{n}\;\MyConid{;}\;\Varid{c},\Varid{e},\Varid{s})\;{}\<[92]%
\>[92]{}\xrightarrow[\Varid{CES}]{}\;(\Varid{c},{}\<[105]%
\>[105]{}\Varid{e},\MyConid{val}\;\Varid{v}\;\Varid{∷}\;\Varid{s}){}\<[E]%
\\
\>[3]{}\MyConid{CLOS}\;{}\<[13]%
\>[13]{}\mathbin{:}\;\Varid{∀}\;\{\mskip1.5mu \Varid{c'}\;\Varid{c}\;\Varid{e}\;\Varid{s}\mskip1.5mu\}\;\Varid{→}\;{}\<[36]%
\>[36]{}(\MyConid{CLOS}\;\Varid{c'}\;\MyConid{;}\;\Varid{c},\Varid{e},\Varid{s})\;{}\<[92]%
\>[92]{}\xrightarrow[\Varid{CES}]{}\;(\Varid{c},{}\<[103]%
\>[103]{}\Varid{e},\MyConid{val}\;(\MyConid{clos}\;(\Varid{c'},\Varid{e}))\;\Varid{∷}\;\Varid{s}){}\<[E]%
\\
\>[3]{}\MyConid{APPL}\;{}\<[13]%
\>[13]{}\mathbin{:}\;\Varid{∀}\;\{\mskip1.5mu \Varid{c}\;\Varid{e}\;\Varid{v}\;\Varid{c'}\;\Varid{e'}\;\Varid{s}\mskip1.5mu\}\;\Varid{→}\;{}\<[36]%
\>[36]{}(\MyConid{APPL}\;\MyConid{;}\;\Varid{c},\Varid{e},\MyConid{val}\;\Varid{v}\;\Varid{∷}\;\MyConid{val}\;(\MyConid{clos}\;(\Varid{c'},\Varid{e'}))\;\Varid{∷}\;\Varid{s})\;{}\<[92]%
\>[92]{}\xrightarrow[\Varid{CES}]{}\;(\Varid{c'},\Varid{v}\;\Varid{∷}\;\Varid{e'},\MyConid{cont}\;(\Varid{c},\Varid{e})\;\Varid{∷}\;\Varid{s}){}\<[E]%
\\
\>[3]{}\MyConid{RET}\;{}\<[13]%
\>[13]{}\mathbin{:}\;\Varid{∀}\;\{\mskip1.5mu \Varid{e}\;\Varid{v}\;\Varid{c}\;\Varid{e'}\;\Varid{s}\mskip1.5mu\}\;\Varid{→}\;{}\<[36]%
\>[36]{}(\MyConid{RET},\Varid{e},\MyConid{val}\;\Varid{v}\;\Varid{∷}\;\MyConid{cont}\;(\Varid{c},\Varid{e'})\;\Varid{∷}\;\Varid{s})\;{}\<[92]%
\>[92]{}\xrightarrow[\Varid{CES}]{}\;(\Varid{c},\Varid{e'},\MyConid{val}\;\Varid{v}\;\Varid{∷}\;\Varid{s}){}\<[E]%
\\
\>[3]{}\MyConid{LIT}\;{}\<[13]%
\>[13]{}\mathbin{:}\;\Varid{∀}\;\{\mskip1.5mu \Varid{n}\;\Varid{c}\;\Varid{e}\;\Varid{s}\mskip1.5mu\}\;\Varid{→}\;{}\<[36]%
\>[36]{}(\MyConid{LIT}\;\Varid{n}\;\MyConid{;}\;\Varid{c},\Varid{e},\Varid{s})\;{}\<[92]%
\>[92]{}\xrightarrow[\Varid{CES}]{}\;(\Varid{c},\Varid{e},\MyConid{val}\;(\MyConid{nat}\;\Varid{n})\;\Varid{∷}\;\Varid{s}){}\<[E]%
\\
\>[3]{}\MyConid{OP}\;{}\<[13]%
\>[13]{}\mathbin{:}\;\Varid{∀}\;\{\mskip1.5mu \Varid{f}\;\Varid{c}\;\Varid{e}\;\Varid{n₁}\;\Varid{n₂}\;\Varid{s}\mskip1.5mu\}\;\Varid{→}\;{}\<[36]%
\>[36]{}(\MyConid{OP}\;\Varid{f}\;\MyConid{;}\;\Varid{c},\Varid{e},\MyConid{val}\;(\MyConid{nat}\;\Varid{n₁})\;\Varid{∷}\;\MyConid{val}\;(\MyConid{nat}\;\Varid{n₂})\;\Varid{∷}\;\Varid{s})\;{}\<[92]%
\>[92]{}\xrightarrow[\Varid{CES}]{}\;(\Varid{c},\Varid{e},\MyConid{val}\;(\MyConid{nat}\;(\Varid{f}\;\Varid{n₁}\;\Varid{n₂}))\;\Varid{∷}\;\Varid{s}){}\<[E]%
\\
\>[3]{}\MyConid{COND-0}\;{}\<[13]%
\>[13]{}\mathbin{:}\;\Varid{∀}\;\{\mskip1.5mu \Varid{c}\;\Varid{c'}\;\Varid{e}\;\Varid{s}\mskip1.5mu\}\;\Varid{→}\;{}\<[36]%
\>[36]{}(\MyConid{COND}\;\Varid{c}\;\Varid{c'},\Varid{e},\MyConid{val}\;(\MyConid{nat}\;\Varid{0})\;\Varid{∷}\;\Varid{s})\;{}\<[92]%
\>[92]{}\xrightarrow[\Varid{CES}]{}\;(\Varid{c},\Varid{e},\Varid{s}){}\<[E]%
\\
\>[3]{}\MyConid{COND-1+n}\;{}\<[13]%
\>[13]{}\mathbin{:}\;\Varid{∀}\;\{\mskip1.5mu \Varid{c}\;\Varid{c'}\;\Varid{e}\;\Varid{n}\;\Varid{s}\mskip1.5mu\}\;\Varid{→}\;{}\<[36]%
\>[36]{}(\MyConid{COND}\;\Varid{c}\;\Varid{c'},\Varid{e},\MyConid{val}\;(\MyConid{nat}\;(1+\;\Varid{n}))\;\Varid{∷}\;\Varid{s})\;{}\<[92]%
\>[92]{}\xrightarrow[\Varid{CES}]{}\;(\Varid{c'},\Varid{e},\Varid{s}){}\<[E]%
\ColumnHook
\end{hscode}\resethooks
\end{varwidth}
\caption{The definition of the transition relation of the CES machine.}
\label{figure:CES-step}
\iffullversion
\end{sidewaysfigure}
\else
\end{figure*}
\fi

Fig.~\ref{figure:CES-step} shows the definition of the transition
relation for configurations of the CES machine. The Agda syntax may
require some further explanation: The instructions' constructor names
are overloaded to also act as constructors for the relation; their
usage will be disambiguated by context. We use \emph{implicit
arguments}, written in curly braces, for arguments that can automatically
be inferred and do not need to be spelled out explicitly. The type of
propositional equality is written {\textsmaller[.5]{\ensuremath{\Varid{\char95 ≡\char95 }}}}.

The stack discipline becomes apparent in the definition of the
transition relation.  When e.g. {\textsmaller[.5]{\ensuremath{\MyConid{VAR}}}} is executed, the CES machine looks up
the value of the variable in the environment and pushes it on the
stack.  A somewhat subtle part of the relation is the interplay
between the {\textsmaller[.5]{\ensuremath{\MyConid{APPL}}}} instruction and the {\textsmaller[.5]{\ensuremath{\MyConid{RET}}}} instruction. When
performing an application, two values are required on the stack, one of
which has to be a closure.  The machine enters the closure, adding the
value to the environment, and pushes a return continuation on the
stack.  Looking at the {\textsmaller[.5]{\ensuremath{\Varid{compile}}}} function, we see that the code inside
a closure will be terminated by a {\textsmaller[.5]{\ensuremath{\MyConid{RET}}}} instruction, so once the
machine has finished executing the closure (and thus produced a value
on the stack), that value is returned to the continuation.

\begin{example} \label{example:CES}
We trace the execution of {\textsmaller[.5]{\ensuremath{\Varid{codeExample}}}} defined above, which exemplifies
how returning from an application works. Here we write {\textsmaller[.5]{\ensuremath{\Varid{a}\;\xrightarrow[\Varid{CES}]{}\Varid{⟨}\;\Varid{x}\;\Varid{⟩}\;\Varid{b}}}}
meaning that the machine uses rule {\textsmaller[.5]{\ensuremath{\Varid{x}}}} to transition from {\textsmaller[.5]{\ensuremath{\Varid{a}}}} to {\textsmaller[.5]{\ensuremath{\Varid{b}}}}.
\begin{hscode}\SaveRestoreHook
\column{B}{@{}>{\hspre}l<{\hspost}@{}}%
\column{6}{@{}>{\hspre}l<{\hspost}@{}}%
\column{11}{@{}>{\hspre}l<{\hspost}@{}}%
\column{16}{@{}>{\hspre}l<{\hspost}@{}}%
\column{36}{@{}>{\hspre}l<{\hspost}@{}}%
\column{41}{@{}>{\hspre}l<{\hspost}@{}}%
\column{E}{@{}>{\hspre}l<{\hspost}@{}}%
\>[B]{}\Keyword{let}\;{}\<[6]%
\>[6]{}\Varid{c₁}\;{}\<[11]%
\>[11]{}\mathrel{=}\;\MyConid{VAR}\;\Varid{0}\;\MyConid{;}\;\MyConid{RET}{}\<[E]%
\\
\>[6]{}\Varid{c₂}\;{}\<[11]%
\>[11]{}\mathrel{=}\;\MyConid{CLOS}\;(\MyConid{VAR}\;\Varid{1}\;\MyConid{;}\;\MyConid{RET})\;\MyConid{;}\;\MyConid{RET}{}\<[E]%
\\
\>[6]{}\Varid{cl₁}\;{}\<[11]%
\>[11]{}\mathrel{=}\;\MyConid{val}\;(\MyConid{clos}\;(\Varid{c₁},[\mskip1.5mu \mskip1.5mu]));{}\<[36]%
\>[36]{}\Varid{cl₂}\;{}\<[41]%
\>[41]{}\mathrel{=}\;\MyConid{val}\;(\MyConid{clos}\;(\Varid{c₂},[\mskip1.5mu \mskip1.5mu])){}\<[E]%
\\
\>[B]{}\Keyword{in}\;(\MyConid{CLOS}\;\Varid{c₁}\;\MyConid{;}\;\MyConid{CLOS}\;\Varid{c₂}\;\MyConid{;}\;\MyConid{APPL}\;\MyConid{;}\;\MyConid{END},[\mskip1.5mu \mskip1.5mu],[\mskip1.5mu \mskip1.5mu]){}\<[E]%
\\
\>[B]{}\xrightarrow[\Varid{CES}]{}\Varid{⟨}\;\MyConid{CLOS}\;\Varid{⟩}\;{}\<[16]%
\>[16]{}(\MyConid{CLOS}\;\Varid{c₂}\;\MyConid{;}\;\MyConid{APPL}\;\MyConid{;}\;\MyConid{END},[\mskip1.5mu \mskip1.5mu],[\mskip1.5mu \Varid{cl₁}\mskip1.5mu]){}\<[E]%
\\
\>[B]{}\xrightarrow[\Varid{CES}]{}\Varid{⟨}\;\MyConid{CLOS}\;\Varid{⟩}\;{}\<[16]%
\>[16]{}(\MyConid{APPL}\;\MyConid{;}\;\MyConid{END},[\mskip1.5mu \mskip1.5mu],[\mskip1.5mu \Varid{cl₂},\Varid{cl₁}\mskip1.5mu]){}\<[E]%
\\
\>[B]{}\xrightarrow[\Varid{CES}]{}\Varid{⟨}\;\MyConid{APPL}\;\Varid{⟩}\;{}\<[16]%
\>[16]{}(\MyConid{VAR}\;\Varid{0}\;\MyConid{;}\;\MyConid{RET},[\mskip1.5mu \Varid{cl₂}\mskip1.5mu],[\mskip1.5mu \MyConid{cont}\;(\MyConid{END},[\mskip1.5mu \mskip1.5mu])\mskip1.5mu]){}\<[E]%
\\
\>[B]{}\xrightarrow[\Varid{CES}]{}\Varid{⟨}\;\MyConid{VAR}\;\Varid{refl}\;\Varid{⟩}\;{}\<[16]%
\>[16]{}(\MyConid{RET},[\mskip1.5mu \Varid{cl₂}\mskip1.5mu],[\mskip1.5mu \Varid{cl₂},\MyConid{cont}\;(\MyConid{END},[\mskip1.5mu \mskip1.5mu])\mskip1.5mu]){}\<[E]%
\\
\>[B]{}\xrightarrow[\Varid{CES}]{}\Varid{⟨}\;\MyConid{RET}\;\Varid{⟩}\;{}\<[16]%
\>[16]{}(\MyConid{END},[\mskip1.5mu \mskip1.5mu],[\mskip1.5mu \Varid{cl₂}\mskip1.5mu]){}\<[E]%
\ColumnHook
\end{hscode}\resethooks
The final result is therefore the second closure, {\textsmaller[.5]{\ensuremath{\Varid{cl₂}}}}.
\end{example}

\begin{lemma}[{\textsmaller[.5]{\ensuremath{\Varid{determinism}_{\Varid{CES}}}}}] {\textsmaller[.5]{\ensuremath{\xrightarrow[\Varid{CES}]{}}}} is deterministic.
\iffullversion
In the formalisation this means that we can construct the following term:
\begin{hscode}\SaveRestoreHook
\column{B}{@{}>{\hspre}l<{\hspost}@{}}%
\column{E}{@{}>{\hspre}l<{\hspost}@{}}%
\>[B]{}\Varid{determinism}_{\Varid{CES}}\;\mathbin{:}\;{\dummy\xrightarrow[\Varid{CES}]{}\dummy}\;\Varid{is-deterministic}{}\<[E]%
\ColumnHook
\end{hscode}\resethooks
where the type {\textsmaller[.5]{\ensuremath{\Varid{\char95 is-deterministic}}}} is defined as follows:
\begin{hscode}\SaveRestoreHook
\column{B}{@{}>{\hspre}l<{\hspost}@{}}%
\column{3}{@{}>{\hspre}l<{\hspost}@{}}%
\column{E}{@{}>{\hspre}l<{\hspost}@{}}%
\>[B]{}\Varid{\char95 is-deterministic}\;\mathbin{:}\;\{\mskip1.5mu \Conid{A}\;\Conid{B}\;\mathbin{:}\;\star\mskip1.5mu\}\;\Varid{→}\;\Conid{Rel}\;\Conid{A}\;\Conid{B}\;\Varid{→}\;\star{}\<[E]%
\\
\>[B]{}\Conid{R}\;\Varid{is-deterministic}\;\mathrel{=}\;{}\<[E]%
\\
\>[B]{}\hsindent{3}{}\<[3]%
\>[3]{}\Varid{∀}\;\{\mskip1.5mu \Varid{a}\;\Varid{b}\;\Varid{b'}\mskip1.5mu\}\;\Varid{→}\;\Conid{R}\;\Varid{a}\;\Varid{b}\;\Varid{→}\;\Conid{R}\;\Varid{a}\;\Varid{b'}\;\Varid{→}\;\Varid{b}\;\Varid{≡}\;\Varid{b'}{}\<[E]%
\ColumnHook
\end{hscode}\resethooks
\fi
\end{lemma}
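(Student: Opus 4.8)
The plan is to prove determinism by a direct case analysis on the two given derivations, exploiting the fact that the CES transition relation of Fig.~\ref{figure:CES-step} is \emph{syntax-directed}: which rule fires is determined by the shape of the source configuration, and the target configuration is then a function of the source. So the statement should follow with essentially no creative input, only bookkeeping.

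Concretely, suppose we are given $a \xrightarrow[\Varid{CES}]{} b$ and $a \xrightarrow[\Varid{CES}]{} b'$ with $a = (c,e,s)$. First I would pattern-match on both derivations simultaneously. The head constructor of the code $c$ --- i.e.\ which of $\MyConid{VAR}$, $\MyConid{CLOS}$, $\MyConid{APPL}$, $\MyConid{LIT}$, $\MyConid{OP}$ sits in front of the sequencing constructor $\MyConid{;}$, or whether $c$ is itself $\MyConid{RET}$ or a $\MyConid{COND}$ --- together with the shape of the stack $s$, pins down the applicable rule. Hence every off-diagonal pair of rules is contradictory: a derivation by one rule forces a code/stack shape that does not unify with the one forced by a different rule (e.g.\ $\MyConid{APPL}$ forces $c$ to begin with $\MyConid{APPL}$ and $s$ to carry a closure underneath its top element, whereas $\MyConid{RET}$ forces $c = \MyConid{RET}$), so Agda dismisses these with an absurd pattern. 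This leaves the cases where both derivations use the same rule.

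For the rules $\MyConid{CLOS}$, $\MyConid{APPL}$, $\MyConid{RET}$, $\MyConid{LIT}$, $\MyConid{OP}$, $\MyConid{COND-0}$ and $\MyConid{COND-1+n}$, the right-hand side of the rule is literally assembled from the components matched on the left, so $b$ and $b'$ are definitionally equal and the goal closes by $\Varid{refl}$. The $\MyConid{VAR}$ rule is the only one with a side condition: the two derivations supply proofs $\Varid{lookup}\ n\ e \equiv \MyConid{just}\ v$ and $\Varid{lookup}\ n\ e \equiv \MyConid{just}\ v'$; composing them gives $\MyConid{just}\ v \equiv \MyConid{just}\ v'$, and injectivity of the $\MyConid{just}$ constructor yields $v \equiv v'$, whence $b \equiv b'$. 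Finally, $\MyConid{COND-0}$ and $\MyConid{COND-1+n}$ form the one pair of distinct rules sharing a head constructor ($\MyConid{COND}$) that the head alone does not separate; they are nonetheless mutually exclusive, since one demands the stack top be $\MyConid{val}\,(\MyConid{nat}\,0)$ and the other $\MyConid{val}\,(\MyConid{nat}\,(1+n))$, and $0$ does not unify with $1+n$, so this pair is contradictory as well.

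There is no genuine mathematical difficulty here; the only mildly delicate part is the bureaucracy of enumerating the (mostly impossible) cases, which Agda's coverage checker makes painless --- most off-diagonal branches are eliminated automatically once the conflicting code or stack shapes are exposed. The sole lines carrying real content are the single appeal to $\MyConid{just}$-injectivity in the $\MyConid{VAR}$ case and the observation that the two $\MyConid{COND}$ rules cannot overlap.
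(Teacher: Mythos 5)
Your proposal matches the paper's proof exactly: the formalisation pattern-matches on both derivations, lets Agda's coverage checker dismiss all conflicting rule pairs (including the two $\MyConid{COND}$ rules, separated by unification of $0$ against $1+n$), returns $\Varid{refl}$ on every diagonal case except $\MyConid{VAR}$, and there applies $\Varid{just}$-injectivity to the composite of the two lookup equations. Correct, and the same approach.
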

This is a key property that is useful when constructing a compiler
implementation of the CES machine. Note that we write the name
of the definition containing this proof in parentheses.

The CES machine \emph{terminates with a value {\textsmaller[.5]{\ensuremath{\Varid{v}}}}}, written {\textsmaller[.5]{\ensuremath{\Varid{cfg}\;\downarrow_{\Varid{CES}}\;\Varid{v}}}} if it, through the reflexive transitive closure of {\textsmaller[.5]{\ensuremath{\xrightarrow[\Varid{CES}]{}}}},
reaches the end of its code fragment with an empty environment, and
{\textsmaller[.5]{\ensuremath{\Varid{v}}}} as its sole stack element.
\iffullversion
\begin{hscode}\SaveRestoreHook
\column{B}{@{}>{\hspre}l<{\hspost}@{}}%
\column{E}{@{}>{\hspre}l<{\hspost}@{}}%
\>[B]{}{\dummy\downarrow_{\Varid{CES}}\dummy}\;\mathbin{:}\;\Conid{Config}\;\Varid{→}\;\Conid{Value}\;\Varid{→}\;\star{}\<[E]%
\\
\>[B]{}\Varid{cfg}\;\downarrow_{\Varid{CES}}\;\Varid{v}\;\mathrel{=}\;\Varid{cfg}\;\xrightarrow[\Varid{CES}]{}^*\;(\MyConid{END},[\mskip1.5mu \mskip1.5mu],\MyConid{val}\;\Varid{v}\;\Varid{∷}\;[\mskip1.5mu \mskip1.5mu]){}\<[E]%
\ColumnHook
\end{hscode}\resethooks
where the reflexive transitive closure of a relation is defined as follows:
\begin{hscode}\SaveRestoreHook
\column{B}{@{}>{\hspre}l<{\hspost}@{}}%
\column{3}{@{}>{\hspre}l<{\hspost}@{}}%
\column{8}{@{}>{\hspre}l<{\hspost}@{}}%
\column{E}{@{}>{\hspre}l<{\hspost}@{}}%
\>[B]{}\Keyword{data}\;\Varid{\char95 *}\;\{\mskip1.5mu \Conid{A}\;\mathbin{:}\;\star\mskip1.5mu\}\;(\Conid{R}\;\mathbin{:}\;\Conid{Rel}\;\Conid{A}\;\Conid{A})\;(\Varid{a}\;\mathbin{:}\;\Conid{A})\;\mathbin{:}\;\Conid{A}\;\Varid{→}\;\star\;\Keyword{where}{}\<[E]%
\\
\>[B]{}\hsindent{3}{}\<[3]%
\>[3]{}[\mskip1.5mu \mskip1.5mu]\;{}\<[8]%
\>[8]{}\mathbin{:}\;(\Conid{R}\;\Varid{*})\;\Varid{a}\;\Varid{a}{}\<[E]%
\\
\>[B]{}\hsindent{3}{}\<[3]%
\>[3]{}\Varid{\char95 ∷\char95 }\;{}\<[8]%
\>[8]{}\mathbin{:}\;\{\mskip1.5mu \Varid{b}\;\Varid{c}\;\mathbin{:}\;\Conid{A}\mskip1.5mu\}\;\Varid{→}\;\Conid{R}\;\Varid{a}\;\Varid{b}\;\Varid{→}\;(\Conid{R}\;\Varid{*})\;\Varid{b}\;\Varid{c}\;\Varid{→}\;(\Conid{R}\;\Varid{*})\;\Varid{a}\;\Varid{c}{}\<[E]%
\ColumnHook
\end{hscode}\resethooks
\fi
It \emph{terminates}, written {\textsmaller[.5]{\ensuremath{\Varid{cfg}\;\downarrow_{\Varid{CES}}}}} if there exists a value {\textsmaller[.5]{\ensuremath{\Varid{v}}}} such
that it terminates with the value {\textsmaller[.5]{\ensuremath{\Varid{v}}}}.
\iffullversion
The Agda syntax for the existential quantifier normally
written as $\exists x. P(x)$ is {\textsmaller[.5]{\ensuremath{\Varid{∃}\;\Varid{λ}\;\Varid{x}\;\Varid{→}\;\Conid{P}\;\Varid{x}}}}. Using this syntax, the
definition of termination with value {\textsmaller[.5]{\ensuremath{\Varid{v}}}} is:
\begin{hscode}\SaveRestoreHook
\column{B}{@{}>{\hspre}l<{\hspost}@{}}%
\column{E}{@{}>{\hspre}l<{\hspost}@{}}%
\>[B]{}{\dummy\downarrow_{\Varid{CES}}}\;\mathbin{:}\;\Conid{Config}\;\Varid{→}\;\star{}\<[E]%
\\
\>[B]{}\Varid{cfg}\;\downarrow_{\Varid{CES}}\;\mathrel{=}\;\Varid{∃}\;\Varid{λ}\;\Varid{v}\;\Varid{→}\;\Varid{cfg}\;\downarrow_{\Varid{CES}}\;\Varid{v}{}\<[E]%
\ColumnHook
\end{hscode}\resethooks
\fi
It \emph{diverges}, written {\textsmaller[.5]{\ensuremath{\Varid{cfg}\;\uparrow_{\Varid{CES}}}}} if it is possible to take
another step from any configuration reachable from the reflexive transitive
closure of {\textsmaller[.5]{\ensuremath{\xrightarrow[\Varid{CES}]{}}}}.
\iffullversion
\begin{hscode}\SaveRestoreHook
\column{B}{@{}>{\hspre}l<{\hspost}@{}}%
\column{E}{@{}>{\hspre}l<{\hspost}@{}}%
\>[B]{}{\dummy\uparrow_{\Varid{CES}}}\;\mathbin{:}\;\Conid{Config}\;\Varid{→}\;\star{}\<[E]%
\\
\>[B]{}{\dummy\uparrow_{\Varid{CES}}}\;\mathrel{=}\;\Varid{↑}\;{\dummy\xrightarrow[\Varid{CES}]{}\dummy}{}\<[E]%
\ColumnHook
\end{hscode}\resethooks
where
\begin{hscode}\SaveRestoreHook
\column{B}{@{}>{\hspre}l<{\hspost}@{}}%
\column{E}{@{}>{\hspre}l<{\hspost}@{}}%
\>[B]{}\Varid{↑}\;\mathbin{:}\;\{\mskip1.5mu \Conid{A}\;\mathbin{:}\;\star\mskip1.5mu\}\;(\Conid{R}\;\mathbin{:}\;\Conid{Rel}\;\Conid{A}\;\Conid{A})\;\Varid{→}\;\Conid{A}\;\Varid{→}\;\star{}\<[E]%
\\
\>[B]{}\Varid{↑}\;\Conid{R}\;\Varid{a}\;\mathrel{=}\;\Varid{∀}\;\Varid{b}\;\Varid{→}\;(\Conid{R}\;\Varid{*})\;\Varid{a}\;\Varid{b}\;\Varid{→}\;\Varid{∃}\;\Varid{λ}\;\Varid{c}\;\Varid{→}\;\Conid{R}\;\Varid{b}\;\Varid{c}{}\<[E]%
\ColumnHook
\end{hscode}\resethooks
\fi

\section{CESH: A heap machine} \label{section:CESH}
In a compiler implementation of the CES machine targeting a low-level
language, closures have to be dynamically allocated in a
heap. However, the CES machine does not make this dynamic allocation
explicit. In this section, we try to make it explicit by
constructing the CESH machine, which is a CES machine with an extra heap
component in its configuration.

While heaps are not strictly necessary for a \emph{presentation} of
the CES machine, they are of great importance to us. The distributed
machine that we will later define needs heaps for persistent
storage of data, and the CESH machine forms an intermediate step
between that and the CES machine.
\iffullversion
Another thing that can be done with
heaps is to implement general recursion, without using fix-point
combinators, by constructing circular closures.
\fi

A CESH configuration is defined as {\textsmaller[.5]{\ensuremath{\Conid{Config}\;\mathrel{=}\;\Conid{Code}\;\Varid{×}\;\Conid{Env}\;\Varid{×}\;\Conid{Stack}\;\Varid{×}\;\Conid{Heap}\;\Conid{Closure}}}}, where {\textsmaller[.5]{\ensuremath{\Conid{Heap}}}} is a type constructor for heaps
parameterised by the type of its content.
\ifnotfullversion
The only difference in the definition of the configuration constituents,
compared to the CES machine, is that a closure value (the {\textsmaller[.5]{\ensuremath{\MyConid{clos}}}} constructor of the {\textsmaller[.5]{\ensuremath{\Conid{Value}}}} type)
does not contain an actual closure, but just a pointer ({\textsmaller[.5]{\ensuremath{\Conid{Ptr}}}}).
\else
Closures, values and environments are again mutually defined. Now a
closure value is simply represented by a pointer:
\begin{hscode}\SaveRestoreHook
\column{B}{@{}>{\hspre}l<{\hspost}@{}}%
\column{3}{@{}>{\hspre}l<{\hspost}@{}}%
\column{5}{@{}>{\hspre}l<{\hspost}@{}}%
\column{11}{@{}>{\hspre}l<{\hspost}@{}}%
\column{22}{@{}>{\hspre}l<{\hspost}@{}}%
\column{E}{@{}>{\hspre}l<{\hspost}@{}}%
\>[B]{}\Conid{ClosPtr}\;\mathrel{=}\;\Conid{Ptr}{}\<[E]%
\\
\>[B]{}\Keyword{mutual}{}\<[E]%
\\
\>[B]{}\hsindent{3}{}\<[3]%
\>[3]{}\Conid{Closure}\;\mathrel{=}\;\Conid{Code}\;\Varid{×}\;\Conid{Env}{}\<[E]%
\\
\>[B]{}\hsindent{3}{}\<[3]%
\>[3]{}\Keyword{data}\;\Conid{Value}\;\mathbin{:}\;\star\;\Keyword{where}{}\<[E]%
\\
\>[3]{}\hsindent{2}{}\<[5]%
\>[5]{}\MyConid{nat}\;{}\<[11]%
\>[11]{}\mathbin{:}\;\Conid{ℕ}\;{}\<[22]%
\>[22]{}\Varid{→}\;\Conid{Value}{}\<[E]%
\\
\>[3]{}\hsindent{2}{}\<[5]%
\>[5]{}\MyConid{clos}\;{}\<[11]%
\>[11]{}\mathbin{:}\;\Conid{ClosPtr}\;{}\<[22]%
\>[22]{}\Varid{→}\;\Conid{Value}{}\<[E]%
\\
\>[B]{}\hsindent{3}{}\<[3]%
\>[3]{}\Conid{Env}\;\mathrel{=}\;\Conid{List}\;\Conid{Value}{}\<[E]%
\ColumnHook
\end{hscode}\resethooks
\fi
We leave the stack as in the CES
\ifnotfullversion
machine.
\else
machine (even though we could, in principle, change the continuations,
currently represented by closures, to pointers as well -- we do not do
this since it is not necessary for our purposes).
\begin{hscode}\SaveRestoreHook
\column{B}{@{}>{\hspre}l<{\hspost}@{}}%
\column{3}{@{}>{\hspre}l<{\hspost}@{}}%
\column{9}{@{}>{\hspre}l<{\hspost}@{}}%
\column{20}{@{}>{\hspre}l<{\hspost}@{}}%
\column{E}{@{}>{\hspre}l<{\hspost}@{}}%
\>[B]{}\Keyword{data}\;\Conid{StackElem}\;\mathbin{:}\;\star\;\Keyword{where}{}\<[E]%
\\
\>[B]{}\hsindent{3}{}\<[3]%
\>[3]{}\MyConid{val}\;{}\<[9]%
\>[9]{}\mathbin{:}\;\Conid{Value}\;{}\<[20]%
\>[20]{}\Varid{→}\;\Conid{StackElem}{}\<[E]%
\\
\>[B]{}\hsindent{3}{}\<[3]%
\>[3]{}\MyConid{cont}\;{}\<[9]%
\>[9]{}\mathbin{:}\;\Conid{Closure}\;{}\<[20]%
\>[20]{}\Varid{→}\;\Conid{StackElem}{}\<[E]%
\\[\blanklineskip]%
\>[B]{}\Conid{Stack}\;\mathrel{=}\;\Conid{List}\;\Conid{StackElem}{}\<[E]%
\ColumnHook
\end{hscode}\resethooks
\fi

\iffullversion
\begin{sidewaysfigure}
\else
\begin{figure*}[!t]
\fi
\centering
\begin{varwidth}[t]{\textwidth}
\iffullversion
\begin{hscode}\SaveRestoreHook
\column{B}{@{}>{\hspre}l<{\hspost}@{}}%
\column{E}{@{}>{\hspre}l<{\hspost}@{}}%
\>[B]{}\Keyword{data}\;{\dummy\xrightarrow[\Varid{CESH}]{}\dummy}\;\mathbin{:}\;\Conid{Rel}\;\Conid{Config}\;\Conid{Config}\;\Keyword{where}{}\<[E]%
\ColumnHook
\end{hscode}\resethooks
\removecodespace
\fi
\savecolumns
\rightaligncolumn{40}
\indentcolumn{3}
\begin{hscode}\SaveRestoreHook
\column{B}{@{}>{\hspre}l<{\hspost}@{}}%
\column{3}{@{}>{\hspre}l<{\hspost}@{}}%
\column{13}{@{}>{\hspre}l<{\hspost}@{}}%
\column{40}{@{}>{\hspre}l<{\hspost}@{}}%
\column{100}{@{}>{\hspre}l<{\hspost}@{}}%
\column{E}{@{}>{\hspre}l<{\hspost}@{}}%
\>[3]{}\MyConid{CLOS}\;{}\<[13]%
\>[13]{}\mathbin{:}\;\Varid{∀}\;\{\mskip1.5mu \Varid{c'}\;\Varid{c}\;\Varid{e}\;\Varid{s}\;\Varid{h}\mskip1.5mu\}\;\Varid{→}\;\Keyword{let}\;(\Varid{h'},\Varid{ptr}_{\Varid{cl}})\;\mathrel{=}\;\Varid{h}\;\Varid{▸}\;(\Varid{c'},\Varid{e})\;\Keyword{in}{}\<[E]%
\\
\>[13]{}\hsindent{27}{}\<[40]%
\>[40]{}(\MyConid{CLOS}\;\Varid{c'}\;\MyConid{;}\;\Varid{c},\Varid{e},\Varid{s},\Varid{h})\;{}\<[100]%
\>[100]{}\xrightarrow[\Varid{CESH}]{}\;(\Varid{c},\Varid{e},\MyConid{val}\;(\MyConid{clos}\;\Varid{ptr}_{\Varid{cl}})\;\Varid{∷}\;\Varid{s},\Varid{h'}){}\<[E]%
\\
\>[3]{}\MyConid{APPL}\;{}\<[13]%
\>[13]{}\mathbin{:}\;\Varid{∀}\;\{\mskip1.5mu \Varid{c}\;\Varid{e}\;\Varid{v}\;\Varid{ptr}_{\Varid{cl}}\;\Varid{c'}\;\Varid{e'}\;\Varid{s}\;\Varid{h}\mskip1.5mu\}\;\Varid{→}\;\Varid{h}\;\mathbin{!}\;\Varid{ptr}_{\Varid{cl}}\;\Varid{≡}\;\MyConid{just}\;(\Varid{c'},\Varid{e'})\;\Varid{→}\;{}\<[E]%
\\
\>[13]{}\hsindent{27}{}\<[40]%
\>[40]{}(\MyConid{APPL}\;\MyConid{;}\;\Varid{c},\Varid{e},\MyConid{val}\;\Varid{v}\;\Varid{∷}\;\MyConid{val}\;(\MyConid{clos}\;\Varid{ptr}_{\Varid{cl}})\;\Varid{∷}\;\Varid{s},\Varid{h})\;{}\<[100]%
\>[100]{}\xrightarrow[\Varid{CESH}]{}\;(\Varid{c'},\Varid{v}\;\Varid{∷}\;\Varid{e'},\MyConid{cont}\;(\Varid{c},\Varid{e})\;\Varid{∷}\;\Varid{s},\Varid{h}){}\<[E]%
\ColumnHook
\end{hscode}\resethooks
\iffullversion
\removecodespace
\restorecolumns
\indentcolumn{3}
\rightaligncolumn{40}
\begin{hscode}\SaveRestoreHook
\column{B}{@{}>{\hspre}l<{\hspost}@{}}%
\column{3}{@{}>{\hspre}l<{\hspost}@{}}%
\column{13}{@{}>{\hspre}l<{\hspost}@{}}%
\column{40}{@{}>{\hspre}l<{\hspost}@{}}%
\column{100}{@{}>{\hspre}l<{\hspost}@{}}%
\column{E}{@{}>{\hspre}l<{\hspost}@{}}%
\>[3]{}\MyConid{VAR}\;{}\<[13]%
\>[13]{}\mathbin{:}\;\Varid{∀}\;\{\mskip1.5mu \Varid{n}\;\Varid{c}\;\Varid{e}\;\Varid{s}\;\Varid{h}\;\Varid{v}\mskip1.5mu\}\;\Varid{→}\;\Varid{lookup}\;\Varid{n}\;\Varid{e}\;\Varid{≡}\;\MyConid{just}\;\Varid{v}\;\Varid{→}\;{}\<[E]%
\\
\>[13]{}\hsindent{27}{}\<[40]%
\>[40]{}(\MyConid{VAR}\;\Varid{n}\;\MyConid{;}\;\Varid{c},\Varid{e},\Varid{s},\Varid{h})\;{}\<[100]%
\>[100]{}\xrightarrow[\Varid{CESH}]{}\;(\Varid{c},\Varid{e},\MyConid{val}\;\Varid{v}\;\Varid{∷}\;\Varid{s},\Varid{h}){}\<[E]%
\\
\>[3]{}\MyConid{RET}\;{}\<[13]%
\>[13]{}\mathbin{:}\;\Varid{∀}\;\{\mskip1.5mu \Varid{e}\;\Varid{v}\;\Varid{c}\;\Varid{e'}\;\Varid{s}\;\Varid{h}\mskip1.5mu\}\;\Varid{→}\;{}\<[40]%
\>[40]{}(\MyConid{RET},\Varid{e},\MyConid{val}\;\Varid{v}\;\Varid{∷}\;\MyConid{cont}\;(\Varid{c},\Varid{e'})\;\Varid{∷}\;\Varid{s},\Varid{h})\;{}\<[100]%
\>[100]{}\xrightarrow[\Varid{CESH}]{}\;(\Varid{c},\Varid{e'},\MyConid{val}\;\Varid{v}\;\Varid{∷}\;\Varid{s},\Varid{h}){}\<[E]%
\\
\>[3]{}\MyConid{LIT}\;{}\<[13]%
\>[13]{}\mathbin{:}\;\Varid{∀}\;\{\mskip1.5mu \Varid{l}\;\Varid{c}\;\Varid{e}\;\Varid{s}\;\Varid{h}\mskip1.5mu\}\;\Varid{→}\;{}\<[40]%
\>[40]{}(\MyConid{LIT}\;\Varid{l}\;\MyConid{;}\;\Varid{c},\Varid{e},\Varid{s},\Varid{h})\;{}\<[100]%
\>[100]{}\xrightarrow[\Varid{CESH}]{}\;(\Varid{c},\Varid{e},\MyConid{val}\;(\MyConid{nat}\;\Varid{l})\;\Varid{∷}\;\Varid{s},\Varid{h}){}\<[E]%
\\
\>[3]{}\MyConid{OP}\;{}\<[13]%
\>[13]{}\mathbin{:}\;\Varid{∀}\;\{\mskip1.5mu \Varid{f}\;\Varid{c}\;\Varid{e}\;\Varid{l₁}\;\Varid{l₂}\;\Varid{s}\;\Varid{h}\mskip1.5mu\}\;\Varid{→}\;{}\<[40]%
\>[40]{}(\MyConid{OP}\;\Varid{f}\;\MyConid{;}\;\Varid{c},\Varid{e},\MyConid{val}\;(\MyConid{nat}\;\Varid{l₁})\;\Varid{∷}\;\MyConid{val}\;(\MyConid{nat}\;\Varid{l₂})\;\Varid{∷}\;\Varid{s},\Varid{h})\;{}\<[100]%
\>[100]{}\xrightarrow[\Varid{CESH}]{}\;(\Varid{c},\Varid{e},\MyConid{val}\;(\MyConid{nat}\;(\Varid{f}\;\Varid{l₁}\;\Varid{l₂}))\;\Varid{∷}\;\Varid{s},\Varid{h}){}\<[E]%
\\
\>[3]{}\MyConid{COND-0}\;{}\<[13]%
\>[13]{}\mathbin{:}\;\Varid{∀}\;\{\mskip1.5mu \Varid{c}\;\Varid{c'}\;\Varid{e}\;\Varid{s}\;\Varid{h}\mskip1.5mu\}\;\Varid{→}\;{}\<[40]%
\>[40]{}(\MyConid{COND}\;\Varid{c}\;\Varid{c'},\Varid{e},\MyConid{val}\;(\MyConid{nat}\;\Varid{0})\;\Varid{∷}\;\Varid{s},\Varid{h})\;{}\<[100]%
\>[100]{}\xrightarrow[\Varid{CESH}]{}\;(\Varid{c},\Varid{e},\Varid{s},\Varid{h}){}\<[E]%
\\
\>[3]{}\MyConid{COND-1+n}\;{}\<[13]%
\>[13]{}\mathbin{:}\;\Varid{∀}\;\{\mskip1.5mu \Varid{c}\;\Varid{c'}\;\Varid{e}\;\Varid{n}\;\Varid{s}\;\Varid{h}\mskip1.5mu\}\;\Varid{→}\;{}\<[40]%
\>[40]{}(\MyConid{COND}\;\Varid{c}\;\Varid{c'},\Varid{e},\MyConid{val}\;(\MyConid{nat}\;(1+\;\Varid{n}))\;\Varid{∷}\;\Varid{s},\Varid{h})\;{}\<[100]%
\>[100]{}\xrightarrow[\Varid{CESH}]{}\;(\Varid{c'},\Varid{e},\Varid{s},\Varid{h}){}\<[E]%
\ColumnHook
\end{hscode}\resethooks
\fi
\end{varwidth}
\iffullversion
\caption{The definition of the transition relation of the CESH machine.}
\else
\caption{The definition of the transition relation of the CESH machine (excerpt).}
\fi
\label{figure:CESH-step}
\iffullversion
\end{sidewaysfigure}
\else
\end{figure*}
\fi

Fig.~\ref{figure:CESH-step} shows the definition of the transition
relation of the CESH machine.
\iffullversion
It is largely the same as that of the CES
machine, but with the added heap component.
\else
It is largely the same as that of the CES
machine, but with the added heap component, so we elide most of the rules.
\fi
The difference appears in the {\textsmaller[.5]{\ensuremath{\MyConid{CLOS}}}} and {\textsmaller[.5]{\ensuremath{\MyConid{APPL}}}} instructions. To
build a closure, the machine allocates it in the heap using the {\textsmaller[.5]{\ensuremath{\Varid{\char95 ▸\char95 }}}}
function, which, given a heap and an element, gives back an
updated heap and a pointer to the element.  When performing an
application, the machine has a \emph{pointer} to a closure, so it
looks it up in the heap using the {\textsmaller[.5]{\ensuremath{\Varid{\char95 !\char95 }}}} function, which, given a heap
and a pointer, gives back the element that the pointer points to (if
it exists).
\begin{lemma}[{\textsmaller[.5]{\ensuremath{\Varid{determinism}_{\Varid{CESH}}}}}]
\iffullversion
{\textsmaller[.5]{\ensuremath{\xrightarrow[\Varid{CESH}]{}}}} is deterministic, i.e. we can define the following term:
\begin{hscode}\SaveRestoreHook
\column{B}{@{}>{\hspre}l<{\hspost}@{}}%
\column{E}{@{}>{\hspre}l<{\hspost}@{}}%
\>[B]{}\Varid{determinism}_{\Varid{CESH}}\;\mathbin{:}\;{\dummy\xrightarrow[\Varid{CESH}]{}\dummy}\;\Varid{is-deterministic}{}\<[E]%
\ColumnHook
\end{hscode}\resethooks
\else
{\textsmaller[.5]{\ensuremath{\xrightarrow[\Varid{CESH}]{}}}} is deterministic.
\fi
\end{lemma}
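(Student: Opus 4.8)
The plan is to mimic the proof of the determinism lemma for the CES machine, proceeding by case analysis on the two given transitions $a \xrightarrow[\text{CESH}]{} b$ and $a \xrightarrow[\text{CESH}]{} b'$. The key observation is that the rule used in a CESH transition is completely determined by the head instruction of the code component together with the shape of the top of the stack (and, for conditionals, by whether the topmost natural number is $0$ or a successor) --- exactly as in the CES case, since adding the heap component changes nothing about which rule is applicable. Hence both transitions out of $a$ must be instances of the \emph{same} rule, and it remains to check, rule by rule, that the target configuration is uniquely determined by the source.

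For \texttt{CLOS}, \texttt{RET}, \texttt{LIT}, \texttt{OP}, \texttt{COND-0} and \texttt{COND-1+n} the right-hand side is a fixed function of the left-hand side --- in the \texttt{CLOS} case because the heap-allocation function $\_\triangleright\_$ is an ordinary deterministic function and so returns the same updated heap and pointer on both sides --- so $b \equiv b'$ holds by \texttt{refl}. Two rules carry an extra premise and need a small injectivity argument: for \texttt{VAR} the two derivations supply proofs $\text{lookup}\ n\ e \equiv \text{just}\ v$ and $\text{lookup}\ n\ e \equiv \text{just}\ v'$, and combining them with \texttt{sym}, \texttt{trans} and injectivity of \texttt{just} (\texttt{just-inj}) yields $v \equiv v'$, whence $b \equiv b'$ by \texttt{cong}; for \texttt{APPL} the two derivations supply $h \mathbin{!} \text{ptr}_{cl} \equiv \text{just}\ (c',e')$ and $h \mathbin{!} \text{ptr}_{cl} \equiv \text{just}\ (c'',e'')$, and the same reasoning (plus injectivity of the pairing constructor) forces $(c',e') \equiv (c'',e'')$ and hence $b \equiv b'$. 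This is essentially the CES proof with the heap threaded through, so it is short and mechanical: one clause per rule, \texttt{refl} everywhere except the \texttt{VAR} and \texttt{APPL} clauses.

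I do not expect a genuine obstacle. The only points needing mild care are bookkeeping ones: making sure the two transitions really do read (in \texttt{APPL}) or extend (in \texttt{CLOS}) the \emph{same} heap $h$ --- which they do, because $h$ is part of the shared source configuration $a$ --- and checking that pattern matching on the \texttt{let}-bound result of $h \triangleright (c',e)$ in the \texttt{CLOS} rule does not get in the way of the \texttt{refl}. In short, the statement is a routine strengthening of $\text{determinism}_{\text{CES}}$, and the same proof skeleton goes through essentially unchanged.
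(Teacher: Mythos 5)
Your proposal matches the paper's proof essentially exactly: case analysis on the two derivations (forcing both to be instances of the same rule), \texttt{refl} for every rule whose target is a function of the source (with \texttt{CLOS} justified by determinism of the allocation function), and \texttt{just-inj} via \texttt{trans}/\texttt{sym} for \texttt{VAR}, plus pair-injectivity for \texttt{APPL} followed by \texttt{cong}. No gaps.
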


We define what it means for a CESH configuration {\textsmaller[.5]{\ensuremath{\Varid{cfg}}}} to
\emph{terminate with a value {\textsmaller[.5]{\ensuremath{\Varid{v}}}}} ({\textsmaller[.5]{\ensuremath{\Varid{cfg}\;\downarrow_{\Varid{CESH}}\;\Varid{v}}}}), \emph{terminate}
({\textsmaller[.5]{\ensuremath{\Varid{cfg}\;\downarrow_{\Varid{CESH}}}}}), and \emph{diverge} ({\textsmaller[.5]{\ensuremath{\Varid{cfg}\;\uparrow_{\Varid{CESH}}}}}). These are analogous to
the definitions for the CES machine, with the difference that the CESH
machine is allowed to terminate with \emph{any} heap since it never
deallocates
\iffullversion
anything:
\else
anything. The Agda syntax for the existential quantifier normally
written as $\exists x. P(x)$ is {\textsmaller[.5]{\ensuremath{\Varid{∃}\;\Varid{λ}\;\Varid{x}\;\Varid{→}\;\Conid{P}\;\Varid{x}}}}. Using this syntax, the
definition of termination with value {\textsmaller[.5]{\ensuremath{\Varid{v}}}} is:
\fi
\iffullversion
\begin{hscode}\SaveRestoreHook
\column{B}{@{}>{\hspre}l<{\hspost}@{}}%
\column{E}{@{}>{\hspre}l<{\hspost}@{}}%
\>[B]{}{\dummy\downarrow_{\Varid{CESH}}\dummy}\;\mathbin{:}\;\Conid{Config}\;\Varid{→}\;\Conid{Value}\;\Varid{→}\;\star{}\<[E]%
\ColumnHook
\end{hscode}\resethooks
\removecodespace
\fi
\begin{hscode}\SaveRestoreHook
\column{B}{@{}>{\hspre}l<{\hspost}@{}}%
\column{E}{@{}>{\hspre}l<{\hspost}@{}}%
\>[B]{}\Varid{cfg}\;\downarrow_{\Varid{CESH}}\;\Varid{v}\;\mathrel{=}\;\Varid{∃}\;\Varid{λ}\;\Varid{h}\;\Varid{→}\;\Varid{cfg}\;\xrightarrow[\Varid{CESH}]{}^*\;(\MyConid{END},[\mskip1.5mu \mskip1.5mu],[\mskip1.5mu \MyConid{val}\;\Varid{v}\mskip1.5mu],\Varid{h}){}\<[E]%
\ColumnHook
\end{hscode}\resethooks
\iffullversion
\removecodespace
\begin{hscode}\SaveRestoreHook
\column{B}{@{}>{\hspre}l<{\hspost}@{}}%
\column{E}{@{}>{\hspre}l<{\hspost}@{}}%
\>[B]{}{\dummy\downarrow_{\Varid{CESH}}}\;\mathbin{:}\;\Conid{Config}\;\Varid{→}\;\star{}\<[E]%
\\
\>[B]{}\Varid{cfg}\;\downarrow_{\Varid{CESH}}\;\mathrel{=}\;\Varid{∃}\;\Varid{λ}\;\Varid{v}\;\Varid{→}\;\Varid{cfg}\;\downarrow_{\Varid{CESH}}\;\Varid{v}{}\<[E]%
\\
\>[B]{}{\dummy\uparrow_{\Varid{CESH}}}\;\mathbin{:}\;\Conid{Config}\;\Varid{→}\;\star{}\<[E]%
\\
\>[B]{}{\dummy\uparrow_{\Varid{CESH}}}\;\mathrel{=}\;\Varid{↑}\;{\dummy\xrightarrow[\Varid{CESH}]{}\dummy}{}\<[E]%
\ColumnHook
\end{hscode}\resethooks
\fi
\iffullversion
\subsection{An interface for heaps} \label{section:Heaps}

In this section we formally define an abstract interface for the type
constructor {\textsmaller[.5]{\ensuremath{\Conid{Heap}}}} and its associated functions that we use to model
the dynamic memory allocation that we will need in our system.  We
will leave the details unspecified, requiring instead that it captures
certain algebraic properties that should be fulfilled by any
reasonable implementation.

The type {\textsmaller[.5]{\ensuremath{\Conid{Heap}\;\Conid{A}}}} models heaps with memory cells of type {\textsmaller[.5]{\ensuremath{\Conid{A}}}}, and
{\textsmaller[.5]{\ensuremath{\Conid{Ptr}}}} pointers into some heap. We require the existence of a heap {\textsmaller[.5]{\ensuremath{\Varid{∅}}}}
without any other requirements.
\savecolumns
\begin{hscode}\SaveRestoreHook
\column{B}{@{}>{\hspre}l<{\hspost}@{}}%
\column{3}{@{}>{\hspre}l<{\hspost}@{}}%
\column{13}{@{}>{\hspre}l<{\hspost}@{}}%
\column{E}{@{}>{\hspre}l<{\hspost}@{}}%
\>[B]{}\Keyword{abstract}{}\<[E]%
\\
\>[B]{}\hsindent{3}{}\<[3]%
\>[3]{}\Conid{Heap}\;{}\<[13]%
\>[13]{}\mathbin{:}\;\star\;\Varid{→}\;\star{}\<[E]%
\\
\>[B]{}\hsindent{3}{}\<[3]%
\>[3]{}\Conid{Ptr}\;{}\<[13]%
\>[13]{}\mathbin{:}\;\star{}\<[E]%
\\
\>[B]{}\hsindent{3}{}\<[3]%
\>[3]{}\Varid{∅}\;{}\<[13]%
\>[13]{}\mathbin{:}\;\{\mskip1.5mu \Conid{A}\;\mathbin{:}\;\star\mskip1.5mu\}\;\Varid{→}\;\Conid{Heap}\;\Conid{A}{}\<[E]%
\ColumnHook
\end{hscode}\resethooks
\restorecolumns
We need to be able to attempt to lookup (or dereference) pointers,
and allocate new items. Allocating gives back a new heap and a pointer.
\restorecolumns
\begin{hscode}\SaveRestoreHook
\column{B}{@{}>{\hspre}l<{\hspost}@{}}%
\column{3}{@{}>{\hspre}l<{\hspost}@{}}%
\column{13}{@{}>{\hspre}l<{\hspost}@{}}%
\column{16}{@{}>{\hspre}l<{\hspost}@{}}%
\column{E}{@{}>{\hspre}l<{\hspost}@{}}%
\>[3]{}\Varid{\char95 !\char95 }\;{}\<[13]%
\>[13]{}\mathbin{:}\;\{\mskip1.5mu \Conid{A}\;\mathbin{:}\;\star\mskip1.5mu\}\;\Varid{→}\;\Conid{Heap}\;\Conid{A}\;\Varid{→}\;\Conid{Ptr}\;\Varid{→}\;\Conid{Maybe}\;\Conid{A}{}\<[E]%
\\
\>[3]{}\Varid{\char95 ▸\char95 }\;{}\<[13]%
\>[13]{}\mathbin{:}\;{}\<[16]%
\>[16]{}\{\mskip1.5mu \Conid{A}\;\mathbin{:}\;\star\mskip1.5mu\}\;\Varid{→}\;\Conid{Heap}\;\Conid{A}\;\Varid{→}\;\Conid{A}\;\Varid{→}\;\Conid{Heap}\;\Conid{A}\;\Varid{×}\;\Conid{Ptr}{}\<[E]%
\ColumnHook
\end{hscode}\resethooks
We require the following relationship between dereferencing and
allocation: if we dereference a pointer that was obtained from
allocating a memory cell with value {\textsmaller[.5]{\ensuremath{\Varid{x}}}}, we get {\textsmaller[.5]{\ensuremath{\Varid{x}}}} back:
\restorecolumns
\begin{hscode}\SaveRestoreHook
\column{B}{@{}>{\hspre}l<{\hspost}@{}}%
\column{3}{@{}>{\hspre}l<{\hspost}@{}}%
\column{13}{@{}>{\hspre}l<{\hspost}@{}}%
\column{16}{@{}>{\hspre}l<{\hspost}@{}}%
\column{E}{@{}>{\hspre}l<{\hspost}@{}}%
\>[3]{}\Varid{!-▸}\;{}\<[13]%
\>[13]{}\mathbin{:}\;\{\mskip1.5mu \Conid{A}\;\mathbin{:}\;\star\mskip1.5mu\}\;(\Varid{h}\;\mathbin{:}\;\Conid{Heap}\;\Conid{A})\;(\Varid{x}\;\mathbin{:}\;\Conid{A})\;\Varid{→}{}\<[E]%
\\
\>[13]{}\hsindent{3}{}\<[16]%
\>[16]{}\Keyword{let}\;(\Varid{h'},\Varid{ptr})\;\mathrel{=}\;\Varid{h}\;\Varid{▸}\;\Varid{x}\;\Keyword{in}\;\Varid{h'}\;\mathbin{!}\;\Varid{ptr}\;\Varid{≡}\;\MyConid{just}\;\Varid{x}{}\<[E]%
\ColumnHook
\end{hscode}\resethooks
We define a preorder {\textsmaller[.5]{\ensuremath{\Varid{⊆}}}} for \emph{sub-heaps}. The intuitive reading
for {\textsmaller[.5]{\ensuremath{\Varid{h}\;\Varid{⊆}\;\Varid{h'}}}} is that {\textsmaller[.5]{\ensuremath{\Varid{h'}}}} can be used where {\textsmaller[.5]{\ensuremath{\Varid{h}}}} can, i.e. that {\textsmaller[.5]{\ensuremath{\Varid{h'}}}}
contains at least the allocations of {\textsmaller[.5]{\ensuremath{\Varid{h}}}}.
\begin{hscode}\SaveRestoreHook
\column{B}{@{}>{\hspre}l<{\hspost}@{}}%
\column{10}{@{}>{\hspre}l<{\hspost}@{}}%
\column{E}{@{}>{\hspre}l<{\hspost}@{}}%
\>[B]{}\Varid{\char95 ⊆\char95 }\;{}\<[10]%
\>[10]{}\mathbin{:}\;\{\mskip1.5mu \Conid{A}\;\mathbin{:}\;\star\mskip1.5mu\}\;\Varid{→}\;\Conid{Heap}\;\Conid{A}\;\Varid{→}\;\Conid{Heap}\;\Conid{A}\;\Varid{→}\;\star{}\<[E]%
\\
\>[B]{}\Varid{h₁}\;\Varid{⊆}\;\Varid{h₂}\;\mathrel{=}\;\Varid{∀}\;\Varid{ptr}\;\{\mskip1.5mu \Varid{x}\mskip1.5mu\}\;\Varid{→}\;\Varid{h₁}\;\mathbin{!}\;\Varid{ptr}\;\Varid{≡}\;\MyConid{just}\;\Varid{x}\;\Varid{→}\;\Varid{h₂}\;\mathbin{!}\;\Varid{ptr}\;\Varid{≡}\;\MyConid{just}\;\Varid{x}{}\<[E]%
\\[\blanklineskip]%
\>[B]{}\Varid{⊆-refl}\;{}\<[10]%
\>[10]{}\mathbin{:}\;\{\mskip1.5mu \Conid{A}\;\mathbin{:}\;\star\mskip1.5mu\}\;(\Varid{h}\;\mathbin{:}\;\Conid{Heap}\;\Conid{A})\;\Varid{→}\;\Varid{h}\;\Varid{⊆}\;\Varid{h}{}\<[E]%
\\
\>[B]{}\Varid{⊆-refl}\;\Varid{h}\;\Varid{ptr}\;\Varid{eq}\;\mathrel{=}\;\Varid{eq}{}\<[E]%
\\[\blanklineskip]%
\>[B]{}\Varid{⊆-trans}\;{}\<[10]%
\>[10]{}\mathbin{:}\;\{\mskip1.5mu \Conid{A}\;\mathbin{:}\;\star\mskip1.5mu\}\;\{\mskip1.5mu \Varid{h₁}\;\Varid{h₂}\;\Varid{h₃}\;\mathbin{:}\;\Conid{Heap}\;\Conid{A}\mskip1.5mu\}\;{}\<[E]%
\\
\>[10]{}\Varid{→}\;\Varid{h₁}\;\Varid{⊆}\;\Varid{h₂}\;\Varid{→}\;\Varid{h₂}\;\Varid{⊆}\;\Varid{h₃}\;\Varid{→}\;\Varid{h₁}\;\Varid{⊆}\;\Varid{h₃}{}\<[E]%
\\
\>[B]{}\Varid{⊆-trans}\;\Varid{h₁⊆h₂}\;\Varid{h₂⊆h₃}\;\Varid{ptr}\;\Varid{eq}\;\mathrel{=}\;\Varid{h₂⊆h₃}\;\Varid{ptr}\;(\Varid{h₁⊆h₂}\;\Varid{ptr}\;\Varid{eq}){}\<[E]%
\ColumnHook
\end{hscode}\resethooks
Our last requirement is that allocation does not overwrite any memory
cells that were previously allocated ({\textsmaller[.5]{\ensuremath{\Varid{proj₁}}}} means first projection):
\restorecolumns
\begin{hscode}\SaveRestoreHook
\column{B}{@{}>{\hspre}l<{\hspost}@{}}%
\column{3}{@{}>{\hspre}l<{\hspost}@{}}%
\column{13}{@{}>{\hspre}l<{\hspost}@{}}%
\column{E}{@{}>{\hspre}l<{\hspost}@{}}%
\>[B]{}\Keyword{abstract}{}\<[E]%
\\
\>[B]{}\hsindent{3}{}\<[3]%
\>[3]{}\Varid{h⊆h▸x}\;{}\<[13]%
\>[13]{}\mathbin{:}\;\{\mskip1.5mu \Conid{A}\;\mathbin{:}\;\star\mskip1.5mu\}\;(\Varid{h}\;\mathbin{:}\;\Conid{Heap}\;\Conid{A})\;\{\mskip1.5mu \Varid{x}\;\mathbin{:}\;\Conid{A}\mskip1.5mu\}\;\Varid{→}\;\Varid{h}\;\Varid{⊆}\;\Varid{proj₁}\;(\Varid{h}\;\Varid{▸}\;\Varid{x}){}\<[E]%
\ColumnHook
\end{hscode}\resethooks
\fi
\subsection{Correctness} \label{section:CESH-bisim}
To show that our definition of the machine is correct, we construct
a bisimulation between the CES and CESH machines.
Since the configurations of the machines are very similar, the
intuition for the relation that we construct is simply that it is
almost equality. The only place where it is not equality is for
closure values, where the CESH machine stores pointers instead of
closures directly.  To construct a relation for closure values we need
to to \emph{parameterise} it by the heap of the CESH configuration,
and then make sure that the closure pointer points to a closure
related to the CES closure.

Formally, the relation is constructed separately for the
different components of the machine configurations.  Since they run
the same bytecode, we let the relation for code be equality:
\begin{hscode}\SaveRestoreHook
\column{B}{@{}>{\hspre}l<{\hspost}@{}}%
\column{E}{@{}>{\hspre}l<{\hspost}@{}}%
\>[B]{}\Varid{R}_{\Varid{Code}}\;\mathbin{:}\;\Conid{Rel}\;\Conid{Code}\;\Conid{Code}{}\<[E]%
\\
\>[B]{}\Varid{R}_{\Varid{Code}}\;\Varid{c₁}\;\Varid{c₂}\;\mathrel{=}\;\Varid{c₁}\;\Varid{≡}\;\Varid{c₂}{}\<[E]%
\ColumnHook
\end{hscode}\resethooks
We forward declare the relation for environments and define the
relation for closures, which is simply that the components of the
closures are related. Since we have used the same names for some of
the components of the CES and CESH machines, we qualify them, using
Agda's qualified imports, by prepending {\textsmaller[.5]{\ensuremath{\Conid{CES.}}}} and {\textsmaller[.5]{\ensuremath{\Conid{CESH.}}}} to their
names. These components may contain values, so we have to
parameterise the relations by a closure heap (here {\textsmaller[.5]{\ensuremath{\Conid{ClosHeap}\;\mathrel{=}\;\Conid{Heap}\;\Conid{CESH.Closure}}}}).
\begin{hscode}\SaveRestoreHook
\column{B}{@{}>{\hspre}l<{\hspost}@{}}%
\column{12}{@{}>{\hspre}l<{\hspost}@{}}%
\column{E}{@{}>{\hspre}l<{\hspost}@{}}%
\>[B]{}\Varid{R}_{\Varid{Env}}\;{}\<[12]%
\>[12]{}\mathbin{:}\;\Conid{ClosHeap}\;\Varid{→}\;\Conid{Rel}\;\Conid{CES.Env}\;\Conid{CESH.Env}{}\<[E]%
\\
\>[B]{}\Varid{R}_{\Varid{Clos}}\;{}\<[12]%
\>[12]{}\mathbin{:}\;\Conid{ClosHeap}\;\Varid{→}\;\Conid{Rel}\;\Conid{CES.Closure}\;\Conid{CESH.Closure}{}\<[E]%
\\
\>[B]{}\Varid{R}_{\Varid{Clos}}\;\Varid{h}\;(\Varid{c₁},\Varid{e₁})\;(\Varid{c₂},\Varid{e₂})\;\mathrel{=}\;\Varid{R}_{\Varid{Code}}\;\Varid{c₁}\;\Varid{c₂}\;\Varid{×}\;\Varid{R}_{\Varid{Env}}\;\Varid{h}\;\Varid{e₁}\;\Varid{e₂}{}\<[E]%
\ColumnHook
\end{hscode}\resethooks
Two values are unrelated  if
they do not start with the same constructor.  When they do start with
the same constructor, there are two cases: If the two values are
number literals, they are related if they are equal. If the two values
are a CES closure and a pointer, we require that the pointer points to
a CESH closure that is related to the CES closure.
\begin{hscode}\SaveRestoreHook
\column{B}{@{}>{\hspre}l<{\hspost}@{}}%
\column{3}{@{}>{\hspre}l<{\hspost}@{}}%
\column{20}{@{}>{\hspre}l<{\hspost}@{}}%
\column{32}{@{}>{\hspre}l<{\hspost}@{}}%
\column{E}{@{}>{\hspre}l<{\hspost}@{}}%
\>[B]{}\Varid{R}_{\Varid{Val}}\;\mathbin{:}\;\Conid{ClosHeap}\;\Varid{→}\;\Conid{Rel}\;\Conid{CES.Value}\;\Conid{CESH.Value}{}\<[E]%
\\
\>[B]{}\Varid{R}_{\Varid{Val}}\;\Varid{h}\;(\MyConid{nat}\;\Varid{n₁})\;{}\<[20]%
\>[20]{}(\MyConid{nat}\;\Varid{n₂})\;{}\<[32]%
\>[32]{}\mathrel{=}\;\Varid{n₁}\;\Varid{≡}\;\Varid{n₂}{}\<[E]%
\\
\>[B]{}\Varid{R}_{\Varid{Val}}\;\Varid{h}\;(\MyConid{nat}\;\anonymous )\;{}\<[20]%
\>[20]{}(\MyConid{clos}\;\anonymous )\;{}\<[32]%
\>[32]{}\mathrel{=}\;\Varid{⊥}{}\<[E]%
\\
\>[B]{}\Varid{R}_{\Varid{Val}}\;\Varid{h}\;(\MyConid{clos}\;\anonymous )\;{}\<[20]%
\>[20]{}(\MyConid{nat}\;\anonymous )\;{}\<[32]%
\>[32]{}\mathrel{=}\;\Varid{⊥}{}\<[E]%
\\
\>[B]{}\Varid{R}_{\Varid{Val}}\;\Varid{h}\;(\MyConid{clos}\;\Varid{c₁})\;{}\<[20]%
\>[20]{}(\MyConid{clos}\;\Varid{ptr})\;{}\<[32]%
\>[32]{}\mathrel{=}\;\Varid{∃}\;\Varid{λ}\;\Varid{c₂}\;\Varid{→}\;{}\<[E]%
\\
\>[B]{}\hsindent{3}{}\<[3]%
\>[3]{}\Varid{h}\;\mathbin{!}\;\Varid{ptr}\;\Varid{≡}\;\MyConid{just}\;\Varid{c₂}\;\Varid{×}\;\Varid{R}_{\Varid{Clos}}\;\Varid{h}\;\Varid{c₁}\;\Varid{c₂}{}\<[E]%
\ColumnHook
\end{hscode}\resethooks
Two environments are related if they have the same list spine
and their values are pointwise related.
\begin{hscode}\SaveRestoreHook
\column{B}{@{}>{\hspre}l<{\hspost}@{}}%
\column{20}{@{}>{\hspre}l<{\hspost}@{}}%
\column{31}{@{}>{\hspre}l<{\hspost}@{}}%
\column{E}{@{}>{\hspre}l<{\hspost}@{}}%
\>[B]{}\Varid{R}_{\Varid{Env}}\;\Varid{h}\;[\mskip1.5mu \mskip1.5mu]\;{}\<[20]%
\>[20]{}[\mskip1.5mu \mskip1.5mu]\;{}\<[31]%
\>[31]{}\mathrel{=}\;\Varid{⊤}{}\<[E]%
\\
\>[B]{}\Varid{R}_{\Varid{Env}}\;\Varid{h}\;[\mskip1.5mu \mskip1.5mu]\;{}\<[20]%
\>[20]{}(\Varid{x₂}\;\Varid{∷}\;\Varid{e₂})\;{}\<[31]%
\>[31]{}\mathrel{=}\;\Varid{⊥}{}\<[E]%
\\
\>[B]{}\Varid{R}_{\Varid{Env}}\;\Varid{h}\;(\Varid{x₁}\;\Varid{∷}\;\Varid{e₁})\;{}\<[20]%
\>[20]{}[\mskip1.5mu \mskip1.5mu]\;{}\<[31]%
\>[31]{}\mathrel{=}\;\Varid{⊥}{}\<[E]%
\\
\>[B]{}\Varid{R}_{\Varid{Env}}\;\Varid{h}\;(\Varid{x₁}\;\Varid{∷}\;\Varid{e₁})\;{}\<[20]%
\>[20]{}(\Varid{x₂}\;\Varid{∷}\;\Varid{e₂})\;{}\<[31]%
\>[31]{}\mathrel{=}\;\Varid{R}_{\Varid{Val}}\;\Varid{h}\;\Varid{x₁}\;\Varid{x₂}\;\Varid{×}\;\Varid{R}_{\Varid{Env}}\;\Varid{h}\;\Varid{e₁}\;\Varid{e₂}{}\<[E]%
\ColumnHook
\end{hscode}\resethooks
Note that we use {\textsmaller[.5]{\ensuremath{\Varid{⊤}}}} and {\textsmaller[.5]{\ensuremath{\Varid{⊥}}}} to represent true and false, represented in Agda
by the unit type and the uninhabited type.
The relation on stacks, {\textsmaller[.5]{\ensuremath{\Varid{R}_{\Varid{Stack}}}}} is defined similarly, using {\textsmaller[.5]{\ensuremath{\Varid{R}_{\Varid{Val}}}}} and
{\textsmaller[.5]{\ensuremath{\Varid{R}_{\Varid{Clos}}}}} for values and continuations.
\iffullversion
\begin{hscode}\SaveRestoreHook
\column{B}{@{}>{\hspre}l<{\hspost}@{}}%
\column{16}{@{}>{\hspre}l<{\hspost}@{}}%
\column{23}{@{}>{\hspre}l<{\hspost}@{}}%
\column{26}{@{}>{\hspre}l<{\hspost}@{}}%
\column{34}{@{}>{\hspre}l<{\hspost}@{}}%
\column{35}{@{}>{\hspre}l<{\hspost}@{}}%
\column{37}{@{}>{\hspre}l<{\hspost}@{}}%
\column{E}{@{}>{\hspre}l<{\hspost}@{}}%
\>[B]{}\Varid{R}_{\Varid{StackElem}}\;\mathbin{:}\;{}\<[16]%
\>[16]{}\Conid{ClosHeap}\;\Varid{→}\;\Conid{Rel}\;\Conid{CES.StackElem}\;\Conid{CESH.StackElem}{}\<[E]%
\\
\>[B]{}\Varid{R}_{\Varid{StackElem}}\;\Varid{h}\;(\MyConid{val}\;\Varid{v₁})\;{}\<[26]%
\>[26]{}(\MyConid{val}\;\Varid{v₂})\;{}\<[37]%
\>[37]{}\mathrel{=}\;\Varid{R}_{\Varid{Val}}\;\Varid{h}\;\Varid{v₁}\;\Varid{v₂}{}\<[E]%
\\
\>[B]{}\Varid{R}_{\Varid{StackElem}}\;\Varid{h}\;(\MyConid{val}\;\anonymous )\;{}\<[26]%
\>[26]{}(\MyConid{cont}\;\anonymous )\;{}\<[37]%
\>[37]{}\mathrel{=}\;\Varid{⊥}{}\<[E]%
\\
\>[B]{}\Varid{R}_{\Varid{StackElem}}\;\Varid{h}\;(\MyConid{cont}\;\anonymous )\;{}\<[26]%
\>[26]{}(\MyConid{val}\;\anonymous )\;{}\<[37]%
\>[37]{}\mathrel{=}\;\Varid{⊥}{}\<[E]%
\\
\>[B]{}\Varid{R}_{\Varid{StackElem}}\;\Varid{h}\;(\MyConid{cont}\;\Varid{c₁})\;{}\<[26]%
\>[26]{}(\MyConid{cont}\;\Varid{c₂})\;{}\<[37]%
\>[37]{}\mathrel{=}\;\Varid{R}_{\Varid{Clos}}\;\Varid{h}\;\Varid{c₁}\;\Varid{c₂}{}\<[E]%
\\
\>[B]{}\Varid{R}_{\Varid{Stack}}\;\mathbin{:}\;\Conid{ClosHeap}\;\Varid{→}\;\Conid{Rel}\;\Conid{CES.Stack}\;\Conid{CESH.Stack}{}\<[E]%
\\
\>[B]{}\Varid{R}_{\Varid{Stack}}\;\Varid{h}\;[\mskip1.5mu \mskip1.5mu]\;{}\<[23]%
\>[23]{}[\mskip1.5mu \mskip1.5mu]\;{}\<[35]%
\>[35]{}\mathrel{=}\;\Varid{⊤}{}\<[E]%
\\
\>[B]{}\Varid{R}_{\Varid{Stack}}\;\Varid{h}\;[\mskip1.5mu \mskip1.5mu]\;{}\<[23]%
\>[23]{}(\Varid{x₂}\;\Varid{∷}\;\Varid{s₂})\;{}\<[35]%
\>[35]{}\mathrel{=}\;\Varid{⊥}{}\<[E]%
\\
\>[B]{}\Varid{R}_{\Varid{Stack}}\;\Varid{h}\;(\Varid{x₁}\;\Varid{∷}\;\Varid{s₁})\;{}\<[23]%
\>[23]{}[\mskip1.5mu \mskip1.5mu]\;{}\<[35]%
\>[35]{}\mathrel{=}\;\Varid{⊥}{}\<[E]%
\\
\>[B]{}\Varid{R}_{\Varid{Stack}}\;\Varid{h}\;(\Varid{x₁}\;\Varid{∷}\;\Varid{s₁})\;{}\<[23]%
\>[23]{}(\Varid{x₂}\;\Varid{∷}\;\Varid{s₂})\;{}\<[34]%
\>[34]{}\mathrel{=}\;\Varid{R}_{\Varid{StackElem}}\;\Varid{h}\;\Varid{x₁}\;\Varid{x₂}\;\Varid{×}\;\Varid{R}_{\Varid{Stack}}\;\Varid{h}\;\Varid{s₁}\;\Varid{s₂}{}\<[E]%
\ColumnHook
\end{hscode}\resethooks
\fi

Finally, two configurations are related if their components are
related. Here we pass the heap of the CESH configuration as an
argument to the environment and stack relations.
\begin{hscode}\SaveRestoreHook
\column{B}{@{}>{\hspre}l<{\hspost}@{}}%
\column{3}{@{}>{\hspre}l<{\hspost}@{}}%
\column{E}{@{}>{\hspre}l<{\hspost}@{}}%
\>[B]{}\Varid{R}_{\Varid{Cfg}}\;\mathbin{:}\;\Conid{Rel}\;\Conid{CES.Config}\;\Conid{CESH.Config}{}\<[E]%
\\
\>[B]{}\Varid{R}_{\Varid{Cfg}}\;(\Varid{c₁},\Varid{e₁},\Varid{s₁})\;(\Varid{c₂},\Varid{e₂},\Varid{s₂},\Varid{h₂})\;\mathrel{=}\;{}\<[E]%
\\
\>[B]{}\hsindent{3}{}\<[3]%
\>[3]{}\Varid{R}_{\Varid{Code}}\;\Varid{c₁}\;\Varid{c₂}\;\Varid{×}\;\Varid{R}_{\Varid{Env}}\;\Varid{h₂}\;\Varid{e₁}\;\Varid{e₂}\;\Varid{×}\;\Varid{R}_{\Varid{Stack}}\;\Varid{h₂}\;\Varid{s₁}\;\Varid{s₂}{}\<[E]%
\ColumnHook
\end{hscode}\resethooks

\ifnotfullversion
We have not explicitly defined heaps other than mentioning the
allocation and dereferencing operations.  This is deliberate. In the
formalisation we define heaps and their properties \emph{abstractly},
meaning that we do not rely on any specific heap implementation in our
proofs.  One property that any reasonable heap must have is that
dereferencing a pointer in a heap where that pointer was just
allocated with a value gives back the same value:
{\textsmaller[.5]{\ensuremath{\Varid{∀}\;\Varid{h}\;\Varid{x}\;\Varid{→}\;\Keyword{let}\;(\Varid{h'},\Varid{ptr})\;\mathrel{=}\;\Varid{h}\;\Varid{▸}\;\Varid{x}\;\Keyword{in}\;\Varid{h'}\;\mathbin{!}\;\Varid{ptr}\;\Varid{≡}\;\MyConid{just}\;\Varid{x}}}}.

We define a preorder {\textsmaller[.5]{\ensuremath{\Varid{⊆}}}} for \emph{sub-heaps}. The intuitive reading
for {\textsmaller[.5]{\ensuremath{\Varid{h}\;\Varid{⊆}\;\Varid{h'}}}} is that {\textsmaller[.5]{\ensuremath{\Varid{h'}}}} can be used where {\textsmaller[.5]{\ensuremath{\Varid{h}}}} can, i.e. that {\textsmaller[.5]{\ensuremath{\Varid{h'}}}}
contains at least the allocations of {\textsmaller[.5]{\ensuremath{\Varid{h}}}}.
\begin{hscode}\SaveRestoreHook
\column{B}{@{}>{\hspre}l<{\hspost}@{}}%
\column{23}{@{}>{\hspre}l<{\hspost}@{}}%
\column{27}{@{}>{\hspre}l<{\hspost}@{}}%
\column{E}{@{}>{\hspre}l<{\hspost}@{}}%
\>[B]{}\Varid{h}\;\Varid{⊆}\;\Varid{h'}\;\mathrel{=}\;\Varid{∀}\;\Varid{ptr}\;\{\mskip1.5mu \Varid{x}\mskip1.5mu\}\;\Varid{→}\;{}\<[23]%
\>[23]{}\Varid{h}\;{}\<[27]%
\>[27]{}\mathbin{!}\;\Varid{ptr}\;\Varid{≡}\;\MyConid{just}\;\Varid{x}\;\Varid{→}\;{}\<[E]%
\\
\>[23]{}\Varid{h'}\;{}\<[27]%
\>[27]{}\mathbin{!}\;\Varid{ptr}\;\Varid{≡}\;\MyConid{just}\;\Varid{x}{}\<[E]%
\ColumnHook
\end{hscode}\resethooks
The second property that we require of a heap implementation is that
allocation does not overwrite any previously allocated memory cells
({\textsmaller[.5]{\ensuremath{\Varid{proj₁}}}} means first projection):
{\textsmaller[.5]{\ensuremath{\Varid{∀}\;\Varid{h}\;\Varid{x}\;\Varid{→}\;\Varid{h}\;\Varid{⊆}\;\Varid{proj₁}\;(\Varid{h}\;\Varid{▸}\;\Varid{x})}}}.
\fi

\begin{lemma}[{\textsmaller[.5]{\ensuremath{\Conid{HeapUpdate.config}}}}]
Given two heaps {\textsmaller[.5]{\ensuremath{\Varid{h}}}} and {\textsmaller[.5]{\ensuremath{\Varid{h'}}}} such that {\textsmaller[.5]{\ensuremath{\Varid{h}\;\Varid{⊆}\;\Varid{h'}}}},
if {\textsmaller[.5]{\ensuremath{\Varid{R}_{\Varid{Cfg}}\;\Varid{cfg}\;(\Varid{c},\Varid{e},\Varid{s},\Varid{h})}}}, then {\textsmaller[.5]{\ensuremath{\Varid{R}_{\Varid{Cfg}}\;\Varid{cfg}\;(\Varid{c},\Varid{e},\Varid{s},\Varid{h'})}}}.
\iffullversion
\footnote{In the actual implementation,
this is inside a local module {\textsmaller[.5]{\ensuremath{\Conid{HeapUpdate}}}}, parameterised by {\textsmaller[.5]{\ensuremath{\Varid{h}}}} and {\textsmaller[.5]{\ensuremath{\Varid{h'}}}} and their relation,
together with similar lemmas for the constituents of the machine configurations.}
\indentcolumn{2}
\begin{hscode}\SaveRestoreHook
\column{B}{@{}>{\hspre}l<{\hspost}@{}}%
\column{3}{@{}>{\hspre}l<{\hspost}@{}}%
\column{27}{@{}>{\hspre}l<{\hspost}@{}}%
\column{E}{@{}>{\hspre}l<{\hspost}@{}}%
\>[3]{}\Varid{config}\;\mathbin{:}\;\Varid{∀}\;\Varid{cfg}\;\Varid{c}\;\Varid{e}\;\Varid{s}\;\Varid{→}\;{}\<[27]%
\>[27]{}\Varid{R}_{\Varid{Cfg}}\;\Varid{cfg}\;(\Varid{c},\Varid{e},\Varid{s},\Varid{h})\;\Varid{→}\;\Varid{R}_{\Varid{Cfg}}\;\Varid{cfg}\;(\Varid{c},\Varid{e},\Varid{s},\Varid{h'}){}\<[E]%
\ColumnHook
\end{hscode}\resethooks
\fi
\end{lemma}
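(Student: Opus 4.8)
The plan is to prove this by a mutual structural recursion on the CES-side components of the configuration, following exactly the layered structure of the relation $R_{\mathit{Cfg}}$. Working inside the module parameterised by $h$, $h'$ and a proof of $h \subseteq h'$, I would first record the base fact about pointers --- if $h\,!\,\mathit{ptr} \equiv \mathsf{just}\ c$ then $h'\,!\,\mathit{ptr} \equiv \mathsf{just}\ c$, which is nothing but an instance of $h \subseteq h'$ --- and then establish, by one big mutual recursion, weakening lemmas for values, closures, environments, stack elements and stacks, each of the shape ``$R_?\ h\ x\ y$ implies $R_?\ h'\ x\ y$''. The lemma \texttt{config} is then immediate: $R_{\mathit{Cfg}}\ \mathit{cfg}\ (c,e,s,h)$ unfolds to a conjunction of $R_{\mathit{Code}}$, which never mentions the heap, together with $R_{\mathit{Env}}\ h$ and $R_{\mathit{Stack}}\ h$, so we keep the code component untouched and apply the environment- and stack-weakening lemmas to the other two.

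The only case that does any real work is the closure/pointer case of the value relation. There $R_{\mathit{Val}}\ h\ (\mathsf{clos}\ c_1)\ (\mathsf{clos}\ \mathit{ptr})$ unfolds to the existence of some $c_2$ with $h\,!\,\mathit{ptr} \equiv \mathsf{just}\ c_2$ and $R_{\mathit{Clos}}\ h\ c_1\ c_2$; I transport the lookup equation along $h \subseteq h'$ to get $h'\,!\,\mathit{ptr} \equiv \mathsf{just}\ c_2$ with the \emph{same} $c_2$, and recursively apply the closure-weakening lemma to the residual proof $R_{\mathit{Clos}}\ h\ c_1\ c_2$. Everything else is administrative: the number-literal cases are heap-free equalities; the off-diagonal cases (a \textsf{nat} paired with a \textsf{clos}, a \textsf{val} paired with a \textsf{cont}, or two lists with mismatched spines) are uninhabited and discharged by absurd patterns; and the remaining cases are congruences that simply thread the appropriate recursive call through a pair or a list cons. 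The closure lemma splits a closure into its code half (handled by reflexivity of $R_{\mathit{Code}}$) and its environment half (handled by the environment lemma), and the environment, stack-element and stack lemmas recurse pointwise.

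The subtle point --- and the thing to get right so that Agda accepts the definition --- is termination. In the closure/pointer case the recursive call is made on $c_2$, which is fished out of the heap rather than being a syntactic subterm of the given value, so the recursion is \emph{not} structural on the CESH side. It is structural on the CES side, however: the recursive calls go from the CES value $\mathsf{clos}\ c_1$ to the CES closure $c_1$, from a CES closure $(c,e)$ to its CES environment $e$, and from a CES environment $x \mathbin{::} e$ to its head value $x$ and tail $e$; and since the CES configuration carries no heap, these CES closures, environments and values form an ordinary finite inductive structure on which structural recursion is well founded. So the recipe is to let the first (CES) argument drive all the pattern matching and all the recursive calls, with the second (CESH) argument and the looked-up closure merely carried along. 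I expect this to be the only place where any thought is required; the rest is mechanical case analysis mirroring the definition of $R_{\mathit{Cfg}}$.
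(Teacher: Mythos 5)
Your proposal matches the paper's proof: the formalisation defines exactly this mutual family of weakening lemmas (\textsf{value}, \textsf{closure}, \textsf{env}, \textsf{stackelem}, \textsf{stack}) inside the \textsf{HeapUpdate} module, transports the pointer lookup along $h \subseteq h'$ (keeping the same looked-up closure) in the \textsf{clos}/\textsf{clos} case, recurses structurally on the CES-side component throughout, and assembles \textsf{config} as the trivial combination of the code, environment and stack lemmas. Your remark about termination being driven by the CES argument rather than the heap-fetched CESH closure is precisely the reason the paper's definition is accepted by Agda's termination checker.
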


\begin{theorem}[{\textsmaller[.5]{\ensuremath{\Varid{simulation}}}}]
{\textsmaller[.5]{\ensuremath{\Varid{R}_{\Varid{Cfg}}}}} is a simulation relation.
\iffullversion
\begin{hscode}\SaveRestoreHook
\column{B}{@{}>{\hspre}l<{\hspost}@{}}%
\column{E}{@{}>{\hspre}l<{\hspost}@{}}%
\>[B]{}\Varid{simulation}\;\mathbin{:}\;\Conid{Simulation}\;{\dummy\xrightarrow[\Varid{CES}]{}\dummy}\;{\dummy\xrightarrow[\Varid{CESH}]{}\dummy}\;\Varid{R}_{\Varid{Cfg}}{}\<[E]%
\ColumnHook
\end{hscode}\resethooks
where
\begin{hscode}\SaveRestoreHook
\column{B}{@{}>{\hspre}l<{\hspost}@{}}%
\column{3}{@{}>{\hspre}l<{\hspost}@{}}%
\column{E}{@{}>{\hspre}l<{\hspost}@{}}%
\>[B]{}\Conid{Simulation}\;\Varid{\char95 ⟶\char95 }\;\Varid{\char95 ⟶'\char95 }\;\Varid{\char95 R\char95 }\;\mathrel{=}\;\Varid{∀}\;\Varid{a}\;\Varid{a'}\;\Varid{b}\;\Varid{→}\;{}\<[E]%
\\
\>[B]{}\hsindent{3}{}\<[3]%
\>[3]{}\Varid{a}\;\Varid{⟶}\;\Varid{a'}\;\Varid{→}\;\Varid{a}\;\Conid{R}\;\Varid{b}\;\Varid{→}\;\Varid{∃}\;\Varid{λ}\;\Varid{b'}\;\Varid{→}\;\Varid{b}\;\Varid{⟶'}\;\Varid{b'}\;\Varid{×}\;\Varid{a'}\;\Conid{R}\;\Varid{b'}{}\<[E]%
\ColumnHook
\end{hscode}\resethooks
\fi
\end{theorem}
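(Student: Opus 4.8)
The plan is to prove \emph{simulation} by case analysis on the derivation of the CES step $a \to a'$. The statement requires, given $a \to_{CES} a'$ and $a \mathrel{R_{Cfg}} b$, only that we exhibit \emph{one} matching CESH step $b \to_{CESH} b'$ with $a' \mathrel{R_{Cfg}} b'$, and each of the eight CES rules (\texttt{VAR}, \texttt{CLOS}, \texttt{APPL}, \texttt{RET}, \texttt{LIT}, \texttt{OP}, \texttt{COND-0}, \texttt{COND-1+n}) is matched by exactly one application of the correspondingly-named CESH rule, so it suffices to treat the eight rules in turn. In every case the hypothesis $a \mathrel{R_{Cfg}} b$ forces $b$ to have the form $(c, e_2, s_2, h_2)$, where $c$ is \emph{literally} the same code fragment as in $a$ (because $R_{Code}$ is propositional equality), while $e_2$ and $s_2$ are related to the CES environment and stack via $R_{Env}\,h_2$ and $R_{Stack}\,h_2$. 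So the first move in each case is to pattern-match on the code component to expose the head instruction --- which determines the CESH rule to fire --- and then to inspect the shapes of $s_2$ and $e_2$, as constrained by the relational hypotheses, in order to discharge that rule's side-conditions.

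For the rules that leave the heap untouched (\texttt{RET}, \texttt{LIT}, \texttt{OP}, \texttt{COND-0}, \texttt{COND-1+n}, and also \texttt{APPL}) re-establishing $R_{Cfg}$ is immediate, since the resulting configurations differ from the originals only in ways that visibly preserve the component relations. For \texttt{LIT}, \texttt{OP} and the two \texttt{COND} rules the relevant values are natural-number literals, which $R_{Val}$ identifies up to equality; for \texttt{RET} the continuation popped off $s_2$ is, by $R_{Stack}$, a CESH closure related to the CES continuation, so restoring its code and environment yields a related configuration; and for \texttt{APPL} the closure value on $s_2$ is, by $R_{Val}\,h_2$ on closures, backed by a pointer that dereferences in $h_2$ to a CESH closure $R_{Clos}\,h_2$-related to the CES closure --- which both supplies the dereferencing premise of the CESH \texttt{APPL} rule and makes the extended environment and the pushed continuation related. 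The \texttt{VAR} case needs one small auxiliary fact, proved by induction on the environment (equivalently on the de Bruijn index): if $R_{Env}\,h\,e\,e'$ and $\mathrm{lookup}\,n\,e = \mathrm{just}\,v$, then $\mathrm{lookup}\,n\,e' = \mathrm{just}\,v'$ for some $v'$ with $R_{Val}\,h\,v\,v'$.

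The only case with real content is \texttt{CLOS}, where the CESH machine \emph{allocates}: it produces a pair $(h_2', \mathit{ptr})$ by allocating the closure $(c', e_2)$ in $h_2$, and pushes $\mathrm{clos}\,\mathit{ptr}$ onto the stack. To re-establish $R_{Cfg}$ at the successor configuration I must (i) show that this fresh value is $R_{Val}\,h_2'$-related to the CES closure $\mathrm{clos}\,(c', e_1)$ --- which follows from the heap axiom that dereferencing a just-allocated pointer returns the stored value (so $\mathit{ptr}$ dereferences in $h_2'$ to $(c', e_2)$), together with reflexivity of $R_{Code}$ on $c'$ and the old relation $R_{Env}\,h_2\,e_1\,e_2$ re-interpreted over $h_2'$; and (ii) re-prove that the unchanged environment and stack, previously related over $h_2$, remain related over the \emph{enlarged} heap $h_2'$. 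Both parts reduce to monotonicity of every component relation along the sub-heap preorder: $h_2 \subseteq h_2'$ holds by the heap axiom that allocation never overwrites a previously allocated cell, and then the \texttt{HeapUpdate} lemmas --- in particular \texttt{HeapUpdate.config} --- state precisely that $R_{Cfg}$, and likewise $R_{Env}$, $R_{Stack}$, $R_{Val}$ and $R_{Clos}$, are preserved when passing to a super-heap. This heap bookkeeping --- systematically refreshing every heap parameter buried inside the relations whenever a closure is allocated --- is the main obstacle; it is the reason the \texttt{HeapUpdate} lemmas are set up in advance, and with them in hand the \texttt{CLOS} case, and thus the whole theorem, goes through mechanically.
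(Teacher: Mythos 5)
Your proposal is correct and follows essentially the same route as the paper: a case analysis on the CES transition, matching each rule with its CESH counterpart, with an auxiliary lookup lemma on related environments for \textsf{VAR}, the closure-pointer component of $R_{Val}$ supplying the dereferencing premise for \textsf{APPL}, and the \textsf{CLOS} case discharged via the heap axioms (dereferencing a fresh allocation, and allocation not overwriting old cells) together with the \textsf{HeapUpdate} monotonicity lemmas. This is exactly the structure of the paper's (much terser) proof, so nothing further is needed.
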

\begin{proof}
By cases on the {\textsmaller[.5]{\ensuremath{\Conid{CES}}}} transition. In each case, the {\textsmaller[.5]{\ensuremath{\Conid{CESH}}}}
machine can make analogous transitions. Use {\textsmaller[.5]{\ensuremath{\Conid{HeapUpdate.config}}}} to
show that {\textsmaller[.5]{\ensuremath{\Varid{R}_{\Varid{Cfg}}}}} is preserved.
\end{proof}

We call a relation a \emph{presimulation} if it is almost, but not
quite, a simulation:

\begin{hscode}\SaveRestoreHook
\column{B}{@{}>{\hspre}l<{\hspost}@{}}%
\column{3}{@{}>{\hspre}l<{\hspost}@{}}%
\column{E}{@{}>{\hspre}l<{\hspost}@{}}%
\>[B]{}\Conid{Presimulation}\;\Varid{\char95 ⟶\char95 }\;\Varid{\char95 ⟶'\char95 }\;\Varid{\char95 R\char95 }\;\mathrel{=}\;\Varid{∀}\;\Varid{a}\;\Varid{a'}\;\Varid{b}\;\Varid{→}\;{}\<[E]%
\\
\>[B]{}\hsindent{3}{}\<[3]%
\>[3]{}\Varid{a}\;\Varid{⟶}\;\Varid{a'}\;\Varid{→}\;\Varid{a}\;\Conid{R}\;\Varid{b}\;\Varid{→}\;\Varid{∃}\;\Varid{λ}\;\Varid{b'}\;\Varid{→}\;\Varid{b}\;\Varid{⟶'}\;\Varid{b'}{}\<[E]%
\ColumnHook
\end{hscode}\resethooks

\begin{theorem}[{\textsmaller[.5]{\ensuremath{\Varid{presimulation}}}}]
The inverse of {\textsmaller[.5]{\ensuremath{\Varid{R}_{\Varid{Cfg}}}}} is a presimulation.
\iffullversion
\begin{hscode}\SaveRestoreHook
\column{B}{@{}>{\hspre}l<{\hspost}@{}}%
\column{32}{@{}>{\hspre}l<{\hspost}@{}}%
\column{E}{@{}>{\hspre}l<{\hspost}@{}}%
\>[B]{}\Varid{presimulation}\;\mathbin{:}\;\Conid{Presimulation}\;{}\<[32]%
\>[32]{}{\dummy\xrightarrow[\Varid{CESH}]{}\dummy}\;{\dummy\xrightarrow[\Varid{CES}]{}\dummy}\;{}\<[E]%
\\
\>[32]{}(\Varid{R}_{\Varid{Cfg}}\;\Varid{⁻¹}){}\<[E]%
\ColumnHook
\end{hscode}\resethooks
\fi
\end{theorem}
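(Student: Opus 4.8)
The plan is to prove this by case analysis on the CESH transition \(\Varid{a}\xrightarrow[\Varid{CESH}]{}\Varid{a'}\), mirroring the proof of \(\Varid{simulation}\) but run backwards; it should be considerably easier, because a presimulation asks for no relation between \(\Varid{a'}\) and the CES target, so we never have to reconstruct \(R_{\Varid{Cfg}}\). Unfolding the hypothesis --- \(\Varid{a}\;(R_{\Varid{Cfg}}^{-1})\;\Varid{b}\) just means \(R_{\Varid{Cfg}}\;\Varid{b}\;\Varid{a}\), with \(\Varid{b}\) the CES configuration --- we learn that the code component of \(\Varid{b}\) is literally that of \(\Varid{a}\), and that the environments and stacks are related by \(R_{\Varid{Env}}\) and \(R_{\Varid{Stack}}\). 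Since those relations are ``almost equality'' (closures replaced by pointers), each CESH rule constrains \(\Varid{b}\) enough for the matching CES rule to fire, and it suffices to exhibit that rule.

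First I would dispatch the purely code-driven cases: \(\MyConid{CLOS}\) and \(\MyConid{LIT}\) need only the leading instruction, and the two \(\MyConid{COND}\) rules additionally need a numeric literal on top of the stack, which holds on the CES side because \(R_{\Varid{Val}}\) relates \(\MyConid{nat}\;\Varid{n}\) only to \(\MyConid{nat}\;\Varid{n}\). For \(\MyConid{APPL}\), the CESH stack carries a value above a \(\MyConid{clos}\) pointer; inverting \(R_{\Varid{Stack}}\) and \(R_{\Varid{Val}}\) --- its mixed \(\MyConid{nat}\)/\(\MyConid{clos}\) cases are uninhabited --- forces the CES stack to carry a value above an \emph{actual} closure, which is exactly the premise of CES's \(\MyConid{APPL}\); note that, unlike in the forward direction, no heap dereference has to be supplied. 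The \(\MyConid{RET}\) and \(\MyConid{OP}\) cases are the same style of inversion: a \(\MyConid{cont}\) on the CESH stack is related by \(R_{\Varid{Clos}}\) only to a \(\MyConid{cont}\) carrying the same code, so the CES machine is poised to return; and two \(\MyConid{nat}\) literals on the CESH stack force two on the CES stack.

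The one case that needs more than a mechanical inversion --- and the part I expect to be the main obstacle, modest though it is --- is \(\MyConid{VAR}\): the CESH step carries a witness \(\Varid{lookup}\;\Varid{n}\;\Varid{e}\;\equiv\;\MyConid{just}\;\Varid{v}\) for its environment, and to fire CES's \(\MyConid{VAR}\) rule I need \emph{some} \(\Varid{v'}\) with \(\Varid{lookup}\;\Varid{n}\;\Varid{e'}\;\equiv\;\MyConid{just}\;\Varid{v'}\) in the related CES environment \(\Varid{e'}\). I would isolate an auxiliary lemma --- if \(R_{\Varid{Env}}\;\Varid{h}\;\Varid{e'}\;\Varid{e}\) and \(\Varid{lookup}\;\Varid{n}\;\Varid{e}\) succeeds, then \(\Varid{lookup}\;\Varid{n}\;\Varid{e'}\) succeeds --- proved by induction on \(\Varid{n}\) together with \(\Varid{e}\) and \(\Varid{e'}\), the point being that \(R_{\Varid{Env}}\) rules out a spine mismatch (its nil/cons cases are uninhabited). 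This lemma, together with the small amount of bookkeeping of how each constructor of \(R_{\Varid{Stack}}\) and \(R_{\Varid{Val}}\) pins down the CES side, is the only mildly delicate ingredient; the rest is routine. Finally I would note that, combined with \(\Varid{simulation}\) and the determinism of both machines, this presimulation is precisely what is needed to conclude that the CES and CESH machines agree on termination and on divergence.
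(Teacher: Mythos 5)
Your proposal matches the paper's proof: the formalisation proceeds by case analysis on the CESH transition, refutes the impossible stack/value shapes by inverting \(R_{\Varid{Stack}}\) and \(R_{\Varid{Val}}\), and isolates exactly your auxiliary lemma (called \(\Varid{lookup-var'}\) there) showing that a successful lookup in the CESH environment forces a successful lookup in the related CES environment. No gaps; this is essentially the same argument.
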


\begin{lemma}[{\textsmaller[.5]{\ensuremath{\Varid{presimulation-to-simulation}}}}]
If {\textsmaller[.5]{\ensuremath{\Conid{R}}}} is a simulation between relations {\textsmaller[.5]{\ensuremath{\Varid{⟶}}}} and {\textsmaller[.5]{\ensuremath{\Varid{⟶'}}}}, {\textsmaller[.5]{\ensuremath{\Conid{R}\;\Varid{⁻¹}}}} is a presimulation, and
{\textsmaller[.5]{\ensuremath{\Varid{⟶'}}}} is deterministic at states {\textsmaller[.5]{\ensuremath{\Varid{b}}}} related to some {\textsmaller[.5]{\ensuremath{\Varid{a}}}}, then
{\textsmaller[.5]{\ensuremath{\Conid{R}\;\Varid{⁻¹}}}} is a simulation.
\iffullversion
\begin{hscode}\SaveRestoreHook
\column{B}{@{}>{\hspre}l<{\hspost}@{}}%
\column{3}{@{}>{\hspre}l<{\hspost}@{}}%
\column{E}{@{}>{\hspre}l<{\hspost}@{}}%
\>[B]{}\Varid{presimulation-to-simulation}\;\mathbin{:}\;(\Varid{\char95 R\char95 }\;\mathbin{:}\;\Conid{Rel}\;\Conid{A}\;\Conid{B})\;{}\<[E]%
\\
\>[B]{}\hsindent{3}{}\<[3]%
\>[3]{}\Varid{→}\;\Conid{Simulation}\;\Varid{⟶}\;\Varid{⟶'}\;\Varid{\char95 R\char95 }\;{}\<[E]%
\\
\>[B]{}\hsindent{3}{}\<[3]%
\>[3]{}\Varid{→}\;\Conid{Presimulation}\;\Varid{⟶'}\;\Varid{⟶}\;(\Varid{\char95 R\char95 }\;\Varid{⁻¹})\;{}\<[E]%
\\
\>[B]{}\hsindent{3}{}\<[3]%
\>[3]{}\Varid{→}\;(\Varid{∀}\;\Varid{a}\;\Varid{b}\;\Varid{→}\;\Varid{a}\;\Conid{R}\;\Varid{b}\;\Varid{→}\;\Varid{⟶'}\;\Varid{is-deterministic-at}\;\Varid{b})\;{}\<[E]%
\\
\>[B]{}\hsindent{3}{}\<[3]%
\>[3]{}\Varid{→}\;\Conid{Simulation}\;\Varid{⟶'}\;\Varid{⟶}\;(\Varid{\char95 R\char95 }\;\Varid{⁻¹}){}\<[E]%
\ColumnHook
\end{hscode}\resethooks
where {\textsmaller[.5]{\ensuremath{\Varid{\char95 is-deterministic-at}}}} is a weaker form of determinism:
\begin{hscode}\SaveRestoreHook
\column{B}{@{}>{\hspre}l<{\hspost}@{}}%
\column{E}{@{}>{\hspre}l<{\hspost}@{}}%
\>[B]{}\Varid{\char95 is-deterministic-at\char95 }\;\mathbin{:}\;(\Conid{R}\;\mathbin{:}\;\Conid{Rel}\;\Conid{A}\;\Conid{B})\;\Varid{→}\;\Conid{A}\;\Varid{→}\;\star{}\<[E]%
\\
\>[B]{}\Varid{\char95 R\char95 }\;\Varid{is-deterministic-at}\;\Varid{a}\;\mathrel{=}\;\Varid{∀}\;\{\mskip1.5mu \Varid{b}\;\Varid{b'}\mskip1.5mu\}\;\Varid{→}\;\Varid{a}\;\Conid{R}\;\Varid{b}\;\Varid{→}\;\Varid{a}\;\Conid{R}\;\Varid{b'}\;\Varid{→}\;\Varid{b}\;\Varid{≡}\;\Varid{b'}{}\<[E]%
\ColumnHook
\end{hscode}\resethooks
\fi
\end{lemma}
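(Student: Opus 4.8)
The plan is a short diagram chase that wires the three hypotheses together. Unfolding the goal \texttt{Simulation} ${\to'}$ ${\to}$ $(R^{-1})$: we are given a transition $a \to' a'$ of the second system and a proof of $a\,(R^{-1})\,b$, which is by definition $b\,R\,a$, and we must exhibit some $b'$ with $b \to b'$ and $a'\,(R^{-1})\,b'$, i.e.\ $b'\,R\,a'$.

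First I would apply the presimulation hypothesis (for $R^{-1}$) to $a \to' a'$ and $b\,R\,a$. This yields a transition $b \to b_0$ of the first system; this $b_0$ will be exactly the witness we return, so all that remains is to re-establish the relation at the targets. Next I would apply the (forward) simulation hypothesis to $b \to b_0$ together with $b\,R\,a$; since $R$ is a genuine simulation from ${\to}$ to ${\to'}$ this produces some $a_0$ with $a \to' a_0$ and $b_0\,R\,a_0$. Now there are two transitions out of $a$ in the second system, namely the original $a \to' a'$ and the newly obtained $a \to' a_0$. Instantiating the determinism hypothesis at the pair $b\,R\,a$ gives that $\to'$ is deterministic at $a$, whence $a' \equiv a_0$. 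Transporting $b_0\,R\,a_0$ along the symmetric of this equality gives $b_0\,R\,a'$, and I would return the triple consisting of $b_0$, the transition $b \to b_0$, and this fact reread as $a'\,(R^{-1})\,b_0$.

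There is no genuine obstacle here; the argument is essentially forced once the pieces are lined up. The only points requiring care are bookkeeping: tracking that $a\,(R^{-1})\,b$ unfolds to $b\,R\,a$, and instantiating the determinism hypothesis with the first-system state in the role of its $a$ argument and the second-system state in the role of its $b$ argument (its \texttt{is-deterministic-at} conclusion concerns the latter). It is worth observing that only determinism \emph{at states that occur on the related side} is ever invoked — precisely the source of the two competing $\to'$-transitions — which is why the statement uses the weaker \texttt{is-deterministic-at} rather than full determinism; in the intended instantiation this obligation is discharged directly from the determinism lemma for the CESH machine, which in fact establishes the stronger global property.
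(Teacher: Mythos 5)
Your argument is correct and is exactly the intended one: the paper gives no prose proof of this lemma (it lives only in the Agda formalisation), but the three hypotheses force precisely your diagram chase — presimulation supplies the candidate $\to$-transition, the forward simulation produces a competing $\to'$-target, and \textsf{is-deterministic-at} identifies it with the given one. Your remark about which argument of the determinism hypothesis plays which role is the only real bookkeeping hazard, and you have it right.
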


\begin{theorem}[{\textsmaller[.5]{\ensuremath{\Varid{bisimulation}}}}]
{\textsmaller[.5]{\ensuremath{\Varid{R}_{\Varid{Cfg}}}}} is a bisimulation.
\iffullversion
\begin{hscode}\SaveRestoreHook
\column{B}{@{}>{\hspre}l<{\hspost}@{}}%
\column{E}{@{}>{\hspre}l<{\hspost}@{}}%
\>[B]{}\Varid{bisimulation}\;\mathbin{:}\;\Conid{Bisimulation}\;{\dummy\xrightarrow[\Varid{CES}]{}\dummy}\;{\dummy\xrightarrow[\Varid{CESH}]{}\dummy}\;\Varid{R}_{\Varid{Cfg}}{}\<[E]%
\ColumnHook
\end{hscode}\resethooks
where
\begin{hscode}\SaveRestoreHook
\column{B}{@{}>{\hspre}l<{\hspost}@{}}%
\column{E}{@{}>{\hspre}l<{\hspost}@{}}%
\>[B]{}\Conid{Bisimulation}\;\Varid{⟶}\;\Varid{⟶'}\;\Conid{R}\;\mathrel{=}\;\Conid{Simulation}\;\Varid{⟶}\;\Varid{⟶'}\;\Conid{R}\;\Varid{×}\;\Conid{Simulation}\;\Varid{⟶'}\;\Varid{⟶}\;(\Conid{R}\;\Varid{⁻¹}){}\<[E]%
\ColumnHook
\end{hscode}\resethooks
\fi
\end{theorem}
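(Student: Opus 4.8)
The plan is to obtain the bisimulation by packaging together the three results already established. Unfolding \emph{Bisimulation}, what is required is a pair: a simulation from the CES machine to the CESH machine with respect to $R_{Cfg}$, and a simulation from the CESH machine to the CES machine with respect to $R_{Cfg}^{-1}$. The first component is exactly the theorem \emph{simulation}, so it can be reused verbatim.

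For the second component I would apply the lemma \emph{presimulation-to-simulation}, instantiating its relation parameter with $R_{Cfg}$ and putting the CES and CESH transition relations in the roles of $\longrightarrow$ and $\longrightarrow'$ respectively. Its three hypotheses are discharged as follows: the ``$R$ is a simulation'' hypothesis is \emph{simulation}; the ``$R^{-1}$ is a presimulation'' hypothesis is \emph{presimulation}; and the remaining hypothesis --- that the CESH transition relation is deterministic at every state related to some CES state --- follows immediately from the determinism lemma for the CESH machine (\emph{determinism}$_{\mathrm{CESH}}$), since full determinism trivially entails determinism at any particular configuration, so one may discard the two configuration witnesses and the proof that they are related and just hand back \emph{determinism}$_{\mathrm{CESH}}$ (its implicit source argument is unified with the given configuration). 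The conclusion of the lemma is precisely that $R_{Cfg}^{-1}$ is a simulation in the backward direction, which is the missing half. Thus the proof term is essentially \emph{simulation} paired with the application of \emph{presimulation-to-simulation} to $R_{Cfg}$, \emph{simulation}, \emph{presimulation}, and a one-line use of \emph{determinism}$_{\mathrm{CESH}}$.

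The only point requiring real care is the orientation of the arguments: \emph{presimulation-to-simulation} expects the presimulation to run from $\longrightarrow'$ back to $\longrightarrow$ and returns a simulation in that same backward direction, so the CESH relation must occupy the $\longrightarrow'$ slot, and one should recall that \emph{presimulation} was stated with the CESH relation as the one being simulated and $R_{Cfg}^{-1}$ as the relation. Beyond this bookkeeping I do not expect any obstacle: all the substantive work --- the case analysis on transitions, the preservation of $R_{Cfg}$ under heap growth via \emph{HeapUpdate.config}, and the construction witnessing the presimulation --- was already carried out in the proofs of \emph{simulation} and \emph{presimulation}, and determinism of the CESH machine is needed here only to upgrade a presimulation into a genuine simulation. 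So \emph{bisimulation} is essentially glue between pre-existing pieces, and the ``hard part'' has in fact already been done.
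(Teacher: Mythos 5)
Your proposal matches the paper's proof exactly: the forward half is \emph{simulation}, and the backward half is obtained by applying \emph{presimulation-to-simulation} to \emph{simulation}, \emph{presimulation}, and \emph{determinism}$_{\mathrm{CESH}}$ (with the determinism-at hypothesis discharged by discarding the configuration witnesses, just as in the formalisation). Nothing is missing.
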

\begin{proof}
Theorem {\textsmaller[.5]{\ensuremath{\Varid{presimulation-to-simulation}}}} applied to
{\textsmaller[.5]{\ensuremath{\Varid{determinism}_{\Varid{CESH}}}}} and {\textsmaller[.5]{\ensuremath{\Varid{simulation}}}} implies that {\textsmaller[.5]{\ensuremath{\Varid{R}_{\Varid{Cfg}}\;\Varid{⁻¹}}}}
is a simulation, which together with {\textsmaller[.5]{\ensuremath{\Varid{simulation}}}} shows that
{\textsmaller[.5]{\ensuremath{\Varid{R}_{\Varid{Cfg}}}}} is a bisimulation.
\end{proof}

\begin{corollary}[{\textsmaller[.5]{\ensuremath{\Varid{termination-agrees}}}}, {\textsmaller[.5]{\ensuremath{\Varid{divergence-agrees}}}}]
\iffullversion
In particular, a CES configuration terminates with a natural number {\textsmaller[.5]{\ensuremath{\Varid{n}}}}
(diverges) if and only if a related CESH configuration terminates
with the same number (diverges):
\begin{hscode}\SaveRestoreHook
\column{B}{@{}>{\hspre}l<{\hspost}@{}}%
\column{3}{@{}>{\hspre}l<{\hspost}@{}}%
\column{E}{@{}>{\hspre}l<{\hspost}@{}}%
\>[B]{}\Varid{termination-agrees}\;\mathbin{:}\;\Varid{∀}\;\Varid{cfg₁}\;\Varid{cfg₂}\;\Varid{n}\;\Varid{→}\;{}\<[E]%
\\
\>[B]{}\hsindent{3}{}\<[3]%
\>[3]{}\Varid{R}_{\Varid{Cfg}}\;\Varid{cfg₁}\;\Varid{cfg₂}\;\Varid{→}\;\Varid{cfg₁}\;\downarrow_{\Varid{CES}}\;\MyConid{nat}\;\Varid{n}\;\Varid{↔}\;\Varid{cfg₂}\;\downarrow_{\Varid{CESH}}\;\MyConid{nat}\;\Varid{n}{}\<[E]%
\ColumnHook
\end{hscode}\resethooks
\else
In particular, if {\textsmaller[.5]{\ensuremath{\Varid{R}_{\Varid{Cfg}}\;\Varid{cfg₁}\;\Varid{cfg₂}}}} then {\textsmaller[.5]{\ensuremath{\Varid{cfg₁}\;\downarrow_{\Varid{CES}}\;\MyConid{nat}\;\Varid{n}\;\Varid{↔}\;\Varid{cfg₂}\;\downarrow_{\Varid{CESH}}\;\MyConid{nat}\;\Varid{n}}}} and {\textsmaller[.5]{\ensuremath{\Varid{cfg₁}\;\uparrow_{\Varid{CES}}\;\Varid{↔}\;\Varid{cfg₂}\;\uparrow_{\Varid{CESH}}}}}.
\fi
\iffullversion
\begin{hscode}\SaveRestoreHook
\column{B}{@{}>{\hspre}l<{\hspost}@{}}%
\column{3}{@{}>{\hspre}l<{\hspost}@{}}%
\column{E}{@{}>{\hspre}l<{\hspost}@{}}%
\>[B]{}\Varid{divergence-agrees}\;\mathbin{:}\;\Varid{∀}\;\Varid{cfg₁}\;\Varid{cfg₂}\;\Varid{→}\;{}\<[E]%
\\
\>[B]{}\hsindent{3}{}\<[3]%
\>[3]{}\Varid{R}_{\Varid{Cfg}}\;\Varid{cfg₁}\;\Varid{cfg₂}\;\Varid{→}\;\Varid{cfg₁}\;\uparrow_{\Varid{CES}}\;\Varid{↔}\;\Varid{cfg₂}\;\uparrow_{\Varid{CESH}}{}\<[E]%
\ColumnHook
\end{hscode}\resethooks
\fi
\end{corollary}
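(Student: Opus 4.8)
The plan is to obtain both parts of the corollary as immediate consequences of Theorem \emph{bisimulation}, which already establishes that $R_{\mathit{Cfg}}$ together with its inverse forms a bisimulation between the CES and CESH transition relations. No new analysis of individual instructions is needed, only generic reasoning about bisimulations plus a small rigidity argument about terminal configurations.

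For \emph{termination-agrees} I would prove the two implications of the bi-implication separately, each by induction on the length of the reduction sequence witnessing termination. Assume $R_{\mathit{Cfg}}\ \mathit{cfg}_1\ \mathit{cfg}_2$. In the forward direction, $\mathit{cfg}_1 \downarrow_{\mathrm{CES}} \mathbf{nat}\ n$ unfolds to a finite reduction $\mathit{cfg}_1 \rightarrow^{*}_{\mathrm{CES}} (\mathbf{END},[\,],[\mathbf{val}\,(\mathbf{nat}\ n)])$; I transport it one step at a time using the simulation component of the bisimulation, each CES step producing a matching CESH step and preserving $R_{\mathit{Cfg}}$, with a heap accumulating along the way. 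The base case requires showing that any CESH configuration related to $(\mathbf{END},[\,],[\mathbf{val}\,(\mathbf{nat}\ n)])$ is of the form $(\mathbf{END},[\,],[\mathbf{val}\,(\mathbf{nat}\ n)],h)$: $R_{\mathit{Code}}$ forces the code component to be $\mathbf{END}$, the environment relation forces the environment to be empty, $R_{\mathit{Stack}}$ forces a one-element stack, and $R_{\mathit{Val}}$ applied to $\mathbf{nat}\ n$ forces that element to be $\mathbf{val}\,(\mathbf{nat}\ n)$. The reverse implication is symmetric, using that $R_{\mathit{Cfg}}^{-1}$ is a simulation and the analogous (in fact simpler, since there is no heap) rigidity fact on the CES side. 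The restriction to literal results $\mathbf{nat}\ n$ in the statement is exactly what makes this rigidity argument go through, since closure values are represented differently on the two machines and are only compared up to the relation.

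For \emph{divergence-agrees} I would appeal to the general principle that a bisimulation preserves divergence, formalised once and for all as \emph{divergence-bisimulation} for arbitrary bisimulations and then instantiated at $R_{\mathit{Cfg}}$. Concretely: given $\mathit{cfg}_1 \uparrow_{\mathrm{CES}}$, $R_{\mathit{Cfg}}\ \mathit{cfg}_1\ \mathit{cfg}_2$, and a CESH configuration $b$ reachable from $\mathit{cfg}_2$, iterate the inverse simulation along the reduction to $b$ to obtain a CES configuration reachable from $\mathit{cfg}_1$ and related to $b$; divergence of $\mathit{cfg}_1$ lets it step, and the forward simulation turns that into a step of $b$, whence $\mathit{cfg}_2 \uparrow_{\mathrm{CESH}}$. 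The converse direction is identical with the roles swapped.

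The main obstacle is the terminal-configuration rigidity step inside the proof of \emph{termination-agrees}: one must enumerate every shape in which a CESH configuration could fail to match the CES halting configuration --- non-$\mathbf{END}$ code, a non-empty environment, an empty or longer stack, a continuation on top of the stack, or a closure value where a number literal is expected --- and discharge each using the clause of $R_{\mathit{Env}}$, $R_{\mathit{Stack}}$ or $R_{\mathit{Val}}$ that collapses to $\bot$. The remainder is routine: threading the monotonically growing heap through the reduction and assembling the two step-indexed inductions.
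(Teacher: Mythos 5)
Your proposal matches the paper's proof essentially exactly: \emph{termination-agrees} is proved in both directions by structural induction on the reduction sequence, transporting each step through the simulation (resp.\ its inverse) and discharging the base case by the rigidity of $R_{\mathit{Cfg}}$ at the halting configuration (exactly the case analysis on code, environment, stack shape and value constructor that the formalisation performs with absurd patterns), while \emph{divergence-agrees} is obtained by instantiating a generic \emph{divergence-bisimulation} lemma at the bisimulation of Theorem~\emph{bisimulation}. Your observation that the restriction to $\MyConid{nat}\;n$ results is what makes the terminal rigidity argument work is also the correct explanation for the form of the statement.
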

These results are of course not useful until we can show that there
are configurations in {\textsmaller[.5]{\ensuremath{\Varid{R}_{\Varid{Cfg}}}}}. One such example is the ``initial''
(mostly empty) configuration for a fragment of code:
\iffullversion
\begin{hscode}\SaveRestoreHook
\column{B}{@{}>{\hspre}l<{\hspost}@{}}%
\column{E}{@{}>{\hspre}l<{\hspost}@{}}%
\>[B]{}\Varid{initial-related}\;\mathbin{:}\;\Varid{∀}\;\Varid{c}\;\Varid{→}\;\Varid{R}_{\Varid{Cfg}}\;(\Varid{c},[\mskip1.5mu \mskip1.5mu],[\mskip1.5mu \mskip1.5mu])\;(\Varid{c},[\mskip1.5mu \mskip1.5mu],[\mskip1.5mu \mskip1.5mu],\Varid{∅}){}\<[E]%
\\
\>[B]{}\Varid{initial-related}\;\Varid{c}\;\mathrel{=}\;\Varid{refl},\Varid{tt},\Varid{tt}{}\<[E]%
\ColumnHook
\end{hscode}\resethooks
\else
For any {\textsmaller[.5]{\ensuremath{\Varid{c}}}}, we have {\textsmaller[.5]{\ensuremath{\Varid{R}_{\Varid{Cfg}}\;(\Varid{c},[\mskip1.5mu \mskip1.5mu],[\mskip1.5mu \mskip1.5mu])\;(\Varid{c},[\mskip1.5mu \mskip1.5mu],[\mskip1.5mu \mskip1.5mu],\Varid{∅})}}} (where
{\textsmaller[.5]{\ensuremath{\Varid{∅}}}} is the empty heap).
\fi

\section{Synchronous and asynchronous networks} \label{section:Networks}
Since we are later going to define two distributed abstract machines, it would
save us some work if we could make a network model that is general enough to be
used for both. In this section we will define models for synchronous
and asynchronous networks, that are parameterised by an underlying labelled
transition system. Both kinds of networks are modelled by two-level transition
systems, which is common in operational semantics for concurrent and parallel
languages. The idea is that the global level describes the transitions of the
system as a whole, and the low level the local transitions of the nodes in the
system.
Synchronous communication is modelled by \emph{rendezvous}, i.e.\ that two
nodes have to be ready to send and receive a message at a single point in time.
Asynchronous communication is modelled using a ``message soup'', representing
messages currently in transit, that nodes can add and remove messages from,
reminiscent of the Chemical Abstract Machine~\cite{DBLP:conf/popl/BerryB90}.

We construct an Agda module {\textsmaller[.5]{\ensuremath{\Conid{Network}}}}, parameterised by the underlying
transition relation, {\textsmaller[.5]{\ensuremath{{\dummy ⊢ \dummy \xrightarrow[\Varid{Machine}]{\dummy} \dummy}\;\mathbin{:}\;\Conid{Node}\;\Varid{→}\;\Conid{Machine}\;\Varid{→}\;\Conid{Tagged}\;\Conid{Msg}\;\Varid{→}\;\Conid{Machine}\;\Varid{→}\;\star}}}.  The sets {\textsmaller[.5]{\ensuremath{\Conid{Node}}}}, {\textsmaller[.5]{\ensuremath{\Conid{Machine}}}}, and {\textsmaller[.5]{\ensuremath{\Conid{Msg}}}} are additional
parameters.  Elements of {\textsmaller[.5]{\ensuremath{\Conid{Node}}}} will act as node identifiers, and we
assume that these enjoy decidable equality. If we were using MPI, they
would correspond to the so called node ranks, which are just machine
integers.  The type {\textsmaller[.5]{\ensuremath{\Conid{Machine}}}} is the type of the nodes'
configurations, and {\textsmaller[.5]{\ensuremath{\Conid{Msg}}}} the type of messages that the machines can
send. The {\textsmaller[.5]{\ensuremath{\Conid{Node}}}} in the type of {\textsmaller[.5]{\ensuremath{{\dummy ⊢ \dummy \xrightarrow[\Varid{Machine}]{\dummy} \dummy}}}} means, intuitively, that
the configuration of a node knows about and can depend on its own
identifier. The type constructor {\textsmaller[.5]{\ensuremath{\Conid{Tagged}}}} is used to separate
different kinds of local transitions: A {\textsmaller[.5]{\ensuremath{\Conid{Tagged}\;\Conid{Msg}}}} can be {\textsmaller[.5]{\ensuremath{\MyConid{silent}}}} (i.e. a $\tau$
transition), {\textsmaller[.5]{\ensuremath{\MyConid{send}\;\Varid{msg}}}}, or {\textsmaller[.5]{\ensuremath{\MyConid{receive}\;\Varid{msg}}}} (for {\textsmaller[.5]{\ensuremath{\Varid{msg}\;\mathbin{:}\;\Conid{Msg}}}}).
\iffullversion
\begin{hscode}\SaveRestoreHook
\column{B}{@{}>{\hspre}l<{\hspost}@{}}%
\column{3}{@{}>{\hspre}l<{\hspost}@{}}%
\column{11}{@{}>{\hspre}l<{\hspost}@{}}%
\column{16}{@{}>{\hspre}l<{\hspost}@{}}%
\column{E}{@{}>{\hspre}l<{\hspost}@{}}%
\>[B]{}\Keyword{module}\;\Conid{Network}{}\<[E]%
\\
\>[B]{}\hsindent{3}{}\<[3]%
\>[3]{}(\Conid{Node}\;\mathbin{:}\;\star){}\<[E]%
\\
\>[B]{}\hsindent{3}{}\<[3]%
\>[3]{}(\Varid{\char95 ≟\char95 }\;\mathbin{:}\;{}\<[11]%
\>[11]{}(\Varid{n}\;\Varid{n'}\;\mathbin{:}\;\Conid{Node})\;\Varid{→}\;\Conid{Dec}\;(\Varid{n}\;\Varid{≡}\;\Varid{n'})){}\<[E]%
\\
\>[B]{}\hsindent{3}{}\<[3]%
\>[3]{}\{\mskip1.5mu \Conid{Machine}\;\Conid{Msg}\;\mathbin{:}\;\star\mskip1.5mu\}{}\<[E]%
\\
\>[B]{}\hsindent{3}{}\<[3]%
\>[3]{}({\dummy ⊢ \dummy \xrightarrow[\Varid{Machine}]{\dummy} \dummy}\;\mathbin{:}\;{}\<[16]%
\>[16]{}\Conid{Node}\;\Varid{→}\;\Conid{Machine}\;\Varid{→}\;{}\<[E]%
\\
\>[16]{}\Conid{Tagged}\;\Conid{Msg}\;\Varid{→}\;\Conid{Machine}\;\Varid{→}\;\star){}\<[E]%
\\
\>[B]{}\hsindent{3}{}\<[3]%
\>[3]{}\Keyword{where}{}\<[E]%
\ColumnHook
\end{hscode}\resethooks
\fi

A synchronous network ({\textsmaller[.5]{\ensuremath{\Conid{SyncNetwork}}}}) is an indexed family of machines,
{\textsmaller[.5]{\ensuremath{\Conid{Node}\;\Varid{→}\;\Conid{Machine}}}}, representing the nodes of the system.
An asynchronous network ({\textsmaller[.5]{\ensuremath{\Conid{AsyncNetwork}}}}) is an indexed family of machines together with
a list of messages representing the messages currently
in transit, {\textsmaller[.5]{\ensuremath{(\Conid{Node}\;\Varid{→}\;\Conid{Machine})\;\Varid{×}\;\Conid{List}\;\Conid{Msg}}}}.

\iffullversion
The following function updates an element in a set indexed by node
identifiers, and will be used in defining the transition relations for
networks:
\begin{hscode}\SaveRestoreHook
\column{B}{@{}>{\hspre}l<{\hspost}@{}}%
\column{30}{@{}>{\hspre}l<{\hspost}@{}}%
\column{34}{@{}>{\hspre}l<{\hspost}@{}}%
\column{E}{@{}>{\hspre}l<{\hspost}@{}}%
\>[B]{}\Varid{update}\;\mathbin{:}\;\{\mskip1.5mu \Conid{A}\;\mathbin{:}\;\star\mskip1.5mu\}\;\Varid{→}\;(\Conid{Node}\;\Varid{→}\;\Conid{A})\;{}\<[34]%
\>[34]{}\Varid{→}\;\Conid{Node}\;\Varid{→}\;\Conid{A}\;\Varid{→}\;\Conid{Node}\;\Varid{→}\;\Conid{A}{}\<[E]%
\\
\>[B]{}\Varid{update}\;\Varid{nodes}\;\Varid{n}\;\Varid{m}\;\Varid{n'}\;\Keyword{with}\;\Varid{n'}\;\Varid{≟}\;\Varid{n}{}\<[E]%
\\
\>[B]{}\Varid{update}\;\Varid{nodes}\;\Varid{n}\;\Varid{m}\;\Varid{n'}\;\mid \;\Varid{yes}\;\Varid{p}\;{}\<[30]%
\>[30]{}\mathrel{=}\;\Varid{m}{}\<[E]%
\\
\>[B]{}\Varid{update}\;\Varid{nodes}\;\Varid{n}\;\Varid{m}\;\Varid{n'}\;\mid \;\Varid{no}\;\Varid{¬p}\;{}\<[30]%
\>[30]{}\mathrel{=}\;\Varid{nodes}\;\Varid{n'}{}\<[E]%
\ColumnHook
\end{hscode}\resethooks
\fi
\iffullversion
\begin{sidewaysfigure}
\else
\begin{figure*}[!t]
\fi
\centering
\begin{varwidth}[t]{\textwidth}
\iffullversion
\begin{hscode}\SaveRestoreHook
\column{B}{@{}>{\hspre}l<{\hspost}@{}}%
\column{24}{@{}>{\hspre}l<{\hspost}@{}}%
\column{E}{@{}>{\hspre}l<{\hspost}@{}}%
\>[B]{}\Keyword{data}\;{\dummy\xrightarrow[\Varid{Sync}]{}\dummy}\;(\Varid{nodes}\;\mathbin{:}\;{}\<[24]%
\>[24]{}\Conid{SyncNetwork})\;\mathbin{:}\;\Conid{SyncNetwork}\;\Varid{→}\;\star\;\Keyword{where}{}\<[E]%
\ColumnHook
\end{hscode}\resethooks
\removecodespace
\indentcolumn{3}
\fi
\begin{hscode}\SaveRestoreHook
\column{B}{@{}>{\hspre}l<{\hspost}@{}}%
\column{3}{@{}>{\hspre}l<{\hspost}@{}}%
\column{5}{@{}>{\hspre}l<{\hspost}@{}}%
\column{16}{@{}>{\hspre}l<{\hspost}@{}}%
\column{E}{@{}>{\hspre}l<{\hspost}@{}}%
\>[3]{}\MyConid{silent-step}\;{}\<[16]%
\>[16]{}\mathbin{:}\;\Varid{∀}\;\{\mskip1.5mu \Varid{i}\;\Varid{m'}\mskip1.5mu\}\;\Varid{→}\;\Varid{i}⊢\Varid{nodes}\;\Varid{i}\xrightarrow[\Varid{Machine}]{\MyConid{silent}}\Varid{m'}\;\Varid{→}\;\Varid{nodes}\;\xrightarrow[\Varid{Sync}]{}\;\Varid{update}\;\Varid{nodes}\;\Varid{i}\;\Varid{m'}{}\<[E]%
\\
\>[3]{}\MyConid{comm-step}\;{}\<[16]%
\>[16]{}\mathbin{:}\;\Varid{∀}\;\{\mskip1.5mu \Varid{s}\;\Varid{r}\;\Varid{msg}\;\Varid{sender'}\;\Varid{receiver'}\mskip1.5mu\}\;\Varid{→}\;\Keyword{let}\;\Varid{nodes'}\;\mathrel{=}\;\Varid{update}\;\Varid{nodes}\;\Varid{s}\;\Varid{sender'}\;\Keyword{in}{}\<[E]%
\\
\>[3]{}\hsindent{2}{}\<[5]%
\>[5]{}\Varid{s}⊢\Varid{nodes}\;\Varid{s}\xrightarrow[\Varid{Machine}]{\MyConid{send}\;\Varid{msg}}\Varid{sender'}\;\Varid{→}\;\Varid{r}⊢\Varid{nodes'}\;\Varid{r}\xrightarrow[\Varid{Machine}]{\MyConid{receive}\;\Varid{msg}}\Varid{receiver'}\;\Varid{→}{}\<[E]%
\\
\>[3]{}\hsindent{2}{}\<[5]%
\>[5]{}\Varid{nodes}\;\xrightarrow[\Varid{Sync}]{}\;\Varid{update}\;\Varid{nodes'}\;\Varid{r}\;\Varid{receiver'}{}\<[E]%
\ColumnHook
\end{hscode}\resethooks
\iffullversion
\begin{hscode}\SaveRestoreHook
\column{B}{@{}>{\hspre}l<{\hspost}@{}}%
\column{18}{@{}>{\hspre}l<{\hspost}@{}}%
\column{E}{@{}>{\hspre}l<{\hspost}@{}}%
\>[B]{}\Keyword{data}\;{\dummy\xrightarrow[\Varid{Async}]{}\dummy}\;\mathbin{:}\;{}\<[18]%
\>[18]{}\Conid{AsyncNetwork}\;\Varid{→}\;\Conid{AsyncNetwork}\;\Varid{→}\;\star\;\Keyword{where}{}\<[E]%
\ColumnHook
\end{hscode}\resethooks
\removecodespace
\indentcolumn{3}
\fi
\begin{hscode}\SaveRestoreHook
\column{B}{@{}>{\hspre}l<{\hspost}@{}}%
\column{3}{@{}>{\hspre}l<{\hspost}@{}}%
\column{5}{@{}>{\hspre}l<{\hspost}@{}}%
\column{41}{@{}>{\hspre}l<{\hspost}@{}}%
\column{E}{@{}>{\hspre}l<{\hspost}@{}}%
\>[3]{}\MyConid{step}\;\mathbin{:}\;\Varid{∀}\;\{\mskip1.5mu \Varid{nodes}\mskip1.5mu\}\;\Varid{msgs}_{\Varid{l}}\;\Varid{msgs}_{\Varid{r}}\;\{\mskip1.5mu \Varid{tmsg}\;\Varid{m'}\;\Varid{i}\mskip1.5mu\}\;\Varid{→}\;\Keyword{let}\;(\Varid{msgs}_{\Varid{in}},\Varid{msgs}_{\Varid{out}})\;\mathrel{=}\;\Varid{detag}\;\Varid{tmsg}\;\Keyword{in}{}\<[E]%
\\
\>[3]{}\hsindent{2}{}\<[5]%
\>[5]{}\Varid{i}⊢\Varid{nodes}\;\Varid{i}\xrightarrow[\Varid{Machine}]{\Varid{tmsg}}\Varid{m'}\;\Varid{→}{}\<[E]%
\\
\>[3]{}\hsindent{2}{}\<[5]%
\>[5]{}(\Varid{nodes},\Varid{msgs}_{\Varid{l}}\;\plus \;\Varid{msgs}_{\Varid{in}}\;\plus \;\Varid{msgs}_{\Varid{r}})\;{}\<[41]%
\>[41]{}\xrightarrow[\Varid{Async}]{}\;(\Varid{update}\;\Varid{nodes}\;\Varid{i}\;\Varid{m'},\Varid{msgs}_{\Varid{l}}\;\plus \;\Varid{msgs}_{\Varid{out}}\;\plus \;\Varid{msgs}_{\Varid{r}}){}\<[E]%
\ColumnHook
\end{hscode}\resethooks
\end{varwidth}
\caption{The definition of the transition relations for synchronous and asynchronous networks.}
\label{figure:Networks-step}
\iffullversion
\end{sidewaysfigure}
\else
\end{figure*}
\fi

Fig.~\ref{figure:Networks-step} shows the definition of the transition
relation for synchronous and asynchronous networks.
\ifnotfullversion
It uses
the {\textsmaller[.5]{\ensuremath{\Varid{update}}}} function which updates an element of a family indexed by
nodes (using decidable equality).
\fi

There are two ways for a synchronous network to make a transition. The
first, {\textsmaller[.5]{\ensuremath{\MyConid{silent-step}}}}, occurs when a machine in the network makes a transition
tagged with {\textsmaller[.5]{\ensuremath{\MyConid{silent}}}}, and is allowed at any time.
The second, {\textsmaller[.5]{\ensuremath{\MyConid{comm-step}}}}, is the aforementioned rendezvous. A node {\textsmaller[.5]{\ensuremath{\Varid{s}}}}
first takes a step sending a message, and afterwards a node {\textsmaller[.5]{\ensuremath{\Varid{r}}}} takes a step
receiving the same message. Note that {\textsmaller[.5]{\ensuremath{\Varid{s}}}} and {\textsmaller[.5]{\ensuremath{\Varid{r}}}} are not necessarily
different, i.e. nodes can send messages to themselves.
Asynchronous networks only have one rule, {\textsmaller[.5]{\ensuremath{\MyConid{step}}}}, which can be used if
a node steps with a tagged message that ``agrees''
with the list of messages in transit. The definition is fairly involved, but
the intuition is that if the node \emph{receives} a message, the message has to
be in the list \emph{before} the transition. If the node \emph{sends}
a message, it has to be there \emph{after}. If the node takes a
\emph{silent} step, the list stays the same before and after.
This is what the usage of the {\textsmaller[.5]{\ensuremath{\Varid{detag}}}} function, which creates lists of
input and output messages from a tagged message, achieves:

\iffullversion
\begin{hscode}\SaveRestoreHook
\column{B}{@{}>{\hspre}l<{\hspost}@{}}%
\column{10}{@{}>{\hspre}l<{\hspost}@{}}%
\column{E}{@{}>{\hspre}l<{\hspost}@{}}%
\>[B]{}\Varid{detag}\;\mathbin{:}\;{}\<[10]%
\>[10]{}\{\mskip1.5mu \Conid{A}\;\mathbin{:}\;\star\mskip1.5mu\}\;\Varid{→}\;\Conid{Tagged}\;\Conid{A}\;\Varid{→}\;\Conid{List}\;\Conid{A}\;\Varid{×}\;\Conid{List}\;\Conid{A}{}\<[E]%
\ColumnHook
\end{hscode}\resethooks
\removecodespace
\fi
\begin{hscode}\SaveRestoreHook
\column{B}{@{}>{\hspre}l<{\hspost}@{}}%
\column{20}{@{}>{\hspre}l<{\hspost}@{}}%
\column{29}{@{}>{\hspre}l<{\hspost}@{}}%
\column{E}{@{}>{\hspre}l<{\hspost}@{}}%
\>[B]{}\Varid{detag}\;\MyConid{silent}\;{}\<[20]%
\>[20]{}\mathrel{=}\;[\mskip1.5mu \mskip1.5mu]{}\<[29]%
\>[29]{},[\mskip1.5mu \mskip1.5mu]{}\<[E]%
\\
\>[B]{}\Varid{detag}\;(\MyConid{send}\;\Varid{x})\;{}\<[20]%
\>[20]{}\mathrel{=}\;[\mskip1.5mu \mskip1.5mu]{}\<[29]%
\>[29]{},[\mskip1.5mu \Varid{x}\mskip1.5mu]{}\<[E]%
\\
\>[B]{}\Varid{detag}\;(\MyConid{receive}\;\Varid{x})\;{}\<[20]%
\>[20]{}\mathrel{=}\;[\mskip1.5mu \Varid{x}\mskip1.5mu]{}\<[29]%
\>[29]{},[\mskip1.5mu \mskip1.5mu]{}\<[E]%
\ColumnHook
\end{hscode}\resethooks

\begin{lemma}
\ifnotfullversion
If {\textsmaller[.5]{\ensuremath{\Varid{a}\;\xrightarrow[\Varid{Sync}]{}\;\Varid{b}}}}, then {\textsmaller[.5]{\ensuremath{(\Varid{a},[\mskip1.5mu \mskip1.5mu])\;{\xrightarrow[\Varid{Async}]{}\Varid{⁺}}\;(\Varid{b},[\mskip1.5mu \mskip1.5mu])}}}, where {\textsmaller[.5]{\ensuremath{\Varid{\char95 ⁺}}}} takes the
transitive closure of a relation.
\else
If we have a synchronous transition from {\textsmaller[.5]{\ensuremath{\Varid{a}}}} to {\textsmaller[.5]{\ensuremath{\Varid{b}}}}, then we have one or more
asynchronous transitions from {\textsmaller[.5]{\ensuremath{(\Varid{a},[\mskip1.5mu \mskip1.5mu])}}} to {\textsmaller[.5]{\ensuremath{(\Varid{b},[\mskip1.5mu \mskip1.5mu])}}}, as follows:
\begin{hscode}\SaveRestoreHook
\column{B}{@{}>{\hspre}l<{\hspost}@{}}%
\column{31}{@{}>{\hspre}l<{\hspost}@{}}%
\column{37}{@{}>{\hspre}l<{\hspost}@{}}%
\column{61}{@{}>{\hspre}l<{\hspost}@{}}%
\column{E}{@{}>{\hspre}l<{\hspost}@{}}%
\>[B]{}{\xrightarrow[\Varid{Sync}]{}\Varid{-to-}{\xrightarrow[\Varid{Async}]{}\Varid{⁺}}}\;\mathbin{:}\;\Varid{∀}\;\{\mskip1.5mu \Varid{a}\;\Varid{b}\mskip1.5mu\}\;\Varid{→}\;{}\<[31]%
\>[31]{}\Varid{a}\;\xrightarrow[\Varid{Sync}]{}\;\Varid{b}\;\Varid{→}\;(\Varid{a},[\mskip1.5mu \mskip1.5mu])\;{\xrightarrow[\Varid{Async}]{}\Varid{⁺}}\;{}\<[61]%
\>[61]{}(\Varid{b},[\mskip1.5mu \mskip1.5mu]){}\<[E]%
\\
\>[B]{}{\xrightarrow[\Varid{Sync}]{}\Varid{-to-}{\xrightarrow[\Varid{Async}]{}\Varid{⁺}}}\;(\MyConid{silent-step}\;\Varid{s})\;{}\<[37]%
\>[37]{}\mathrel{=}\;[\mskip1.5mu \MyConid{step}\;[\mskip1.5mu \mskip1.5mu]\;[\mskip1.5mu \mskip1.5mu]\;\Varid{s}\mskip1.5mu]{}\<[E]%
\\
\>[B]{}{\xrightarrow[\Varid{Sync}]{}\Varid{-to-}{\xrightarrow[\Varid{Async}]{}\Varid{⁺}}}\;(\MyConid{comm-step}\;\Varid{s₁}\;\Varid{s₂})\;{}\<[37]%
\>[37]{}\mathrel{=}\;\MyConid{step}\;[\mskip1.5mu \mskip1.5mu]\;[\mskip1.5mu \mskip1.5mu]\;\Varid{s₁}\;\Varid{∷}\;[\mskip1.5mu \MyConid{step}\;[\mskip1.5mu \mskip1.5mu]\;[\mskip1.5mu \mskip1.5mu]\;\Varid{s₂}\mskip1.5mu]{}\<[E]%
\ColumnHook
\end{hscode}\resethooks
where {\textsmaller[.5]{\ensuremath{\Varid{\char95 ⁺}}}} is defined as follows:
\begin{hscode}\SaveRestoreHook
\column{B}{@{}>{\hspre}l<{\hspost}@{}}%
\column{3}{@{}>{\hspre}l<{\hspost}@{}}%
\column{9}{@{}>{\hspre}l<{\hspost}@{}}%
\column{10}{@{}>{\hspre}l<{\hspost}@{}}%
\column{22}{@{}>{\hspre}l<{\hspost}@{}}%
\column{E}{@{}>{\hspre}l<{\hspost}@{}}%
\>[B]{}\Keyword{data}\;\Varid{\char95 ⁺}\;{}\<[10]%
\>[10]{}\{\mskip1.5mu \Conid{A}\;\mathbin{:}\;\star\mskip1.5mu\}\;(\Conid{R}\;\mathbin{:}\;\Conid{Rel}\;\Conid{A}\;\Conid{A})\;(\Varid{a}\;\mathbin{:}\;\Conid{A})\;\mathbin{:}\;\Conid{A}\;\Varid{→}\;\star\;\Keyword{where}{}\<[E]%
\\
\>[B]{}\hsindent{3}{}\<[3]%
\>[3]{}[\mskip1.5mu \anonymous \mskip1.5mu]\;{}\<[9]%
\>[9]{}\mathbin{:}\;\{\mskip1.5mu \Varid{b}\;\mathbin{:}\;\Conid{A}\mskip1.5mu\}\;{}\<[22]%
\>[22]{}\Varid{→}\;\Conid{R}\;\Varid{a}\;\Varid{b}\;\Varid{→}\;(\Conid{R}\;\Varid{⁺})\;\Varid{a}\;\Varid{b}{}\<[E]%
\\
\>[B]{}\hsindent{3}{}\<[3]%
\>[3]{}\Varid{\char95 ∷\char95 }\;{}\<[9]%
\>[9]{}\mathbin{:}\;\{\mskip1.5mu \Varid{b}\;\Varid{c}\;\mathbin{:}\;\Conid{A}\mskip1.5mu\}\;{}\<[22]%
\>[22]{}\Varid{→}\;\Conid{R}\;\Varid{a}\;\Varid{b}\;\Varid{→}\;(\Conid{R}\;\Varid{⁺})\;\Varid{b}\;\Varid{c}\;\Varid{→}\;(\Conid{R}\;\Varid{⁺})\;\Varid{a}\;\Varid{c}{}\<[E]%
\ColumnHook
\end{hscode}\resethooks
\fi
\end{lemma}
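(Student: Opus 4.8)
The plan is to proceed by case analysis on the derivation of the given synchronous transition; the relation $\xrightarrow[\text{Sync}]{}$ has exactly two constructors, \textsf{silent-step} and \textsf{comm-step}, and in each case I will exhibit an explicit chain of asynchronous \textsf{step}s whose endpoints carry the empty message soup.

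In the \textsf{silent-step} case we are handed a local \textsf{silent} transition of $\mathit{nodes}\,i$ to some $m'$, with target network $\textsf{update}\,\mathit{nodes}\,i\,m'$. A single asynchronous \textsf{step}, instantiated with both the left and right message fragments empty and reusing this very local transition, suffices: since $\textsf{detag}\,\textsf{silent} = ([\,],[\,])$, the message list is $[\,] \plus [\,] \plus [\,]$ — that is, empty — on both sides, so we obtain a one-element chain $(\mathit{a},[\,]) \xrightarrow[\text{Async}]{} (\mathit{b},[\,])$. In the \textsf{comm-step} case we are handed a send transition of $\mathit{nodes}\,s$ producing $\mathit{sender}'$ and, crucially, a receive transition of the \emph{already updated} network $\mathit{nodes}' = \textsf{update}\,\mathit{nodes}\,s\,\mathit{sender}'$ producing $\mathit{receiver}'$, with target $\textsf{update}\,\mathit{nodes}'\,r\,\mathit{receiver}'$. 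Here I build a chain of length two. The first \textsf{step} uses the send transition with empty outer fragments; since $\textsf{detag}\,(\textsf{send}\,\mathit{msg}) = ([\,],[\mathit{msg}])$ it moves $(\mathit{nodes},[\,])$ to $(\mathit{nodes}',[\mathit{msg}])$. The second \textsf{step} uses the receive transition, now over $\mathit{nodes}'$, again with empty outer fragments; since $\textsf{detag}\,(\textsf{receive}\,\mathit{msg}) = ([\mathit{msg}],[\,])$ it consumes the single pending message and moves $(\mathit{nodes}',[\mathit{msg}])$ to $(\mathit{b},[\,])$. Concatenating the two steps yields the desired transitive-closure witness.

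What remains is definitional bookkeeping, which I expect to be routine in Agda rather than a real obstacle: one checks that the concatenations $\mathit{msgs}_l \plus \mathit{msgs}_{\mathrm{in}} \plus \mathit{msgs}_r$ in the \textsf{step} rule indeed reduce to $[\,]$ and $[\mathit{msg}]$ once the outer fragments are taken empty, and that the two successive \textsf{update}s performed by the async chain coincide with the nested update built into \textsf{comm-step}. The latter point is the one to be slightly careful about: the receive step must act on $\mathit{nodes}'$, i.e.\ after the sender's state has already changed, which is precisely the order in which the chain applies its updates; the possibility that sender and receiver coincide is harmless for the same reason, since the second \textsf{step} operates on the output of the first. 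Thus the only mild subtlety is lining up the update order and the message-soup arithmetic, and the case split makes everything else fall out immediately.
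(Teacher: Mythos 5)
Your proposal is correct and matches the paper's proof exactly: the formalisation discharges the \textsf{silent-step} case with the single-element chain $[\MyConid{step}\;[\,]\;[\,]\;s]$ and the \textsf{comm-step} case with the two-element chain $\MyConid{step}\;[\,]\;[\,]\;s_1 \Varid{∷} [\MyConid{step}\;[\,]\;[\,]\;s_2]$, relying on the same observations about \textsf{detag} and the ordering of updates that you spell out. The bookkeeping you flag is indeed definitional and goes through.
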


We can thus say that asynchronous networks subsume synchronous
networks.  Going in the other direction is not possible in general,
but for some specific instances of the underlying transition relation
it is, as we will see later.

\section{\DCESHi: A degenerate distributed machine} \label{section:ADCESH}

In higher-order distributed programs containing locus specifiers, we
will sometimes encounter situations where a function is not available
locally. For example, when evaluating the function {\textsmaller[.5]{\ensuremath{\Varid{f}}}} in the term {\textsmaller[.5]{\ensuremath{(\Varid{f}\;\MyConid{$\boldsymbol{@}$}\;\Conid{A})\;(\Varid{g}\;\MyConid{$\boldsymbol{@}$}\;\Conid{B})}}}, we may need to apply the remotely available function
{\textsmaller[.5]{\ensuremath{\Varid{g}}}}.  As stated in the introduction, our general idea is to do this by
decomposing some instructions into communication. In the example, the
function {\textsmaller[.5]{\ensuremath{\Varid{f}}}} may send a message requesting the evaluation of {\textsmaller[.5]{\ensuremath{\Varid{g}}}},
meaning that the {\textsmaller[.5]{\ensuremath{\MyConid{APPL}}}} instruction is split into a pair of
instructions: {\textsmaller[.5]{\ensuremath{\MyConid{APPL-send}}}} and {\textsmaller[.5]{\ensuremath{\MyConid{APPL-receive}}}}.

This section outlines an abstract machine, called \DCESHi{}, which
decomposes all application and return instructions into communication.
The machine is degenerate, because it runs as the sole node in a
network and sends messages to itself, but illustrates this
decomposition, which will be used in the fully distributed system.

A configuration of the \DCESHi{} machine ({\textsmaller[.5]{\ensuremath{\Conid{Machine}}}}) is a tuple
consisting of a possibly running thread ({\textsmaller[.5]{\ensuremath{\Conid{Maybe}\;\Conid{Thread}}}}), a closure
heap ({\textsmaller[.5]{\ensuremath{\Conid{Heap}\;\Conid{Closure}}}}), and a ``continuation heap'' ({\textsmaller[.5]{\ensuremath{\Conid{Heap}\;(\Conid{Closure}\;\Varid{×}\;\Conid{Stack})}}}).  Since the current work does not support parallelism, we
have at most one thread running at once.  The thread resembles a CES
configuration, {\textsmaller[.5]{\ensuremath{\Conid{Thread}\;\mathrel{=}\;\Conid{Code}\;\Varid{×}\;\Conid{Env}\;\Varid{×}\;\Conid{Stack}}}}, but stacks are defined
differently. A stack is now a list of values paired with an optional
pointer (pointing into the continuation heap), {\textsmaller[.5]{\ensuremath{\Conid{Stack}\;\mathrel{=}\;\Conid{List}\;\Conid{Val}\;\Varid{×}\;\Conid{Maybe}\;\Conid{ContPtr}}}} (where {\textsmaller[.5]{\ensuremath{\Conid{ContPtr}}}} is a more descriptive synonym for
{\textsmaller[.5]{\ensuremath{\Conid{Ptr}}}}).  The intuition here is that when performing an application,
when CES would push a continuation on the stack, the \DCESHi{} machine
is going to stop the current thread and send a message, which means
that it has to save the continuation and the remainder of the stack
in the heap for them to persist the thread's lifetime.

The optional pointer in {\textsmaller[.5]{\ensuremath{\Conid{Stack}}}} is to be thought of as being an
element at the \emph{bottom} of the list of values.  Comparing it to
the definition of the CES machine, where stacks are lists of either
values or continuations (which were just closures), we can picture their
relation: Whereas the CES machine stores the values and
continuations in a single, contiguous stack, the \DCESHi{} machine
stores first a contiguous block of values until reaching a
continuation, at which point it stores ({\textsmaller[.5]{\ensuremath{\MyConid{just}}}}) a pointer to the
continuation closure and the rest of the stack.

The definition of closures, values, and environments are otherwise just like
in the CESH machine.
\iffullversion
\begin{hscode}\SaveRestoreHook
\column{B}{@{}>{\hspre}l<{\hspost}@{}}%
\column{3}{@{}>{\hspre}l<{\hspost}@{}}%
\column{5}{@{}>{\hspre}l<{\hspost}@{}}%
\column{11}{@{}>{\hspre}l<{\hspost}@{}}%
\column{22}{@{}>{\hspre}l<{\hspost}@{}}%
\column{E}{@{}>{\hspre}l<{\hspost}@{}}%
\>[B]{}\Conid{ClosPtr}\;\mathrel{=}\;\Conid{Ptr}{}\<[E]%
\\
\>[B]{}\Keyword{mutual}{}\<[E]%
\\
\>[B]{}\hsindent{3}{}\<[3]%
\>[3]{}\Conid{Closure}\;\mathrel{=}\;\Conid{Code}\;\Varid{×}\;\Conid{Env}{}\<[E]%
\\
\>[B]{}\hsindent{3}{}\<[3]%
\>[3]{}\Keyword{data}\;\Conid{Val}\;\mathbin{:}\;\star\;\Keyword{where}{}\<[E]%
\\
\>[3]{}\hsindent{2}{}\<[5]%
\>[5]{}\MyConid{nat}\;{}\<[11]%
\>[11]{}\mathbin{:}\;\Conid{ℕ}\;{}\<[22]%
\>[22]{}\Varid{→}\;\Conid{Val}{}\<[E]%
\\
\>[3]{}\hsindent{2}{}\<[5]%
\>[5]{}\MyConid{clos}\;{}\<[11]%
\>[11]{}\mathbin{:}\;\Conid{ClosPtr}\;{}\<[22]%
\>[22]{}\Varid{→}\;\Conid{Val}{}\<[E]%
\\
\>[B]{}\hsindent{3}{}\<[3]%
\>[3]{}\Conid{Env}\;\mathrel{=}\;\Conid{List}\;\Conid{Val}{}\<[E]%
\\
\>[B]{}\Conid{ClosHeap}\;\mathrel{=}\;\Conid{Heap}\;\Conid{Closure}{}\<[E]%
\\
\>[B]{}\Conid{ContPtr}\;{}\<[11]%
\>[11]{}\mathrel{=}\;\Conid{Ptr}{}\<[E]%
\\
\>[B]{}\Conid{Stack}\;{}\<[11]%
\>[11]{}\mathrel{=}\;\Conid{List}\;\Conid{Val}\;\Varid{×}\;\Conid{Maybe}\;\Conid{ContPtr}{}\<[E]%
\\
\>[B]{}\Conid{ContHeap}\;{}\<[11]%
\>[11]{}\mathrel{=}\;\Conid{Heap}\;(\Conid{Closure}\;\Varid{×}\;\Conid{Stack}){}\<[E]%
\\
\>[B]{}\Conid{Thread}\;{}\<[11]%
\>[11]{}\mathrel{=}\;\Conid{Code}\;\Varid{×}\;\Conid{Env}\;\Varid{×}\;\Conid{Stack}{}\<[E]%
\\
\>[B]{}\Conid{Machine}\;{}\<[11]%
\>[11]{}\mathrel{=}\;\Conid{Maybe}\;\Conid{Thread}\;\Varid{×}\;\Conid{ClosHeap}\;\Varid{×}\;\Conid{ContHeap}{}\<[E]%
\ColumnHook
\end{hscode}\resethooks
\fi
The machine communicates with itself using two kinds of messages,
{\textsmaller[.5]{\ensuremath{\MyConid{APPL}}}} and {\textsmaller[.5]{\ensuremath{\MyConid{RET}}}}, corresponding to the instructions that we are
replacing with communication.
\iffullversion
\begin{hscode}\SaveRestoreHook
\column{B}{@{}>{\hspre}l<{\hspost}@{}}%
\column{3}{@{}>{\hspre}l<{\hspost}@{}}%
\column{11}{@{}>{\hspre}l<{\hspost}@{}}%
\column{22}{@{}>{\hspre}l<{\hspost}@{}}%
\column{E}{@{}>{\hspre}l<{\hspost}@{}}%
\>[B]{}\Keyword{data}\;\Conid{Msg}\;\mathbin{:}\;\star\;\Keyword{where}{}\<[E]%
\\
\>[B]{}\hsindent{3}{}\<[3]%
\>[3]{}\MyConid{APPL}\;{}\<[11]%
\>[11]{}\mathbin{:}\;\Conid{ClosPtr}\;{}\<[22]%
\>[22]{}\Varid{→}\;\Conid{Val}\;\Varid{→}\;\Conid{ContPtr}\;\Varid{→}\;\Conid{Msg}{}\<[E]%
\\
\>[B]{}\hsindent{3}{}\<[3]%
\>[3]{}\MyConid{RET}\;{}\<[11]%
\>[11]{}\mathbin{:}\;\Conid{ContPtr}\;{}\<[22]%
\>[22]{}\Varid{→}\;\Conid{Val}\;\Varid{→}\;\Conid{Msg}{}\<[E]%
\ColumnHook
\end{hscode}\resethooks
\fi

\iffullversion
\begin{sidewaysfigure}
\else
\begin{figure*}[!t]
\fi
\centering
\begin{varwidth}[t]{\textwidth}
\iffullversion
\begin{hscode}\SaveRestoreHook
\column{B}{@{}>{\hspre}l<{\hspost}@{}}%
\column{E}{@{}>{\hspre}l<{\hspost}@{}}%
\>[B]{}\Keyword{data}\;\dummy\xrightarrow{\dummy}\dummy\;\mathbin{:}\;\Conid{Machine}\;\Varid{→}\;\Conid{Tagged}\;\Conid{Msg}\;\Varid{→}\;\Conid{Machine}\;\Varid{→}\;\star\;\Keyword{where}{}\<[E]%
\ColumnHook
\end{hscode}\resethooks
\fi
\removecodespace
\iffullversion
\indentcolumn{3}
\rightaligncolumn{5}
\centeraligncolumn{83}
\savecolumns
\begin{hscode}\SaveRestoreHook
\column{B}{@{}>{\hspre}l<{\hspost}@{}}%
\column{3}{@{}>{\hspre}l<{\hspost}@{}}%
\column{5}{@{}>{\hspre}l<{\hspost}@{}}%
\column{19}{@{}>{\hspre}l<{\hspost}@{}}%
\column{83}{@{}>{\hspre}l<{\hspost}@{}}%
\column{143}{@{}>{\hspre}l<{\hspost}@{}}%
\column{E}{@{}>{\hspre}l<{\hspost}@{}}%
\>[3]{}\MyConid{VAR}\;{}\<[19]%
\>[19]{}\mathbin{:}\;\Varid{∀}\;\{\mskip1.5mu \Varid{n}\;\Varid{c}\;\Varid{e}\;\Varid{s}\;\Varid{v}\;\Varid{r}\;\Varid{h}_{\Varid{cl}}\;\Varid{h}_{\Varid{cnt}}\mskip1.5mu\}\;\Varid{→}\;\Varid{lookup}\;\Varid{n}\;\Varid{e}\;\Varid{≡}\;\MyConid{just}\;\Varid{v}\;\Varid{→}\;{}\<[E]%
\\
\>[3]{}\hsindent{2}{}\<[5]%
\>[5]{}(\MyConid{just}\;(\MyConid{VAR}\;\Varid{n}\;\MyConid{;}\;\Varid{c},\Varid{e},\Varid{s},\Varid{r}),\Varid{h}_{\Varid{cl}},\Varid{h}_{\Varid{cnt}})\;\;{}\<[83]%
\>[83]{}\xrightarrow{\MyConid{silent}}\;\;{}\<[143]%
\>[143]{}(\MyConid{just}\;(\Varid{c},\Varid{e},\Varid{v}\;\Varid{∷}\;\Varid{s},\Varid{r}),\Varid{h}_{\Varid{cl}},\Varid{h}_{\Varid{cnt}}){}\<[E]%
\\
\>[3]{}\MyConid{CLOS}\;{}\<[19]%
\>[19]{}\mathbin{:}\;\Varid{∀}\;\{\mskip1.5mu \Varid{c'}\;\Varid{c}\;\Varid{e}\;\Varid{s}\;\Varid{r}\;\Varid{h}_{\Varid{cl}}\;\Varid{h}_{\Varid{cnt}}\mskip1.5mu\}\;\Varid{→}\;\Keyword{let}\;(\Varid{h'}_{\Varid{cl}},\Varid{ptr}_{\Varid{cl}})\;\mathrel{=}\;\Varid{h}_{\Varid{cl}}\;\Varid{▸}\;(\Varid{c'},\Varid{e})\;\Keyword{in}{}\<[E]%
\\
\>[3]{}\hsindent{2}{}\<[5]%
\>[5]{}(\MyConid{just}\;(\MyConid{CLOS}\;\Varid{c'}\;\MyConid{;}\;\Varid{c},\Varid{e},\Varid{s},\Varid{r}),\Varid{h}_{\Varid{cl}},\Varid{h}_{\Varid{cnt}})\;\;{}\<[83]%
\>[83]{}\xrightarrow{\MyConid{silent}}\;\;{}\<[143]%
\>[143]{}(\MyConid{just}\;(\Varid{c},\Varid{e},\MyConid{clos}\;\Varid{ptr}_{\Varid{cl}}\;\Varid{∷}\;\Varid{s},\Varid{r}),\Varid{h'}_{\Varid{cl}},\Varid{h}_{\Varid{cnt}}){}\<[E]%
\ColumnHook
\end{hscode}\resethooks
\removecodespace
\restorecolumns
\fi
\rightaligncolumn{5}
\centeraligncolumn{83}
\indentcolumn{3}
\begin{hscode}\SaveRestoreHook
\column{B}{@{}>{\hspre}l<{\hspost}@{}}%
\column{3}{@{}>{\hspre}l<{\hspost}@{}}%
\column{5}{@{}>{\hspre}l<{\hspost}@{}}%
\column{19}{@{}>{\hspre}l<{\hspost}@{}}%
\column{83}{@{}>{\hspre}l<{\hspost}@{}}%
\column{143}{@{}>{\hspre}l<{\hspost}@{}}%
\column{E}{@{}>{\hspre}l<{\hspost}@{}}%
\>[3]{}\MyConid{APPL-send}\;{}\<[19]%
\>[19]{}\mathbin{:}\;\Varid{∀}\;\{\mskip1.5mu \Varid{c}\;\Varid{e}\;\Varid{v}\;\Varid{ptr}_{\Varid{cl}}\;\Varid{s}\;\Varid{r}\;\Varid{h}_{\Varid{cl}}\;\Varid{h}_{\Varid{cnt}}\mskip1.5mu\}\;\Varid{→}\;\Keyword{let}\;(\Varid{h'}_{\Varid{cnt}},\Varid{ptr}_{\Varid{cnt}})\;\mathrel{=}\;\Varid{h}_{\Varid{cnt}}\;\Varid{▸}\;((\Varid{c},\Varid{e}),\Varid{s},\Varid{r})\;\Keyword{in}{}\<[E]%
\\
\>[3]{}\hsindent{2}{}\<[5]%
\>[5]{}(\MyConid{just}\;(\MyConid{APPL}\;\MyConid{;}\;\Varid{c},\Varid{e},\Varid{v}\;\Varid{∷}\;\MyConid{clos}\;\Varid{ptr}_{\Varid{cl}}\;\Varid{∷}\;\Varid{s},\Varid{r}),\Varid{h}_{\Varid{cl}},\Varid{h}_{\Varid{cnt}})\;\;{}\<[83]%
\>[83]{}\xrightarrow{\MyConid{send}\;(\MyConid{APPL}\;\Varid{ptr}_{\Varid{cl}}\;\Varid{v}\;\Varid{ptr}_{\Varid{cnt}})}\;\;{}\<[143]%
\>[143]{}(\MyConid{nothing},\Varid{h}_{\Varid{cl}},\Varid{h'}_{\Varid{cnt}}){}\<[E]%
\\
\>[3]{}\MyConid{APPL-receive}\;{}\<[19]%
\>[19]{}\mathbin{:}\;\Varid{∀}\;\{\mskip1.5mu \Varid{h}_{\Varid{cl}}\;\Varid{h}_{\Varid{cnt}}\;\Varid{ptr}_{\Varid{cl}}\;\Varid{v}\;\Varid{ptr}_{\Varid{cnt}}\;\Varid{c}\;\Varid{e}\mskip1.5mu\}\;\Varid{→}\;\Varid{h}_{\Varid{cl}}\;\mathbin{!}\;\Varid{ptr}_{\Varid{cl}}\;\Varid{≡}\;\MyConid{just}\;(\Varid{c},\Varid{e})\;\Varid{→}\;{}\<[E]%
\\
\>[3]{}\hsindent{2}{}\<[5]%
\>[5]{}(\MyConid{nothing},\Varid{h}_{\Varid{cl}},\Varid{h}_{\Varid{cnt}})\;\;{}\<[83]%
\>[83]{}\xrightarrow{\MyConid{receive}\;(\MyConid{APPL}\;\Varid{ptr}_{\Varid{cl}}\;\Varid{v}\;\Varid{ptr}_{\Varid{cnt}})}\;\;{}\<[143]%
\>[143]{}(\MyConid{just}\;(\Varid{c},\Varid{v}\;\Varid{∷}\;\Varid{e},[\mskip1.5mu \mskip1.5mu],\MyConid{just}\;\Varid{ptr}_{\Varid{cnt}}),\Varid{h}_{\Varid{cl}},\Varid{h}_{\Varid{cnt}}){}\<[E]%
\\
\>[3]{}\MyConid{RET-send}\;{}\<[19]%
\>[19]{}\mathbin{:}\;\Varid{∀}\;\{\mskip1.5mu \Varid{e}\;\Varid{v}\;\Varid{ptr}_{\Varid{cnt}}\;\Varid{h}_{\Varid{cl}}\;\Varid{h}_{\Varid{cnt}}\mskip1.5mu\}\;\Varid{→}\;{}\<[E]%
\\
\>[3]{}\hsindent{2}{}\<[5]%
\>[5]{}(\MyConid{just}\;(\MyConid{RET},\Varid{e},\Varid{v}\;\Varid{∷}\;[\mskip1.5mu \mskip1.5mu],\MyConid{just}\;\Varid{ptr}_{\Varid{cnt}}),\Varid{h}_{\Varid{cl}},\Varid{h}_{\Varid{cnt}})\;\;{}\<[83]%
\>[83]{}\xrightarrow{\MyConid{send}\;(\MyConid{RET}\;\Varid{ptr}_{\Varid{cnt}}\;\Varid{v})}\;\;{}\<[143]%
\>[143]{}(\MyConid{nothing},\Varid{h}_{\Varid{cl}},\Varid{h}_{\Varid{cnt}}){}\<[E]%
\\
\>[3]{}\MyConid{RET-receive}\;{}\<[19]%
\>[19]{}\mathbin{:}\;\Varid{∀}\;\{\mskip1.5mu \Varid{h}_{\Varid{cl}}\;\Varid{h}_{\Varid{cnt}}\;\Varid{ptr}_{\Varid{cnt}}\;\Varid{v}\;\Varid{c}\;\Varid{e}\;\Varid{s}\;\Varid{r}\mskip1.5mu\}\;\Varid{→}\;\Varid{h}_{\Varid{cnt}}\;\mathbin{!}\;\Varid{ptr}_{\Varid{cnt}}\;\Varid{≡}\;\MyConid{just}\;((\Varid{c},\Varid{e}),\Varid{s},\Varid{r})\;\Varid{→}\;{}\<[E]%
\\
\>[3]{}\hsindent{2}{}\<[5]%
\>[5]{}(\MyConid{nothing},\Varid{h}_{\Varid{cl}},\Varid{h}_{\Varid{cnt}})\;\;{}\<[83]%
\>[83]{}\xrightarrow{\MyConid{receive}\;(\MyConid{RET}\;\Varid{ptr}_{\Varid{cnt}}\;\Varid{v})}\;\;{}\<[143]%
\>[143]{}(\MyConid{just}\;(\Varid{c},\Varid{e},\Varid{v}\;\Varid{∷}\;\Varid{s},\Varid{r}),\Varid{h}_{\Varid{cl}},\Varid{h}_{\Varid{cnt}}){}\<[E]%
\ColumnHook
\end{hscode}\resethooks
\iffullversion
\removecodespace
\restorecolumns
\rightaligncolumn{5}
\centeraligncolumn{83}
\indentcolumn{3}
\begin{hscode}\SaveRestoreHook
\column{B}{@{}>{\hspre}l<{\hspost}@{}}%
\column{3}{@{}>{\hspre}l<{\hspost}@{}}%
\column{5}{@{}>{\hspre}l<{\hspost}@{}}%
\column{6}{@{}>{\hspre}l<{\hspost}@{}}%
\column{19}{@{}>{\hspre}l<{\hspost}@{}}%
\column{83}{@{}>{\hspre}l<{\hspost}@{}}%
\column{143}{@{}>{\hspre}l<{\hspost}@{}}%
\column{E}{@{}>{\hspre}l<{\hspost}@{}}%
\>[3]{}\MyConid{COND-0}\;{}\<[19]%
\>[19]{}\mathbin{:}\;\Varid{∀}\;\{\mskip1.5mu \Varid{c}\;\Varid{c'}\;\Varid{e}\;\Varid{s}\;\Varid{r}\;\Varid{h}_{\Varid{cl}}\;\Varid{h}_{\Varid{cnt}}\mskip1.5mu\}\;\Varid{→}\;{}\<[E]%
\\
\>[3]{}\hsindent{2}{}\<[5]%
\>[5]{}(\MyConid{just}\;(\MyConid{COND}\;\Varid{c}\;\Varid{c'},\Varid{e},\MyConid{nat}\;\Varid{0}\;\Varid{∷}\;\Varid{s},\Varid{r}),\Varid{h}_{\Varid{cl}},\Varid{h}_{\Varid{cnt}})\;\;{}\<[83]%
\>[83]{}\xrightarrow{\MyConid{silent}}\;\;{}\<[143]%
\>[143]{}(\MyConid{just}\;(\Varid{c},\Varid{e},\Varid{s},\Varid{r}),\Varid{h}_{\Varid{cl}},\Varid{h}_{\Varid{cnt}}){}\<[E]%
\\
\>[3]{}\MyConid{COND-1+n}\;{}\<[19]%
\>[19]{}\mathbin{:}\;\Varid{∀}\;\{\mskip1.5mu \Varid{c}\;\Varid{c'}\;\Varid{e}\;\Varid{n}\;\Varid{s}\;\Varid{r}\;\Varid{h}_{\Varid{cl}}\;\Varid{h}_{\Varid{cnt}}\mskip1.5mu\}\;\Varid{→}\;{}\<[E]%
\\
\>[3]{}\hsindent{2}{}\<[5]%
\>[5]{}(\MyConid{just}\;(\MyConid{COND}\;\Varid{c}\;\Varid{c'},\Varid{e},\MyConid{nat}\;(1+\;\Varid{n})\;\Varid{∷}\;\Varid{s},\Varid{r}),\Varid{h}_{\Varid{cl}},\Varid{h}_{\Varid{cnt}})\;\;{}\<[83]%
\>[83]{}\xrightarrow{\MyConid{silent}}\;\;{}\<[143]%
\>[143]{}(\MyConid{just}\;(\Varid{c'},\Varid{e},\Varid{s},\Varid{r}),\Varid{h}_{\Varid{cl}},\Varid{h}_{\Varid{cnt}}){}\<[E]%
\\
\>[3]{}\MyConid{LIT}\;{}\<[19]%
\>[19]{}\mathbin{:}\;\Varid{∀}\;\{\mskip1.5mu \Varid{l}\;\Varid{c}\;\Varid{e}\;\Varid{s}\;\Varid{r}\;\Varid{h}_{\Varid{cl}}\;\Varid{h}_{\Varid{cnt}}\mskip1.5mu\}\;\Varid{→}\;{}\<[E]%
\\
\>[3]{}\hsindent{2}{}\<[5]%
\>[5]{}(\MyConid{just}\;(\MyConid{LIT}\;\Varid{l}\;\MyConid{;}\;\Varid{c},\Varid{e},\Varid{s},\Varid{r}),\Varid{h}_{\Varid{cl}},\Varid{h}_{\Varid{cnt}})\;\;{}\<[83]%
\>[83]{}\xrightarrow{\MyConid{silent}}\;\;{}\<[143]%
\>[143]{}(\MyConid{just}\;(\Varid{c},\Varid{e},\MyConid{nat}\;\Varid{l}\;\Varid{∷}\;\Varid{s},\Varid{r}),\Varid{h}_{\Varid{cl}},\Varid{h}_{\Varid{cnt}}){}\<[E]%
\\
\>[3]{}\MyConid{OP}\;{}\<[19]%
\>[19]{}\mathbin{:}\;\Varid{∀}\;\{\mskip1.5mu \Varid{f}\;\Varid{c}\;\Varid{e}\;\Varid{l₁}\;\Varid{l₂}\;\Varid{s}\;\Varid{r}\;\Varid{h}_{\Varid{cl}}\;\Varid{h}_{\Varid{cnt}}\mskip1.5mu\}\;\Varid{→}\;{}\<[E]%
\\
\>[3]{}\hsindent{3}{}\<[6]%
\>[6]{}(\MyConid{just}\;(\MyConid{OP}\;\Varid{f}\;\MyConid{;}\;\Varid{c},\Varid{e},\MyConid{nat}\;\Varid{l₁}\;\Varid{∷}\;\MyConid{nat}\;\Varid{l₂}\;\Varid{∷}\;\Varid{s},\Varid{r}),\Varid{h}_{\Varid{cl}},\Varid{h}_{\Varid{cnt}})\;\;{}\<[83]%
\>[83]{}\xrightarrow{\MyConid{silent}}\;\;{}\<[143]%
\>[143]{}(\MyConid{just}\;(\Varid{c},\Varid{e},\MyConid{nat}\;(\Varid{f}\;\Varid{l₁}\;\Varid{l₂})\;\Varid{∷}\;\Varid{s},\Varid{r}),\Varid{h}_{\Varid{cl}},\Varid{h}_{\Varid{cnt}}){}\<[E]%
\ColumnHook
\end{hscode}\resethooks
\fi
\removecodespace
\iffullversion
\caption{The definition of the transition relation of the \DCESHi{} machine.}
\else
\caption{The definition of the transition relation of the \DCESHi{} machine (excerpt).}
\fi
\label{figure:ADCESH-step}
\end{varwidth}
\iffullversion
\end{sidewaysfigure}
\else
\end{figure*}
\fi

Fig.~\ref{figure:ADCESH-step} defines the transition relation for the
\DCESHi{} machine, written {\textsmaller[.5]{\ensuremath{\Varid{m}\;\;\xrightarrow{\Varid{tmsg}}\;\;\Varid{m'}}}}
for a tagged message {\textsmaller[.5]{\ensuremath{\Varid{tmsg}}}} and machine configurations {\textsmaller[.5]{\ensuremath{\Varid{m}}}} and {\textsmaller[.5]{\ensuremath{\Varid{m'}}}}.
Most transitions are the same as in the CESH machine, just
framed with the additional heaps and the {\textsmaller[.5]{\ensuremath{\MyConid{just}}}} meaning that
the thread is running.
\ifnotfullversion
  We elide them for brevity.
\fi
The interesting rules are the decomposed application and return rules.  When an
application is performed, an {\textsmaller[.5]{\ensuremath{\MyConid{APPL}}}} message containing a pointer
to the closure to apply, the argument value and a pointer to a return continuation (which
is first allocated) is sent, and the thread is stopped (represented by the {\textsmaller[.5]{\ensuremath{\MyConid{nothing}}}}).
The machine can receive an application message if the thread is not running.
When that happens, the closure pointer is dereferenced and
entered, adding the received argument to the
environment. The stack is left empty apart from the continuation
pointer of the received message.
When returning from a function application, the machine sends a return
message containing the continuation pointer and the value to return.
On the receiving end of that communication, it dereferences the
continuation pointer and enters it, putting the result value on top of
the stack.

\begin{example}
We show what happens when we have instantiated the asynchronous networks of
the {\textsmaller[.5]{\ensuremath{\Conid{Network}}}} module with this transition relation, using the unit (one-element)
type for the {\textsmaller[.5]{\ensuremath{\Conid{Node}}}} set. Once again we trace the execution of our
running example, {\textsmaller[.5]{\ensuremath{\Varid{codeExample}}}}.
For readability, we write heaps with pointer mappings like {\textsmaller[.5]{\ensuremath{\{\mskip1.5mu \Varid{ptr}\;\Varid{↦}\;\Varid{element}\mskip1.5mu\}}}}.  The last list shown in each step is the message list of
the asynchronous network.
\begin{hscode}\SaveRestoreHook
\column{B}{@{}>{\hspre}l<{\hspost}@{}}%
\column{6}{@{}>{\hspre}l<{\hspost}@{}}%
\column{10}{@{}>{\hspre}l<{\hspost}@{}}%
\column{13}{@{}>{\hspre}l<{\hspost}@{}}%
\column{E}{@{}>{\hspre}l<{\hspost}@{}}%
\>[B]{}\Keyword{let}\;{}\<[6]%
\>[6]{}\Varid{h}_{\Varid{cl}}\;{}\<[13]%
\>[13]{}\mathrel{=}\;\{\mskip1.5mu \Varid{ptr₁}\;\Varid{↦}\;(\Varid{c₁},[\mskip1.5mu \mskip1.5mu])\mskip1.5mu\}{}\<[E]%
\\
\>[6]{}\Varid{h'}_{\Varid{cl}}\;{}\<[13]%
\>[13]{}\mathrel{=}\;\{\mskip1.5mu \Varid{ptr₁}\;\Varid{↦}\;(\Varid{c₁},[\mskip1.5mu \mskip1.5mu]),\Varid{ptr₂}\;\Varid{↦}\;(\Varid{c₂},[\mskip1.5mu \mskip1.5mu])\mskip1.5mu\}{}\<[E]%
\\
\>[6]{}\Varid{h}_{\Varid{cnt}}\;{}\<[13]%
\>[13]{}\mathrel{=}\;\{\mskip1.5mu \Varid{ptr}_{\Varid{cnt}}\;\Varid{↦}\;((\MyConid{END},[\mskip1.5mu \mskip1.5mu]),[\mskip1.5mu \mskip1.5mu],\MyConid{nothing})\mskip1.5mu\}{}\<[E]%
\\
\>[B]{}\Keyword{in}\;(\MyConid{just}\;({}\<[13]%
\>[13]{}\MyConid{CLOS}\;\Varid{c₁}\;\MyConid{;}\;\MyConid{CLOS}\;\Varid{c₂}\;\MyConid{;}\;\MyConid{APPL}\;\MyConid{;}\;\MyConid{END},[\mskip1.5mu \mskip1.5mu],[\mskip1.5mu \mskip1.5mu],{}\<[E]%
\\
\>[13]{}\MyConid{nothing}),\Varid{∅},\Varid{∅}),[\mskip1.5mu \mskip1.5mu]{}\<[E]%
\\
\>[B]{}\Varid{⟶⟨}\;\MyConid{step}\;\MyConid{CLOS}\;\Varid{⟩}{}\<[E]%
\\
\>[B]{}(\MyConid{just}\;({}\<[10]%
\>[10]{}\MyConid{CLOS}\;\Varid{c₂}\;\MyConid{;}\;\MyConid{APPL}\;\MyConid{;}\;\MyConid{END},[\mskip1.5mu \mskip1.5mu],[\mskip1.5mu \MyConid{clos}\;\Varid{ptr₁}\mskip1.5mu],{}\<[E]%
\\
\>[10]{}\MyConid{nothing}),\Varid{h}_{\Varid{cl}},\Varid{∅}),[\mskip1.5mu \mskip1.5mu]{}\<[E]%
\\
\>[B]{}\Varid{⟶⟨}\;\MyConid{step}\;\MyConid{CLOS}\;\Varid{⟩}{}\<[E]%
\\
\>[B]{}(\MyConid{just}\;({}\<[10]%
\>[10]{}\MyConid{APPL}\;\MyConid{;}\;\MyConid{END},[\mskip1.5mu \mskip1.5mu],[\mskip1.5mu \MyConid{clos}\;\Varid{ptr₂},\MyConid{clos}\;\Varid{ptr₁}\mskip1.5mu],{}\<[E]%
\\
\>[10]{}\MyConid{nothing}),\Varid{h'}_{\Varid{cl}},\Varid{∅}),[\mskip1.5mu \mskip1.5mu]{}\<[E]%
\\
\>[B]{}\Varid{⟶⟨}\;\MyConid{step}\;\MyConid{APPL-send}\;\Varid{⟩}{}\<[E]%
\\
\>[B]{}(\MyConid{nothing},\Varid{h'}_{\Varid{cl}},\Varid{h}_{\Varid{cnt}}),[\mskip1.5mu \MyConid{APPL}\;\Varid{ptr₁}\;(\MyConid{clos}\;\Varid{ptr₂})\;\Varid{ptr}_{\Varid{cnt}}\mskip1.5mu]{}\<[E]%
\\
\>[B]{}\Varid{⟶⟨}\;\MyConid{step}\;\MyConid{APPL-receive}\;\Varid{⟩}{}\<[E]%
\\
\>[B]{}(\MyConid{just}\;({}\<[10]%
\>[10]{}\MyConid{VAR}\;\Varid{0}\;\MyConid{;}\;\MyConid{RET},[\mskip1.5mu \MyConid{clos}\;\Varid{ptr₂}\mskip1.5mu],[\mskip1.5mu \mskip1.5mu],{}\<[E]%
\\
\>[10]{}\MyConid{just}\;\Varid{ptr}_{\Varid{cnt}}),\Varid{h'}_{\Varid{cl}},\Varid{h}_{\Varid{cnt}}),[\mskip1.5mu \mskip1.5mu]{}\<[E]%
\\
\>[B]{}\Varid{⟶⟨}\;\MyConid{step}\;(\MyConid{VAR}\;\Varid{refl})\;\Varid{⟩}{}\<[E]%
\\
\>[B]{}(\MyConid{just}\;({}\<[10]%
\>[10]{}\MyConid{RET},[\mskip1.5mu \MyConid{clos}\;\Varid{ptr₂}\mskip1.5mu],[\mskip1.5mu \MyConid{clos}\;\Varid{ptr₂}\mskip1.5mu],{}\<[E]%
\\
\>[10]{}\MyConid{just}\;\Varid{ptr}_{\Varid{cnt}}),\Varid{h'}_{\Varid{cl}},\Varid{h}_{\Varid{cnt}}),[\mskip1.5mu \mskip1.5mu]{}\<[E]%
\\
\>[B]{}\Varid{⟶⟨}\;\MyConid{step}\;\MyConid{RET-send}\;\Varid{⟩}{}\<[E]%
\\
\>[B]{}(\MyConid{nothing},\Varid{h'}_{\Varid{cl}},\Varid{h}_{\Varid{cnt}}),[\mskip1.5mu \MyConid{RET}\;\Varid{ptr}_{\Varid{cnt}}\;(\MyConid{clos}\;\Varid{ptr₂})\mskip1.5mu]{}\<[E]%
\\
\>[B]{}\Varid{⟶⟨}\;\MyConid{step}\;\MyConid{RET-receive}\;\Varid{⟩}{}\<[E]%
\\
\>[B]{}(\MyConid{just}\;(\MyConid{END},[\mskip1.5mu \mskip1.5mu],[\mskip1.5mu \MyConid{clos}\;\Varid{ptr₂}\mskip1.5mu],\MyConid{nothing}),\Varid{h'}_{\Varid{cl}},\Varid{h}_{\Varid{cnt}}),[\mskip1.5mu \mskip1.5mu]{}\<[E]%
\ColumnHook
\end{hscode}\resethooks
  Comparing this to Example~\ref{example:CES} we can see that an
  {\textsmaller[.5]{\ensuremath{\MyConid{APPL-send}}}} followed by an {\textsmaller[.5]{\ensuremath{\MyConid{APPL-receive}}}} amounts to the same
  thing as the {\textsmaller[.5]{\ensuremath{\MyConid{APPL}}}} rule in the CES machine, and similarly for the
  {\textsmaller[.5]{\ensuremath{\MyConid{RET}}}} instruction.
\end{example}

\section{\DCESHn: The distributed CESH machine} \label{section:DCESH}

We have so far seen two extensions of the CES machine. We have seen
CESH, that adds heaps, and \DCESHi{}, that decomposes instructions
into communication in a degenerate network of only one node.
Our final extension is a machine, \DCESHn{}, that supports multiple nodes.
The main problem that we now face is that there is no centralised heap,
but each node has its own local heap. This means that, for
supporting higher-order functions across node boundaries, we have to
somehow keep references to closures in the heaps of \emph{other} nodes.
Another problem is efficiency; we would like a system where we do
not pay the higher price of communication for locally running code.
The main idea for solving these two problems is to use
\emph{remote pointers}, {\textsmaller[.5]{\ensuremath{\Conid{RPtr}\;\mathrel{=}\;\Conid{Ptr}\;\Varid{×}\;\Conid{Node}}}}, pointers paired with node
identifiers signifying on what node's heap the pointer is located.
This solves the heap problem because we always know
where a pointer comes from. It can also be used to solve the
efficiency problem since we can choose what instructions to run based on
whether a pointer is local or remote. If it is local, we run the
rules of the CESH machine. If it is remote, we run the
decomposed rules of the \DCESHi{} machine.

The final extension to the term language and bytecode will
add support for locus specifiers.
\ifnotfullversion
We add a term construct {\textsmaller[.5]{\ensuremath{\Varid{t}\;\MyConid{$\boldsymbol{@}$}\;\Varid{i}}}},
and an instruction {\textsmaller[.5]{\ensuremath{\MyConid{REMOTE}\;\Varid{c}\;\Varid{i}}}} for its compilation.
\else
\begin{hscode}\SaveRestoreHook
\column{B}{@{}>{\hspre}l<{\hspost}@{}}%
\column{3}{@{}>{\hspre}l<{\hspost}@{}}%
\column{9}{@{}>{\hspre}l<{\hspost}@{}}%
\column{12}{@{}>{\hspre}l<{\hspost}@{}}%
\column{E}{@{}>{\hspre}l<{\hspost}@{}}%
\>[B]{}\Keyword{data}\;\Conid{Term}\;\mathbin{:}\;\star\;\Keyword{where}{}\<[E]%
\\
\>[B]{}\hsindent{3}{}\<[3]%
\>[3]{}\Varid{...}{}\<[E]%
\\
\>[B]{}\hsindent{3}{}\<[3]%
\>[3]{}{\dummy\MyConid{$\boldsymbol{@}$}\dummy}\;{}\<[9]%
\>[9]{}\mathbin{:}\;\Conid{Term}\;\Varid{→}\;\Conid{Node}\;\Varid{→}\;\Conid{Term}{}\<[E]%
\\
\>[B]{}\Keyword{data}\;\Conid{Instr}\;\mathbin{:}\;\star\;\Keyword{where}{}\<[E]%
\\
\>[B]{}\hsindent{3}{}\<[3]%
\>[3]{}\Varid{...}{}\<[E]%
\\
\>[B]{}\hsindent{3}{}\<[3]%
\>[3]{}\MyConid{REMOTE}\;{}\<[12]%
\>[12]{}\mathbin{:}\;\Conid{Code}\;\Varid{→}\;\Conid{Node}\;\Varid{→}\;\Conid{Instr}{}\<[E]%
\ColumnHook
\end{hscode}\resethooks
\fi
The locus specifiers, {\textsmaller[.5]{\ensuremath{\Varid{t}\;\MyConid{$\boldsymbol{@}$}\;\Varid{i}}}}, are taken to mean that the
term {\textsmaller[.5]{\ensuremath{\Varid{t}}}} should be evaluated on node {\textsmaller[.5]{\ensuremath{\Varid{i}}}}.  For simplicity, we assume
that the terms {\textsmaller[.5]{\ensuremath{\Varid{t}}}} in all locus specification sub-terms {\textsmaller[.5]{\ensuremath{\Varid{t}\;\MyConid{$\boldsymbol{@}$}\;\Varid{i}}}} are
\emph{closed}.  This is a reasonable
assumption, since a term where this does not hold can be transformed
into one where it does with roughly similar behaviour, using
e.g. lambda lifting~\cite{DBLP:conf/fpca/Johnsson85}
\iffullversion
\footnote{Transform every sub-term {\textsmaller[.5]{\ensuremath{\Varid{t}\;\MyConid{$\boldsymbol{@}$}\;\Varid{i}}}} to {\textsmaller[.5]{\ensuremath{\Varid{t'}\;\mathrel{=}\;((\Varid{λ}\;\Varid{fv}\;\Varid{t.}\;\Varid{t})\;\MyConid{$\boldsymbol{@}$}\;\Varid{i})\;(\Varid{fv}\;\Varid{t})}}}.
These have ``roughly similar behaviour'' in that the semantics of
{\textsmaller[.5]{\ensuremath{\Varid{t}}}} and {\textsmaller[.5]{\ensuremath{\Varid{t'}}}} are identical under the assumption that locus specifiers
do not change the meaning of a program.
}.
\else
(transform every sub-term {\textsmaller[.5]{\ensuremath{\Varid{t}\;\MyConid{$\boldsymbol{@}$}\;\Varid{i}}}} to {\textsmaller[.5]{\ensuremath{\Varid{t'}\;\mathrel{=}\;((\Varid{λ}\;\Varid{fv}\;\Varid{t.}\;\Varid{t})\;\MyConid{$\boldsymbol{@}$}\;\Varid{i})\;(\Varid{fv}\;\Varid{t})}}}).
\fi
The {\textsmaller[.5]{\ensuremath{\MyConid{REMOTE}\;\Varid{c}\;\Varid{i}}}}
instruction will be used to start running a code fragment {\textsmaller[.5]{\ensuremath{\Varid{c}}}} on node
{\textsmaller[.5]{\ensuremath{\Varid{i}}}} in the network.  We also extend the {\textsmaller[.5]{\ensuremath{\Varid{compile'}}}} function to handle
the new term construct:
\iffullversion
\begin{hscode}\SaveRestoreHook
\column{B}{@{}>{\hspre}l<{\hspost}@{}}%
\column{E}{@{}>{\hspre}l<{\hspost}@{}}%
\>[B]{}\Varid{compile'}\;\mathbin{:}\;\Conid{Term}\;\Varid{→}\;\Conid{Code}\;\Varid{→}\;\Conid{Code}{}\<[E]%
\\
\>[B]{}\Varid{...}{}\<[E]%
\ColumnHook
\end{hscode}\resethooks
\removecodespace
\fi
\begin{hscode}\SaveRestoreHook
\column{B}{@{}>{\hspre}l<{\hspost}@{}}%
\column{20}{@{}>{\hspre}l<{\hspost}@{}}%
\column{23}{@{}>{\hspre}l<{\hspost}@{}}%
\column{E}{@{}>{\hspre}l<{\hspost}@{}}%
\>[B]{}\Varid{compile'}\;(\Varid{t}\;\MyConid{$\boldsymbol{@}$}\;\Varid{i})\;{}\<[20]%
\>[20]{}\Varid{c}\;{}\<[23]%
\>[23]{}\mathrel{=}\;\MyConid{REMOTE}\;(\Varid{compile'}\;\Varid{t}\;\MyConid{RET})\;\Varid{i}\;\MyConid{;}\;\Varid{c}{}\<[E]%
\ColumnHook
\end{hscode}\resethooks
Note that we reuse the {\textsmaller[.5]{\ensuremath{\MyConid{RET}}}} instruction to return from a remote
computation.

\iffullversion
Once again we assume that we are given a set {\textsmaller[.5]{\ensuremath{\Conid{Node}}}} with decidable
equality:
\begin{hscode}\SaveRestoreHook
\column{B}{@{}>{\hspre}l<{\hspost}@{}}%
\column{3}{@{}>{\hspre}l<{\hspost}@{}}%
\column{E}{@{}>{\hspre}l<{\hspost}@{}}%
\>[B]{}\Keyword{module}\;\Conid{DCESH}{}\<[E]%
\\
\>[B]{}\hsindent{3}{}\<[3]%
\>[3]{}(\Conid{Node}\;\mathbin{:}\;\star){}\<[E]%
\\
\>[B]{}\hsindent{3}{}\<[3]%
\>[3]{}(\Varid{\char95 ≟\char95 }\;\mathbin{:}\;(\Varid{n}\;\Varid{n'}\;\mathbin{:}\;\Conid{Node})\;\Varid{→}\;\Conid{Dec}\;(\Varid{n}\;\Varid{≡}\;\Varid{n'})){}\<[E]%
\\
\>[B]{}\hsindent{3}{}\<[3]%
\>[3]{}\Keyword{where}{}\<[E]%
\ColumnHook
\end{hscode}\resethooks
\fi

\iffullversion
The intended meaning of a remote pointer {\textsmaller[.5]{\ensuremath{\Conid{RPtr}}}} is that it is a
pointer located in the heap of the given node. We assume once again
that the set {\textsmaller[.5]{\ensuremath{\Conid{Node}}}} has decidable equality meaning that we can, for
instance, determine if an {\textsmaller[.5]{\ensuremath{\Conid{RPtr}}}} is remote or local.  This generalises
the \DCESHi{} machine, since we can now hold pointers
pointing to something in the heap of \emph{another} node's machine.
\begin{hscode}\SaveRestoreHook
\column{B}{@{}>{\hspre}l<{\hspost}@{}}%
\column{10}{@{}>{\hspre}l<{\hspost}@{}}%
\column{E}{@{}>{\hspre}l<{\hspost}@{}}%
\>[B]{}\Conid{RPtr}\;\mathrel{=}\;\Conid{Ptr}\;\Varid{×}\;\Conid{Node}{}\<[E]%
\\
\>[B]{}\Conid{ClosPtr}\;{}\<[10]%
\>[10]{}\mathrel{=}\;\Conid{RPtr}{}\<[E]%
\ColumnHook
\end{hscode}\resethooks
\fi
The definition of closures, values, environments and closure heaps are
the same as in the CESH machine, but using {\textsmaller[.5]{\ensuremath{\Conid{RPtr}}}} instead of {\textsmaller[.5]{\ensuremath{\Conid{Ptr}}}} for closure pointers.
\iffullversion
\begin{hscode}\SaveRestoreHook
\column{B}{@{}>{\hspre}l<{\hspost}@{}}%
\column{3}{@{}>{\hspre}l<{\hspost}@{}}%
\column{5}{@{}>{\hspre}l<{\hspost}@{}}%
\column{11}{@{}>{\hspre}l<{\hspost}@{}}%
\column{22}{@{}>{\hspre}l<{\hspost}@{}}%
\column{E}{@{}>{\hspre}l<{\hspost}@{}}%
\>[B]{}\Keyword{mutual}{}\<[E]%
\\
\>[B]{}\hsindent{3}{}\<[3]%
\>[3]{}\Conid{Closure}\;\mathrel{=}\;\Conid{Code}\;\Varid{×}\;\Conid{Env}{}\<[E]%
\\
\>[B]{}\hsindent{3}{}\<[3]%
\>[3]{}\Keyword{data}\;\Conid{Value}\;\mathbin{:}\;\star\;\Keyword{where}{}\<[E]%
\\
\>[3]{}\hsindent{2}{}\<[5]%
\>[5]{}\MyConid{nat}\;{}\<[11]%
\>[11]{}\mathbin{:}\;\Conid{ℕ}\;{}\<[22]%
\>[22]{}\Varid{→}\;\Conid{Value}{}\<[E]%
\\
\>[3]{}\hsindent{2}{}\<[5]%
\>[5]{}\MyConid{clos}\;{}\<[11]%
\>[11]{}\mathbin{:}\;\Conid{ClosPtr}\;{}\<[22]%
\>[22]{}\Varid{→}\;\Conid{Value}{}\<[E]%
\\
\>[B]{}\hsindent{3}{}\<[3]%
\>[3]{}\Conid{Env}\;\mathrel{=}\;\Conid{List}\;\Conid{Value}{}\<[E]%
\\
\>[B]{}\Conid{ClosHeap}\;\mathrel{=}\;\Conid{Heap}\;\Conid{Closure}{}\<[E]%
\ColumnHook
\end{hscode}\resethooks
\fi

The stack combines the functionality of the CES(H) machine, permitting
local continuations, with that of the \DCESHi{} machine, making it
possible for a stack to end with a continuation on another node. A
stack element is a value or a (local) continuation signified by the
{\textsmaller[.5]{\ensuremath{\MyConid{val}}}} and {\textsmaller[.5]{\ensuremath{\MyConid{cont}}}} constructors.  A stack ({\textsmaller[.5]{\ensuremath{\Conid{Stack}}}}) is a list of stack elements,
possibly ending with a (remote) pointer to a
\ifnotfullversion
continuation,
{\textsmaller[.5]{\ensuremath{\Conid{List}\;\Conid{StackElem}\;\Varid{×}\;\Conid{Maybe}\;\Conid{ContPtr}}}} (where {\textsmaller[.5]{\ensuremath{\Conid{ContPtr}\;\mathrel{=}\;\Conid{RPtr}}}}).
\else
continuation.
\begin{hscode}\SaveRestoreHook
\column{B}{@{}>{\hspre}l<{\hspost}@{}}%
\column{3}{@{}>{\hspre}l<{\hspost}@{}}%
\column{9}{@{}>{\hspre}l<{\hspost}@{}}%
\column{10}{@{}>{\hspre}l<{\hspost}@{}}%
\column{11}{@{}>{\hspre}l<{\hspost}@{}}%
\column{20}{@{}>{\hspre}l<{\hspost}@{}}%
\column{E}{@{}>{\hspre}l<{\hspost}@{}}%
\>[B]{}\Keyword{data}\;\Conid{StackElem}\;\mathbin{:}\;\star\;\Keyword{where}{}\<[E]%
\\
\>[B]{}\hsindent{3}{}\<[3]%
\>[3]{}\MyConid{val}\;{}\<[9]%
\>[9]{}\mathbin{:}\;\Conid{Value}\;{}\<[20]%
\>[20]{}\Varid{→}\;\Conid{StackElem}{}\<[E]%
\\
\>[B]{}\hsindent{3}{}\<[3]%
\>[3]{}\MyConid{cont}\;{}\<[9]%
\>[9]{}\mathbin{:}\;\Conid{Closure}\;{}\<[20]%
\>[20]{}\Varid{→}\;\Conid{StackElem}{}\<[E]%
\\
\>[B]{}\Conid{ContPtr}\;{}\<[10]%
\>[10]{}\mathrel{=}\;\Conid{RPtr}{}\<[E]%
\\
\>[B]{}\Conid{Stack}\;{}\<[11]%
\>[11]{}\mathrel{=}\;\Conid{List}\;\Conid{StackElem}\;\Varid{×}\;\Conid{Maybe}\;\Conid{ContPtr}{}\<[E]%
\\
\>[B]{}\Conid{ContHeap}\;{}\<[11]%
\>[11]{}\mathrel{=}\;\Conid{Heap}\;(\Conid{Closure}\;\Varid{×}\;\Conid{Stack}){}\<[E]%
\ColumnHook
\end{hscode}\resethooks
\fi
Threads and machines are defined like in the \DCESHi{} machine.
\iffullversion
\begin{hscode}\SaveRestoreHook
\column{B}{@{}>{\hspre}l<{\hspost}@{}}%
\column{10}{@{}>{\hspre}l<{\hspost}@{}}%
\column{E}{@{}>{\hspre}l<{\hspost}@{}}%
\>[B]{}\Conid{Thread}\;{}\<[10]%
\>[10]{}\mathrel{=}\;\Conid{Code}\;\Varid{×}\;\Conid{Env}\;\Varid{×}\;\Conid{Stack}{}\<[E]%
\\
\>[B]{}\Conid{Machine}\;{}\<[10]%
\>[10]{}\mathrel{=}\;\Conid{Maybe}\;\Conid{Thread}\;\Varid{×}\;\Conid{ClosHeap}\;\Varid{×}\;\Conid{ContHeap}{}\<[E]%
\ColumnHook
\end{hscode}\resethooks
\fi
The messages that \DCESHn{} can send are those of the \DCESHi{}
machine but using remote pointers instead of plain pointers, plus a
message for starting a remote computation, {\textsmaller[.5]{\ensuremath{\MyConid{REMOTE}\;\Varid{c}\;\Varid{i}\;\Varid{rptr}_{\Varid{cnt}}}}}.
\iffullversion
\begin{hscode}\SaveRestoreHook
\column{B}{@{}>{\hspre}l<{\hspost}@{}}%
\column{3}{@{}>{\hspre}l<{\hspost}@{}}%
\column{11}{@{}>{\hspre}l<{\hspost}@{}}%
\column{22}{@{}>{\hspre}l<{\hspost}@{}}%
\column{31}{@{}>{\hspre}l<{\hspost}@{}}%
\column{E}{@{}>{\hspre}l<{\hspost}@{}}%
\>[B]{}\Keyword{data}\;\Conid{Msg}\;\mathbin{:}\;\star\;\Keyword{where}{}\<[E]%
\\
\>[B]{}\hsindent{3}{}\<[3]%
\>[3]{}\MyConid{REMOTE}\;{}\<[11]%
\>[11]{}\mathbin{:}\;\Conid{Code}\;{}\<[22]%
\>[22]{}\Varid{→}\;\Conid{Node}\;{}\<[31]%
\>[31]{}\Varid{→}\;\Conid{ContPtr}\;\Varid{→}\;\Conid{Msg}{}\<[E]%
\\
\>[B]{}\hsindent{3}{}\<[3]%
\>[3]{}\MyConid{RET}\;{}\<[11]%
\>[11]{}\mathbin{:}\;\Conid{ContPtr}\;{}\<[22]%
\>[22]{}\Varid{→}\;\Conid{Value}\;{}\<[31]%
\>[31]{}\Varid{→}\;\Conid{Msg}{}\<[E]%
\\
\>[B]{}\hsindent{3}{}\<[3]%
\>[3]{}\MyConid{APPL}\;{}\<[11]%
\>[11]{}\mathbin{:}\;\Conid{ClosPtr}\;{}\<[22]%
\>[22]{}\Varid{→}\;\Conid{Value}\;{}\<[31]%
\>[31]{}\Varid{→}\;\Conid{ContPtr}\;\Varid{→}\;\Conid{Msg}{}\<[E]%
\ColumnHook
\end{hscode}\resethooks
\fi
Note that sending a {\textsmaller[.5]{\ensuremath{\MyConid{REMOTE}}}} message amounts to sending code in our
formalisation, which is something that we said that it would not
do. However, because no code is generated at run-time, every machine
can be ``pre-loaded'' with all the bytecode it needs, and the message
only needs to contain a \emph{reference} to a fragment of code.

\iffullversion
\begin{sidewaysfigure}
\else
\begin{figure*}[!t]
\fi
\centering
\begin{varwidth}[t]{\textwidth}
\iffullversion
\begin{hscode}\SaveRestoreHook
\column{B}{@{}>{\hspre}l<{\hspost}@{}}%
\column{E}{@{}>{\hspre}l<{\hspost}@{}}%
\>[B]{}\Keyword{data}\;\dummy\Varid{⊢}\dummy\xrightarrow{\dummy}\dummy\;(\Varid{i}\;\mathbin{:}\;\Conid{Node})\;\mathbin{:}\;\Conid{Machine}\;\Varid{→}\;\Conid{Tagged}\;\Conid{Msg}\;\Varid{→}\;\Conid{Machine}\;\Varid{→}\;\star\;\Keyword{where}{}\<[E]%
\ColumnHook
\end{hscode}\resethooks
\removecodespace
\fi
\iffullversion
\indentcolumn{3}
\rightaligncolumn{10}
\centeraligncolumn{101}
\savecolumns
\begin{hscode}\SaveRestoreHook
\column{B}{@{}>{\hspre}l<{\hspost}@{}}%
\column{3}{@{}>{\hspre}l<{\hspost}@{}}%
\column{5}{@{}>{\hspre}l<{\hspost}@{}}%
\column{10}{@{}>{\hspre}l<{\hspost}@{}}%
\column{19}{@{}>{\hspre}l<{\hspost}@{}}%
\column{101}{@{}>{\hspre}l<{\hspost}@{}}%
\column{170}{@{}>{\hspre}l<{\hspost}@{}}%
\column{E}{@{}>{\hspre}l<{\hspost}@{}}%
\>[3]{}\MyConid{VAR}\;{}\<[19]%
\>[19]{}\mathbin{:}\;\Varid{∀}\;\{\mskip1.5mu \Varid{n}\;\Varid{c}\;\Varid{e}\;\Varid{s}\;\Varid{v}\;\Varid{r}\;\Varid{h}_{\Varid{cl}}\;\Varid{h}_{\Varid{cnt}}\mskip1.5mu\}\;\Varid{→}\;\Varid{lookup}\;\Varid{n}\;\Varid{e}\;\Varid{≡}\;\MyConid{just}\;\Varid{v}\;\Varid{→}\;{}\<[E]%
\\
\>[3]{}\hsindent{2}{}\<[5]%
\>[5]{}\Varid{i}\;\Varid{⊢}\;{}\<[10]%
\>[10]{}(\MyConid{just}\;(\MyConid{VAR}\;\Varid{n}\;\MyConid{;}\;\Varid{c},\Varid{e},\Varid{s},\Varid{r}),\Varid{h}_{\Varid{cl}},\Varid{h}_{\Varid{cnt}})\;\;{}\<[101]%
\>[101]{}\xrightarrow{\MyConid{silent}}\;\;{}\<[170]%
\>[170]{}(\MyConid{just}\;(\Varid{c},\Varid{e},\MyConid{val}\;\Varid{v}\;\Varid{∷}\;\Varid{s},\Varid{r}),\Varid{h}_{\Varid{cl}},\Varid{h}_{\Varid{cnt}}){}\<[E]%
\\
\>[3]{}\MyConid{CLOS}\;{}\<[19]%
\>[19]{}\mathbin{:}\;\Varid{∀}\;\{\mskip1.5mu \Varid{c'}\;\Varid{c}\;\Varid{e}\;\Varid{s}\;\Varid{r}\;\Varid{h}_{\Varid{cl}}\;\Varid{h}_{\Varid{cnt}}\mskip1.5mu\}\;\Varid{→}\;\Keyword{let}\;(\Varid{h'}_{\Varid{cl}},\Varid{rptr}_{\Varid{cl}})\;\mathrel{=}\;\Varid{i}\;\Varid{⊢}\;\Varid{h}_{\Varid{cl}}\;\Varid{▸}\;(\Varid{c'},\Varid{e})\;\Keyword{in}{}\<[E]%
\\
\>[3]{}\hsindent{2}{}\<[5]%
\>[5]{}\Varid{i}\;\Varid{⊢}\;{}\<[10]%
\>[10]{}(\MyConid{just}\;(\MyConid{CLOS}\;\Varid{c'}\;\MyConid{;}\;\Varid{c},\Varid{e},\Varid{s},\Varid{r}),\Varid{h}_{\Varid{cl}},\Varid{h}_{\Varid{cnt}})\;\;{}\<[101]%
\>[101]{}\xrightarrow{\MyConid{silent}}\;\;{}\<[170]%
\>[170]{}(\MyConid{just}\;(\Varid{c},\Varid{e},\MyConid{val}\;(\MyConid{clos}\;\Varid{rptr}_{\Varid{cl}})\;\Varid{∷}\;\Varid{s},\Varid{r}),\Varid{h'}_{\Varid{cl}},\Varid{h}_{\Varid{cnt}}){}\<[E]%
\\
\>[3]{}\MyConid{APPL}\;{}\<[19]%
\>[19]{}\mathbin{:}\;\Varid{∀}\;\{\mskip1.5mu \Varid{c}\;\Varid{e}\;\Varid{v}\;\Varid{c'}\;\Varid{e'}\;\Varid{s}\;\Varid{r}\;\Varid{ptr}_{\Varid{cl}}\;\Varid{h}_{\Varid{cl}}\;\Varid{h}_{\Varid{cnt}}\mskip1.5mu\}\;\Varid{→}\;\Varid{h}_{\Varid{cl}}\;\mathbin{!}\;\Varid{ptr}_{\Varid{cl}}\;\Varid{≡}\;\MyConid{just}\;(\Varid{c'},\Varid{e'})\;\Varid{→}\;{}\<[E]%
\\
\>[3]{}\hsindent{2}{}\<[5]%
\>[5]{}\Varid{i}\;\Varid{⊢}\;{}\<[10]%
\>[10]{}(\MyConid{just}\;(\MyConid{APPL}\;\MyConid{;}\;\Varid{c},\Varid{e},\MyConid{val}\;\Varid{v}\;\Varid{∷}\;\MyConid{val}\;(\MyConid{clos}\;(\Varid{ptr}_{\Varid{cl}},\Varid{i}))\;\Varid{∷}\;\Varid{s},\Varid{r}),\Varid{h}_{\Varid{cl}},\Varid{h}_{\Varid{cnt}})\;\;{}\<[101]%
\>[101]{}\xrightarrow{\MyConid{silent}}\;\;{}\<[170]%
\>[170]{}(\MyConid{just}\;(\Varid{c'},\Varid{v}\;\Varid{∷}\;\Varid{e'},\MyConid{cont}\;(\Varid{c},\Varid{e})\;\Varid{∷}\;\Varid{s},\Varid{r}),\Varid{h}_{\Varid{cl}},\Varid{h}_{\Varid{cnt}}){}\<[E]%
\\
\>[3]{}\MyConid{RET}\;{}\<[19]%
\>[19]{}\mathbin{:}\;\Varid{∀}\;\{\mskip1.5mu \Varid{e}\;\Varid{v}\;\Varid{c}\;\Varid{e'}\;\Varid{s}\;\Varid{r}\;\Varid{h}_{\Varid{cl}}\;\Varid{h}_{\Varid{cnt}}\mskip1.5mu\}\;\Varid{→}\;{}\<[E]%
\\
\>[3]{}\hsindent{2}{}\<[5]%
\>[5]{}\Varid{i}\;\Varid{⊢}\;{}\<[10]%
\>[10]{}(\MyConid{just}\;(\MyConid{RET},\Varid{e},\MyConid{val}\;\Varid{v}\;\Varid{∷}\;\MyConid{cont}\;(\Varid{c},\Varid{e'})\;\Varid{∷}\;\Varid{s},\Varid{r}),\Varid{h}_{\Varid{cl}},\Varid{h}_{\Varid{cnt}})\;\;{}\<[101]%
\>[101]{}\xrightarrow{\MyConid{silent}}\;\;{}\<[170]%
\>[170]{}(\MyConid{just}\;(\Varid{c},\Varid{e'},\MyConid{val}\;\Varid{v}\;\Varid{∷}\;\Varid{s},\Varid{r}),\Varid{h}_{\Varid{cl}},\Varid{h}_{\Varid{cnt}}){}\<[E]%
\\
\>[3]{}\MyConid{LIT}\;{}\<[19]%
\>[19]{}\mathbin{:}\;\Varid{∀}\;\{\mskip1.5mu \Varid{n}\;\Varid{c}\;\Varid{e}\;\Varid{s}\;\Varid{r}\;\Varid{h}_{\Varid{cl}}\;\Varid{h}_{\Varid{cnt}}\mskip1.5mu\}\;\Varid{→}\;{}\<[E]%
\\
\>[3]{}\hsindent{2}{}\<[5]%
\>[5]{}\Varid{i}\;\Varid{⊢}\;{}\<[10]%
\>[10]{}(\MyConid{just}\;(\MyConid{LIT}\;\Varid{n}\;\MyConid{;}\;\Varid{c},\Varid{e},\Varid{s},\Varid{r}),\Varid{h}_{\Varid{cl}},\Varid{h}_{\Varid{cnt}})\;\;{}\<[101]%
\>[101]{}\xrightarrow{\MyConid{silent}}\;\;{}\<[170]%
\>[170]{}(\MyConid{just}\;(\Varid{c},\Varid{e},\MyConid{val}\;(\MyConid{nat}\;\Varid{n})\;\Varid{∷}\;\Varid{s},\Varid{r}),\Varid{h}_{\Varid{cl}},\Varid{h}_{\Varid{cnt}}){}\<[E]%
\\
\>[3]{}\MyConid{OP}\;{}\<[19]%
\>[19]{}\mathbin{:}\;\Varid{∀}\;\{\mskip1.5mu \Varid{f}\;\Varid{c}\;\Varid{e}\;\Varid{n₁}\;\Varid{n₂}\;\Varid{s}\;\Varid{r}\;\Varid{h}_{\Varid{cl}}\;\Varid{h}_{\Varid{cnt}}\mskip1.5mu\}\;\Varid{→}\;{}\<[E]%
\\
\>[3]{}\hsindent{2}{}\<[5]%
\>[5]{}\Varid{i}\;\Varid{⊢}\;{}\<[10]%
\>[10]{}(\MyConid{just}\;(\MyConid{OP}\;\Varid{f}\;\MyConid{;}\;\Varid{c},\Varid{e},\MyConid{val}\;(\MyConid{nat}\;\Varid{n₁})\;\Varid{∷}\;\MyConid{val}\;(\MyConid{nat}\;\Varid{n₂})\;\Varid{∷}\;\Varid{s},\Varid{r}),\Varid{h}_{\Varid{cl}},\Varid{h}_{\Varid{cnt}})\;\;{}\<[101]%
\>[101]{}\xrightarrow{\MyConid{silent}}\;\;{}\<[170]%
\>[170]{}(\MyConid{just}\;(\Varid{c},\Varid{e},\MyConid{val}\;(\MyConid{nat}\;(\Varid{f}\;\Varid{n₁}\;\Varid{n₂}))\;\Varid{∷}\;\Varid{s},\Varid{r}),\Varid{h}_{\Varid{cl}},\Varid{h}_{\Varid{cnt}}){}\<[E]%
\\
\>[3]{}\MyConid{COND-0}\;{}\<[19]%
\>[19]{}\mathbin{:}\;\Varid{∀}\;\{\mskip1.5mu \Varid{c}\;\Varid{c'}\;\Varid{e}\;\Varid{s}\;\Varid{r}\;\Varid{h}_{\Varid{cl}}\;\Varid{h}_{\Varid{cnt}}\mskip1.5mu\}\;\Varid{→}\;{}\<[E]%
\\
\>[3]{}\hsindent{2}{}\<[5]%
\>[5]{}\Varid{i}\;\Varid{⊢}\;{}\<[10]%
\>[10]{}(\MyConid{just}\;(\MyConid{COND}\;\Varid{c}\;\Varid{c'},\Varid{e},\MyConid{val}\;(\MyConid{nat}\;\Varid{0})\;\Varid{∷}\;\Varid{s},\Varid{r}),\Varid{h}_{\Varid{cl}},\Varid{h}_{\Varid{cnt}})\;\;{}\<[101]%
\>[101]{}\xrightarrow{\MyConid{silent}}\;\;{}\<[170]%
\>[170]{}(\MyConid{just}\;(\Varid{c},\Varid{e},\Varid{s},\Varid{r}),\Varid{h}_{\Varid{cl}},\Varid{h}_{\Varid{cnt}}){}\<[E]%
\\
\>[3]{}\MyConid{COND-1+n}\;{}\<[19]%
\>[19]{}\mathbin{:}\;\Varid{∀}\;\{\mskip1.5mu \Varid{c}\;\Varid{c'}\;\Varid{e}\;\Varid{n}\;\Varid{s}\;\Varid{r}\;\Varid{h}_{\Varid{cl}}\;\Varid{h}_{\Varid{cnt}}\mskip1.5mu\}\;\Varid{→}\;{}\<[E]%
\\
\>[3]{}\hsindent{2}{}\<[5]%
\>[5]{}\Varid{i}\;\Varid{⊢}\;{}\<[10]%
\>[10]{}(\MyConid{just}\;(\MyConid{COND}\;\Varid{c}\;\Varid{c'},\Varid{e},\MyConid{val}\;(\MyConid{nat}\;(1+\;\Varid{n}))\;\Varid{∷}\;\Varid{s},\Varid{r}),\Varid{h}_{\Varid{cl}},\Varid{h}_{\Varid{cnt}})\;\;{}\<[101]%
\>[101]{}\xrightarrow{\MyConid{silent}}\;\;{}\<[170]%
\>[170]{}(\MyConid{just}\;(\Varid{c'},\Varid{e},\Varid{s},\Varid{r}),\Varid{h}_{\Varid{cl}},\Varid{h}_{\Varid{cnt}}){}\<[E]%
\ColumnHook
\end{hscode}\resethooks
\removecodespace
\restorecolumns
\fi
\indentcolumn{3}
\rightaligncolumn{10}
\centeraligncolumn{101}
\begin{hscode}\SaveRestoreHook
\column{B}{@{}>{\hspre}l<{\hspost}@{}}%
\column{3}{@{}>{\hspre}l<{\hspost}@{}}%
\column{5}{@{}>{\hspre}l<{\hspost}@{}}%
\column{10}{@{}>{\hspre}l<{\hspost}@{}}%
\column{19}{@{}>{\hspre}l<{\hspost}@{}}%
\column{101}{@{}>{\hspre}l<{\hspost}@{}}%
\column{170}{@{}>{\hspre}l<{\hspost}@{}}%
\column{E}{@{}>{\hspre}l<{\hspost}@{}}%
\>[3]{}\MyConid{REMOTE-send}\;{}\<[19]%
\>[19]{}\mathbin{:}\;\Varid{∀}\;\{\mskip1.5mu \Varid{c'}\;\Varid{i'}\;\Varid{c}\;\Varid{e}\;\Varid{s}\;\Varid{r}\;\Varid{h}_{\Varid{cl}}\;\Varid{h}_{\Varid{cnt}}\mskip1.5mu\}\;\Varid{→}\;\Keyword{let}\;(\Varid{h'}_{\Varid{cnt}},\Varid{rptr})\;\mathrel{=}\;\Varid{i}\;\Varid{⊢}\;\Varid{h}_{\Varid{cnt}}\;\Varid{▸}\;((\Varid{c},\Varid{e}),\Varid{s},\Varid{r})\;\Keyword{in}{}\<[E]%
\\
\>[3]{}\hsindent{2}{}\<[5]%
\>[5]{}\Varid{i}\;\Varid{⊢}\;{}\<[10]%
\>[10]{}(\MyConid{just}\;(\MyConid{REMOTE}\;\Varid{c'}\;\Varid{i'}\;\MyConid{;}\;\Varid{c},\Varid{e},\Varid{s},\Varid{r}),\Varid{h}_{\Varid{cl}},\Varid{h}_{\Varid{cnt}})\;\;{}\<[101]%
\>[101]{}\xrightarrow{\MyConid{send}\;(\MyConid{REMOTE}\;\Varid{c'}\;\Varid{i'}\;\Varid{rptr})}\;\;{}\<[170]%
\>[170]{}(\MyConid{nothing},\Varid{h}_{\Varid{cl}},\Varid{h'}_{\Varid{cnt}}){}\<[E]%
\\
\>[3]{}\MyConid{REMOTE-receive}\;{}\<[19]%
\>[19]{}\mathbin{:}\;\Varid{∀}\;\{\mskip1.5mu \Varid{h}_{\Varid{cl}}\;\Varid{h}_{\Varid{cnt}}\;\Varid{c}\;\Varid{rptr}_{\Varid{cnt}}\mskip1.5mu\}\;\Varid{→}\;{}\<[E]%
\\
\>[3]{}\hsindent{2}{}\<[5]%
\>[5]{}\Varid{i}\;\Varid{⊢}\;{}\<[10]%
\>[10]{}(\MyConid{nothing},\Varid{h}_{\Varid{cl}},\Varid{h}_{\Varid{cnt}})\;\;{}\<[101]%
\>[101]{}\xrightarrow{\MyConid{receive}\;(\MyConid{REMOTE}\;\Varid{c}\;\Varid{i}\;\Varid{rptr}_{\Varid{cnt}})}\;\;{}\<[170]%
\>[170]{}(\MyConid{just}\;(\Varid{c},[\mskip1.5mu \mskip1.5mu],[\mskip1.5mu \mskip1.5mu],\MyConid{just}\;\Varid{rptr}_{\Varid{cnt}}),\Varid{h}_{\Varid{cl}},\Varid{h}_{\Varid{cnt}}){}\<[E]%
\\
\>[3]{}\MyConid{APPL-send}\;{}\<[19]%
\>[19]{}\mathbin{:}\;\Varid{∀}\;\{\mskip1.5mu \Varid{c}\;\Varid{e}\;\Varid{v}\;\Varid{ptr}_{\Varid{cl}}\;\Varid{j}\;\Varid{s}\;\Varid{r}\;\Varid{h}_{\Varid{cl}}\;\Varid{h}_{\Varid{cnt}}\mskip1.5mu\}\;\Varid{→}\;\Varid{i}\;\Varid{≢}\;\Varid{j}\;\Varid{→}\;\Keyword{let}\;(\Varid{h'}_{\Varid{cnt}},\Varid{rptr}_{\Varid{cnt}})\;\mathrel{=}\;\Varid{i}\;\Varid{⊢}\;\Varid{h}_{\Varid{cnt}}\;\Varid{▸}\;((\Varid{c},\Varid{e}),\Varid{s},\Varid{r})\;\Keyword{in}{}\<[E]%
\\
\>[3]{}\hsindent{2}{}\<[5]%
\>[5]{}\Varid{i}\;\Varid{⊢}\;{}\<[10]%
\>[10]{}(\MyConid{just}\;(\MyConid{APPL}\;\MyConid{;}\;\Varid{c},\Varid{e},\MyConid{val}\;\Varid{v}\;\Varid{∷}\;\MyConid{val}\;(\MyConid{clos}\;(\Varid{ptr}_{\Varid{cl}},\Varid{j}))\;\Varid{∷}\;\Varid{s},\Varid{r}),\Varid{h}_{\Varid{cl}},\Varid{h}_{\Varid{cnt}})\;\;\xrightarrow{\MyConid{send}\;(\MyConid{APPL}\;(\Varid{ptr}_{\Varid{cl}},\Varid{j})\;\Varid{v}\;\Varid{rptr}_{\Varid{cnt}})}\;\;(\MyConid{nothing},\Varid{h}_{\Varid{cl}},\Varid{h'}_{\Varid{cnt}}){}\<[E]%
\\
\>[3]{}\MyConid{APPL-receive}\;{}\<[19]%
\>[19]{}\mathbin{:}\;\Varid{∀}\;\{\mskip1.5mu \Varid{h}_{\Varid{cl}}\;\Varid{h}_{\Varid{cnt}}\;\Varid{ptr}_{\Varid{cl}}\;\Varid{v}\;\Varid{rptr}_{\Varid{cnt}}\;\Varid{c}\;\Varid{e}\mskip1.5mu\}\;\Varid{→}\;\Varid{h}_{\Varid{cl}}\;\mathbin{!}\;\Varid{ptr}_{\Varid{cl}}\;\Varid{≡}\;\MyConid{just}\;(\Varid{c},\Varid{e})\;\Varid{→}\;{}\<[E]%
\\
\>[3]{}\hsindent{2}{}\<[5]%
\>[5]{}\Varid{i}\;\Varid{⊢}\;{}\<[10]%
\>[10]{}(\MyConid{nothing},\Varid{h}_{\Varid{cl}},\Varid{h}_{\Varid{cnt}})\;\;{}\<[101]%
\>[101]{}\xrightarrow{\MyConid{receive}\;(\MyConid{APPL}\;(\Varid{ptr}_{\Varid{cl}},\Varid{i})\;\Varid{v}\;\Varid{rptr}_{\Varid{cnt}})}\;\;{}\<[170]%
\>[170]{}(\MyConid{just}\;(\Varid{c},\Varid{v}\;\Varid{∷}\;\Varid{e},[\mskip1.5mu \mskip1.5mu],\MyConid{just}\;\Varid{rptr}_{\Varid{cnt}}),\Varid{h}_{\Varid{cl}},\Varid{h}_{\Varid{cnt}}){}\<[E]%
\\
\>[3]{}\MyConid{RET-send}\;{}\<[19]%
\>[19]{}\mathbin{:}\;\Varid{∀}\;\{\mskip1.5mu \Varid{e}\;\Varid{v}\;\Varid{rptr}_{\Varid{cnt}}\;\Varid{h}_{\Varid{cl}}\;\Varid{h}_{\Varid{cnt}}\mskip1.5mu\}\;\Varid{→}\;{}\<[E]%
\\
\>[3]{}\hsindent{2}{}\<[5]%
\>[5]{}\Varid{i}\;\Varid{⊢}\;{}\<[10]%
\>[10]{}(\MyConid{just}\;(\MyConid{RET},\Varid{e},\MyConid{val}\;\Varid{v}\;\Varid{∷}\;[\mskip1.5mu \mskip1.5mu],\MyConid{just}\;\Varid{rptr}_{\Varid{cnt}}),\Varid{h}_{\Varid{cl}},\Varid{h}_{\Varid{cnt}})\;\;{}\<[101]%
\>[101]{}\xrightarrow{\MyConid{send}\;(\MyConid{RET}\;\Varid{rptr}_{\Varid{cnt}}\;\Varid{v})}\;\;{}\<[170]%
\>[170]{}(\MyConid{nothing},\Varid{h}_{\Varid{cl}},\Varid{h}_{\Varid{cnt}}){}\<[E]%
\\
\>[3]{}\MyConid{RET-receive}\;{}\<[19]%
\>[19]{}\mathbin{:}\;\Varid{∀}\;\{\mskip1.5mu \Varid{h}_{\Varid{cl}}\;\Varid{h}_{\Varid{cnt}}\;\Varid{ptr}_{\Varid{cnt}}\;\Varid{v}\;\Varid{c}\;\Varid{e}\;\Varid{s}\;\Varid{r}\mskip1.5mu\}\;\Varid{→}\;\Varid{h}_{\Varid{cnt}}\;\mathbin{!}\;\Varid{ptr}_{\Varid{cnt}}\;\Varid{≡}\;\MyConid{just}\;((\Varid{c},\Varid{e}),\Varid{s},\Varid{r})\;\Varid{→}\;{}\<[E]%
\\
\>[3]{}\hsindent{2}{}\<[5]%
\>[5]{}\Varid{i}\;\Varid{⊢}\;{}\<[10]%
\>[10]{}(\MyConid{nothing},\Varid{h}_{\Varid{cl}},\Varid{h}_{\Varid{cnt}})\;\;{}\<[101]%
\>[101]{}\xrightarrow{\MyConid{receive}\;(\MyConid{RET}\;(\Varid{ptr}_{\Varid{cnt}},\Varid{i})\;\Varid{v})}\;\;{}\<[170]%
\>[170]{}(\MyConid{just}\;(\Varid{c},\Varid{e},\MyConid{val}\;\Varid{v}\;\Varid{∷}\;\Varid{s},\Varid{r}),\Varid{h}_{\Varid{cl}},\Varid{h}_{\Varid{cnt}}){}\<[E]%
\ColumnHook
\end{hscode}\resethooks
\end{varwidth}
\iffullversion
\caption{The definition of the transition relation of the \DCESHn{} machine.}
\else
\caption{The definition of the transition relation of the \DCESHn{} machine (excerpt).}
\fi
\label{figure:DCESH-step}
\iffullversion
\end{sidewaysfigure}
\else
\end{figure*}
\fi
Fig.~\ref{figure:DCESH-step} defines the transition relation of
the \DCESHn{} machine, written {\textsmaller[.5]{\ensuremath{\Varid{i}\;\Varid{⊢}\;\Varid{m}\;\;\xrightarrow{\Varid{tmsg}}\;\;\Varid{m'}}}} for
a node identifier {\textsmaller[.5]{\ensuremath{\Varid{i}}}}, a tagged message {\textsmaller[.5]{\ensuremath{\Varid{tmsg}}}} and machine
configurations {\textsmaller[.5]{\ensuremath{\Varid{m}}}} and {\textsmaller[.5]{\ensuremath{\Varid{m'}}}}. The parameter {\textsmaller[.5]{\ensuremath{\Varid{i}}}} is taken to be the
identifier of the node on which the transition is taking place.
\iffullversion
Most instructions are similar to those of the CESH machine
but adapted to this new setting, using remote pointers.
The following function is used to allocate a pointer in a heap on
a node {\textsmaller[.5]{\ensuremath{\Varid{i}}}}, yielding a new heap and a remote pointer (pointing to the
node {\textsmaller[.5]{\ensuremath{\Varid{i}}}}):
\begin{hscode}\SaveRestoreHook
\column{B}{@{}>{\hspre}l<{\hspost}@{}}%
\column{29}{@{}>{\hspre}l<{\hspost}@{}}%
\column{E}{@{}>{\hspre}l<{\hspost}@{}}%
\>[B]{}\Varid{\char95 ⊢\char95 ▸\char95 }\;\mathbin{:}\;\{\mskip1.5mu \Conid{A}\;\mathbin{:}\;\star\mskip1.5mu\}\;\Varid{→}\;\Conid{Node}\;\Varid{→}\;{}\<[29]%
\>[29]{}\Conid{Heap}\;\Conid{A}\;\Varid{→}\;\Conid{A}\;\Varid{→}\;\Conid{Heap}\;\Conid{A}\;\Varid{×}\;\Conid{RPtr}{}\<[E]%
\\
\>[B]{}\Varid{i}\;\Varid{⊢}\;\Varid{h}\;\Varid{▸}\;\Varid{x}\;\mathrel{=}\;\Keyword{let}\;(\Varid{h'},\Varid{ptr})\;\mathrel{=}\;\Varid{h}\;\Varid{▸}\;\Varid{x}\;\Keyword{in}\;\Varid{h'},(\Varid{ptr},\Varid{i}){}\<[E]%
\ColumnHook
\end{hscode}\resethooks
\else
For local computations, we have rules analogous to those of the CESH
machine, so we omit them and show only those for remote computations.
The rules use the function {\textsmaller[.5]{\ensuremath{\Varid{i}\;\Varid{⊢}\;\Varid{h}\;\Varid{▸}\;\Varid{x}}}} for allocating a pointer to {\textsmaller[.5]{\ensuremath{\Varid{x}}}}
in a heap {\textsmaller[.5]{\ensuremath{\Varid{h}}}} and then constructing a remote pointer tagged with
node identifier {\textsmaller[.5]{\ensuremath{\Varid{i}}}} from it.
\fi
\iffullversion
When an application occurs and the closure pointer is on the
current node, {\textsmaller[.5]{\ensuremath{\Varid{i}}}}, the machine dereferences the pointer and enters it locally.
If there is a local continuation on the stack and the machine is to
run the return instruction, it also works just like the original CES
machine.
\fi
When starting a remote computation, the machine allocates a
continuation in the heap and sends a message containing the code and
continuation pointer to the remote node in question.
Afterwards the current thread is stopped. On the receiving end of such
a communication, a new thread is started, placing the continuation pointer
at the bottom of the stack for the later return to the caller node.
To run the apply instruction when the function closure is remote,
i.e. its location is \emph{not} equal to the current node, the machine
sends a message containing the closure pointer, argument value, and
continuation, like in the \DCESHi{} machine.
On the other end of such a communication, the machine dereferences the
pointer and enters the closure with the received value. The
bottom remote continuation pointer is set to the received continuation pointer.
After either a remote invocation or a remote application, the machine can return
if it has produced a value on the stack and has a remote continuation
at the bottom of the stack. To do this, a message containing the continuation pointer
and the return value is sent to the location of the continuation pointer.
When receiving a return message, the continuation pointer is dereferenced
and entered with the received value.

Now that we have defined the transition relation for machines we
instantiate the {\textsmaller[.5]{\ensuremath{\Conid{Network}}}} module with the {\textsmaller[.5]{\ensuremath{\Varid{⟶Machine}}}} relation.
\iffullversion
\begin{hscode}\SaveRestoreHook
\column{B}{@{}>{\hspre}l<{\hspost}@{}}%
\column{E}{@{}>{\hspre}l<{\hspost}@{}}%
\>[B]{}\Keyword{open}\;\Keyword{import}\;\Conid{Network}\;\Conid{Node}\;\Varid{\char95 ≟\char95 }\;\Varid{⟶Machine}\;\Keyword{public}{}\<[E]%
\ColumnHook
\end{hscode}\resethooks
\fi
From here on {\textsmaller[.5]{\ensuremath{\Conid{SyncNetwork}}}} and {\textsmaller[.5]{\ensuremath{\Conid{AsyncNetwork}}}} and their
transition relations will thus refer to the instantiated versions.

An initial network configuration, given a code fragment {\textsmaller[.5]{\ensuremath{\Varid{c}}}} and
a node identifier {\textsmaller[.5]{\ensuremath{\Varid{i}}}}, is a network where only node {\textsmaller[.5]{\ensuremath{\Varid{i}}}} is
active, ready to run the code fragment:
\begin{hscode}\SaveRestoreHook
\column{B}{@{}>{\hspre}l<{\hspost}@{}}%
\column{3}{@{}>{\hspre}l<{\hspost}@{}}%
\column{E}{@{}>{\hspre}l<{\hspost}@{}}%
\>[B]{}\Varid{initial-network}_{\Varid{Sync}}\;\mathbin{:}\;\Conid{Code}\;\Varid{→}\;\Conid{Node}\;\Varid{→}\;\Conid{SyncNetwork}{}\<[E]%
\\
\>[B]{}\Varid{initial-network}_{\Varid{Sync}}\;\Varid{c}\;\Varid{i}\;\mathrel{=}\;\Varid{update}\;(\Varid{λ}\;\Varid{i'}\;\Varid{→}\;(\MyConid{nothing},\Varid{∅},\Varid{∅}))\;{}\<[E]%
\\
\>[B]{}\hsindent{3}{}\<[3]%
\>[3]{}\Varid{i}\;(\MyConid{just}\;(\Varid{c},[\mskip1.5mu \mskip1.5mu],[\mskip1.5mu \mskip1.5mu],\MyConid{nothing}),\Varid{∅},\Varid{∅}){}\<[E]%
\ColumnHook
\end{hscode}\resethooks
An initial asynchronous network configuration is one where there are
no messages in message list:
\iffullversion
\begin{hscode}\SaveRestoreHook
\column{B}{@{}>{\hspre}l<{\hspost}@{}}%
\column{26}{@{}>{\hspre}l<{\hspost}@{}}%
\column{E}{@{}>{\hspre}l<{\hspost}@{}}%
\>[B]{}\Varid{initial-network}_{\Varid{Async}}\;\mathbin{:}\;{}\<[26]%
\>[26]{}\Conid{Code}\;\Varid{→}\;\Conid{Node}\;\Varid{→}\;{}\<[E]%
\\
\>[26]{}\Conid{AsyncNetwork}{}\<[E]%
\\
\>[B]{}\Varid{initial-network}_{\Varid{Async}}\;\Varid{c}\;\Varid{i}\;\mathrel{=}\;\Varid{initial-network}_{\Varid{Sync}}\;\Varid{c}\;\Varid{i},[\mskip1.5mu \mskip1.5mu]{}\<[E]%
\ColumnHook
\end{hscode}\resethooks
\else
{\textsmaller[.5]{\ensuremath{\Varid{initial-network}_{\Varid{Async}}\;\Varid{c}\;\Varid{i}\;\mathrel{=}\;\Varid{initial-network}_{\Varid{Sync}}\;\Varid{c}\;\Varid{i},[\mskip1.5mu \mskip1.5mu]}}}.
\fi

\iffullversion
\begin{lemma}
The communication that the local step relation enables is
\emph{point-to-point}: if two nodes can receive the same message, then
they are the same:
\begin{hscode}\SaveRestoreHook
\column{B}{@{}>{\hspre}l<{\hspost}@{}}%
\column{3}{@{}>{\hspre}l<{\hspost}@{}}%
\column{17}{@{}>{\hspre}l<{\hspost}@{}}%
\column{E}{@{}>{\hspre}l<{\hspost}@{}}%
\>[B]{}\Varid{point-to-point}\;{}\<[17]%
\>[17]{}\mathbin{:}\;{}\<[E]%
\\
\>[B]{}\hsindent{3}{}\<[3]%
\>[3]{}\Varid{∀}\;\Varid{i₁}\;\Varid{i₂}\;(\Varid{ms}\;\mathbin{:}\;\Conid{SyncNetwork})\;\Varid{msg}\;\{\mskip1.5mu \Varid{m₁}\;\Varid{m₂}\mskip1.5mu\}\;\Varid{→}\;{}\<[E]%
\\
\>[B]{}\hsindent{3}{}\<[3]%
\>[3]{}\Varid{i₁}\;\Varid{⊢}\;\Varid{ms}\;\Varid{i₁}\;\;\xrightarrow{\MyConid{receive}\;\Varid{msg}}\;\;\Varid{m₁}\;\Varid{→}\;{}\<[E]%
\\
\>[B]{}\hsindent{3}{}\<[3]%
\>[3]{}\Varid{i₂}\;\Varid{⊢}\;\Varid{ms}\;\Varid{i₂}\;\;\xrightarrow{\MyConid{receive}\;\Varid{msg}}\;\;\Varid{m₂}\;\Varid{→}\;{}\<[E]%
\\
\>[B]{}\hsindent{3}{}\<[3]%
\>[3]{}\Varid{i₁}\;\Varid{≡}\;\Varid{i₂}{}\<[E]%
\ColumnHook
\end{hscode}\resethooks
\end{lemma}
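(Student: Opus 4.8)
The plan is to proceed by case analysis on the two given local transitions, exploiting the fact that both are tagged \textsf{receive}. Looking at Fig.~\ref{figure:DCESH-step}, only three rules of the \DCESHn{} machine produce a \textsf{receive}-tagged transition: \textsf{REMOTE-receive}, \textsf{APPL-receive}, and \textsf{RET-receive}. The design invariant that makes the lemma true is that in each of these rules the received message syntactically records the identifier of the node performing the step: a \textsf{REMOTE} message has the shape $\mathsf{REMOTE}\ c\ i\ rptr_{cnt}$ with $i$ the receiver, an \textsf{APPL} message has the shape $\mathsf{APPL}\ (ptr_{cl},i)\ v\ rptr_{cnt}$ with $i$ the receiver, and a \textsf{RET} message has the shape $\mathsf{RET}\ (ptr_{cnt},i)\ v$ with $i$ the receiver. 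So the message determines the receiving node.

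Concretely, I would first case-split on the derivation of the first hypothesis $i_1 \vdash ms\,i_1 \xrightarrow{\mathsf{receive}\,msg} m_1$. Each of the three cases pins $msg$ to one of the shapes above (via a dotted pattern on $msg$) and exhibits $i_1$ as a distinguished component of $msg$ — directly in the \textsf{REMOTE} case, as the second projection of the remote pointer in the \textsf{APPL} and \textsf{RET} cases. I would then case-split on the derivation of the second hypothesis $i_2 \vdash ms\,i_2 \xrightarrow{\mathsf{receive}\,msg} m_2$: since the three message constructors $\mathsf{REMOTE}$, $\mathsf{APPL}$, $\mathsf{RET}$ are distinct, only the rule matching the already-fixed shape of $msg$ can fire, so six of the nine subcases are vacuous. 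In each of the three surviving (diagonal) subcases, $i_2$ is read off from the same component of $msg$ as $i_1$ was, whence $i_1 \equiv i_2$; in Agda this should reduce to \texttt{refl} once $msg$ is unified, possibly after a projection lemma for the pair underlying an \texttt{RPtr}.

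The argument is essentially mechanical — a $3\times 3$ case split with only the diagonal non-trivial — so I do not expect a genuine obstacle; the ``hard part'' is merely the bookkeeping and, more conceptually, making explicit (and checking) that every \textsf{receive} rule carries its destination node. It is worth noting in passing why the \textsf{silent}- and \textsf{send}-tagged rules play no role: they are excluded immediately by the tag, so no inversion on them is needed. Finally, this is the analogue, at the network level, of the \texttt{determinism} lemmas proved for the sequential machines, and it is exactly the property that justifies calling the communication \emph{point-to-point}; a future receive rule for a message not recording its receiver would break it.
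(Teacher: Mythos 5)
Your proposal is correct and follows essentially the same route as the paper's formal proof: the Agda term \textsf{point-to-point} cases on the message and the two \textsf{receive} derivations, and in each of the three diagonal cases the receiver identifier is forced (via dotted patterns) to be the node component recorded in the message — directly for \textsf{REMOTE}, and as the second projection of the remote pointer for \textsf{APPL} and \textsf{RET} — yielding \textsf{refl}. The off-diagonal cases are dismissed by constructor disjointness exactly as you describe.
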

\fi
We call a machine \emph{inactive} if its thread is not running, i.e. it is equal to {\textsmaller[.5]{\ensuremath{\MyConid{nothing}}}}.
\iffullversion
\begin{hscode}\SaveRestoreHook
\column{B}{@{}>{\hspre}l<{\hspost}@{}}%
\column{E}{@{}>{\hspre}l<{\hspost}@{}}%
\>[B]{}\Varid{inactive}\;\mathbin{:}\;\Conid{Machine}\;\Varid{→}\;\star{}\<[E]%
\\
\>[B]{}\Varid{inactive}\;(\Varid{t},\anonymous )\;\mathrel{=}\;\Varid{t}\;\Varid{≡}\;\MyConid{nothing}{}\<[E]%
\ColumnHook
\end{hscode}\resethooks
\fi

\begin{lemma}[{\textsmaller[.5]{\ensuremath{\Varid{determinism}_{\Varid{Sync}}}}}]
If all nodes in a synchronous network except one are inactive, then
the next step is deterministic.
\iffullversion
\begin{hscode}\SaveRestoreHook
\column{B}{@{}>{\hspre}l<{\hspost}@{}}%
\column{3}{@{}>{\hspre}l<{\hspost}@{}}%
\column{E}{@{}>{\hspre}l<{\hspost}@{}}%
\>[B]{}\Varid{determinism}_{\Varid{Sync}}\;\mathbin{:}\;\Varid{∀}\;\Varid{nodes}\;\Varid{i}\;\Varid{→}\;{}\<[E]%
\\
\>[B]{}\hsindent{3}{}\<[3]%
\>[3]{}\Varid{all}\;\Varid{nodes}\;\Varid{except}\;\Varid{i}\;\Varid{are}\;\Varid{inactive}\;\Varid{→}\;{\dummy\xrightarrow[\Varid{Sync}]{}\dummy}\;\Varid{is-deterministic-at}\;\Varid{nodes}{}\<[E]%
\ColumnHook
\end{hscode}\resethooks
where
\begin{hscode}\SaveRestoreHook
\column{B}{@{}>{\hspre}l<{\hspost}@{}}%
\column{E}{@{}>{\hspre}l<{\hspost}@{}}%
\>[B]{}\Varid{all\char95 except\char95 are\char95 }\;\mathbin{:}\;\{\mskip1.5mu \Conid{A}\;\Conid{B}\;\mathbin{:}\;\star\mskip1.5mu\}\;\Varid{→}\;(\Conid{A}\;\Varid{→}\;\Conid{B})\;\Varid{→}\;\Conid{A}\;\Varid{→}\;(\Conid{B}\;\Varid{→}\;\star)\;\Varid{→}\;\star{}\<[E]%
\\
\>[B]{}\Varid{all}\;\Varid{f}\;\Varid{except}\;\Varid{i}\;\Varid{are}\;\Conid{P}\;\mathrel{=}\;\Varid{∀}\;\Varid{i'}\;\Varid{→}\;\Varid{i'}\;\Varid{≢}\;\Varid{i}\;\Varid{→}\;\Conid{P}\;(\Varid{f}\;\Varid{i'}){}\<[E]%
\ColumnHook
\end{hscode}\resethooks
\fi
\end{lemma}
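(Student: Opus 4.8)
The plan is to unfold the definition of \emph{is-deterministic-at} and argue by case analysis on the two synchronous transitions issuing from the given network \emph{nodes}. The relation on synchronous networks has exactly two constructors, \emph{silent-step} and \emph{comm-step}, so there are four combinations, two of which are symmetric. Three structural facts about the local \DCESHn{} machine relation will do most of the work, each a routine instruction-by-instruction case analysis (these are essentially the auxiliary lemmas already present in the development): (i) if a machine can take a \emph{silent} or a \emph{send} step then its thread component is of the form $\mathbf{just}\,\dots$ (it is \emph{active}); (ii) if a machine can take a \emph{receive} step then its thread component is $\mathbf{nothing}$; and (iii) no machine configuration admits both a \emph{silent} step and a \emph{send} step. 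In addition I will use the earlier observation that the local relation is deterministic once the tag is fixed, and the \emph{point-to-point} property: two nodes that can each receive the same message are equal — which holds because in every receive rule (for \emph{REMOTE}, \emph{RET} and \emph{APPL} messages) the identifier of the receiving node is literally recorded inside the message, either as the node component of a remote pointer or as the explicit target of a \emph{REMOTE} message.

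For the \emph{silent-step}/\emph{silent-step} case, let the two transitions be silent steps of nodes $j$ and $k$. By fact~(i) both $\mathit{nodes}\,j$ and $\mathit{nodes}\,k$ are active, and since every node other than $i$ is inactive by hypothesis, $j = k = i$. So both are local silent steps from the one configuration $\mathit{nodes}\,i$; determinism of the local relation at the \emph{silent} tag makes the two successor machines coincide, hence the two networks $\mathit{update}\;\mathit{nodes}\;i\;m'$ agree. The mixed cases are vacuous: if one transition is a silent step of node $j$ and the other a \emph{comm-step} whose send is performed by node $s$, then fact~(i) gives $j = s = i$, so $\mathit{nodes}\,i$ would admit both a silent step and a send step, contradicting fact~(iii); the reversed order is symmetric. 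The only subtlety here is that \emph{APPL} and \emph{APPL-send} must not both apply to the same configuration, which is true because they are distinguished exactly by whether the closure's home node is $i$ — and that is precisely what fact~(iii) packages.

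The substantive case is \emph{comm-step}/\emph{comm-step}. Write the first transition as: node $s$ sends $\mathit{msg}$ and becomes $\mathit{sender}'$, then in $\mathit{nodes}' = \mathit{update}\;\mathit{nodes}\;s\;\mathit{sender}'$ node $r$ receives $\mathit{msg}$ and becomes $\mathit{receiver}'$; and the second analogously with $s', r', \mathit{msg}', \mathit{sender}'', \mathit{receiver}''$ and $\mathit{nodes}''$. By fact~(i), $s = s' = i$. Both sends are local steps from $\mathit{nodes}\,i$ with a \emph{send} tag, so determinism at that tag forces $\mathit{msg} = \mathit{msg}'$ and $\mathit{sender}' = \mathit{sender}''$, and therefore $\mathit{nodes}' = \mathit{nodes}''$. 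Now the two receives take place in the \emph{same} network $\mathit{nodes}'$ and consume the \emph{same} message $\mathit{msg}$, so \emph{point-to-point} yields $r = r'$, and then determinism at the \emph{receive} tag yields $\mathit{receiver}' = \mathit{receiver}''$. Since $\mathit{update}$ is a function, the two final networks $\mathit{update}\;\mathit{nodes}'\;r\;\mathit{receiver}'$ are equal.

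The main obstacle is not any single deep step but the bookkeeping in the last case: one has to establish equality of the intermediate network $\mathit{nodes}'$ first, so that \emph{point-to-point} and the receive-determinism can be invoked at a common configuration, and then reassemble the doubly-updated networks. The supporting facts (i)--(iii) and \emph{point-to-point} are each mechanical but multi-case, which is exactly the sort of reasoning the Agda formalisation discharges without difficulty.
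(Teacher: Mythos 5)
Your proposal is correct and follows essentially the same route as the paper's formal proof: the same four-way case split on the two transition constructors, the same use of the fact that any node able to take a silent or send step must be the unique active node, the incompatibility of silent and send steps (including the \textsf{APPL}/\textsf{APPL-send} subtlety resolved by the location test), per-tag determinism of the local machine relation, and the \emph{point-to-point} lemma to identify the two receivers before applying receive-determinism. No gaps.
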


\begin{lemma}[{\textsmaller[.5]{\ensuremath{{{\xrightarrow[\Varid{Async}]{}\Varid{⁺}}\Varid{-to-}\xrightarrow[\Varid{Sync}]{}\Varid{⁺}}}}}]
\ifnotfullversion
For a synchronous network {\textsmaller[.5]{\ensuremath{\Varid{nodes}}}}, if all nodes except one are inactive,
and {\textsmaller[.5]{\ensuremath{(\Varid{nodes},[\mskip1.5mu \mskip1.5mu])\;{\xrightarrow[\Varid{Async}]{}\Varid{⁺}}\;(\Varid{nodes'},[\mskip1.5mu \mskip1.5mu])}}}, then {\textsmaller[.5]{\ensuremath{\Varid{nodes}\;\xrightarrow[\Varid{Sync}]{}\Varid{⁺}\;\Varid{nodes'}}}}.
\else
If all nodes in a synchronous network except one are inactive and the network
takes one or more steps asynchronously from and to configurations
without any messages in the air, then that transition can also be done
synchronously.
\begin{hscode}\SaveRestoreHook
\column{B}{@{}>{\hspre}l<{\hspost}@{}}%
\column{3}{@{}>{\hspre}l<{\hspost}@{}}%
\column{20}{@{}>{\hspre}l<{\hspost}@{}}%
\column{39}{@{}>{\hspre}l<{\hspost}@{}}%
\column{76}{@{}>{\hspre}l<{\hspost}@{}}%
\column{E}{@{}>{\hspre}l<{\hspost}@{}}%
\>[B]{}{{\xrightarrow[\Varid{Async}]{}\Varid{⁺}}\Varid{-to-}\xrightarrow[\Varid{Sync}]{}\Varid{⁺}}\;{}\<[20]%
\>[20]{}\mathbin{:}\;\Varid{∀}\;\{\mskip1.5mu \Varid{nodes}\;\Varid{nodes'}\mskip1.5mu\}\;\Varid{i}\;\Varid{→}\;\Varid{all}\;\Varid{nodes}\;\Varid{except}\;\Varid{i}\;\Varid{are}\;\Varid{inactive}\;{}\<[76]%
\>[76]{}\Varid{→}\;{}\<[E]%
\\
\>[B]{}\hsindent{3}{}\<[3]%
\>[3]{}(\Varid{nodes},[\mskip1.5mu \mskip1.5mu])\;{\xrightarrow[\Varid{Async}]{}\Varid{⁺}}\;(\Varid{nodes'},[\mskip1.5mu \mskip1.5mu])\;{}\<[39]%
\>[39]{}\Varid{→}\;\Varid{nodes}\;\xrightarrow[\Varid{Sync}]{}\Varid{⁺}\;\Varid{nodes'}{}\<[E]%
\ColumnHook
\end{hscode}\resethooks
\fi
\end{lemma}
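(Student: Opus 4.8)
The plan is to prove a stronger, doubly-recursive statement by induction on the length of the asynchronous derivation, carrying an invariant about the message list. Rather than proving the lemma directly, I would first establish two mutually recursive sub-lemmas (these are exactly the two auxiliary lemmas that the formalisation names alongside this one). The first says: if $(\mathit{nodes},[\,]) \to^{+}_{\mathrm{Async}} (\mathit{nodes}',[\,])$ and all of $\mathit{nodes}$ except some $i$ are inactive, then $\mathit{nodes} \to^{+}_{\mathrm{Sync}} \mathit{nodes}'$. The second handles the ``pending send'' situation: if the network is \emph{entirely} inactive with exactly one message $\mathit{msg}$ in transit, then the remaining asynchronous steps down to $(\mathit{nodes}',[\,])$ can be replayed synchronously, starting the replay with the \texttt{comm-step} that pairs the send which produced $\mathit{msg}$ with the first (necessarily receiving) step. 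The top-level statement then follows by instantiating the first sub-lemma with empty message lists.

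The core case analysis is on the head step, which is always built by the asynchronous \texttt{step} constructor carrying a tagged message; \texttt{detag} then splits it into \texttt{silent}, \texttt{send}, or \texttt{receive}. In the \texttt{silent} case, the ``starting list is $[\,]$'' hypothesis forces the surrounding message fragments to be empty, so the post-configuration again has an empty list; the stepping node is pinned to the unique active node $i$ (via the lemma identifying which node can take a non-receive step when all others are inactive), it remains active by the ``silent step yields \texttt{just}'' lemma, and the one-active invariant is preserved by the \texttt{update} lemmas, so I emit a \texttt{silent-step} and recurse. In the \texttt{send} case, \texttt{detag} forces the list to become the singleton $[\mathit{msg}]$; the sender, being the unique active node, becomes \texttt{nothing}, and hence \emph{every} node is now inactive, which is precisely the entry condition of the second sub-lemma, to which I hand off the remaining steps. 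A bare \texttt{receive} cannot occur with an empty message list, so that case is vacuous at this entry point.

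Inside the ``pending send'' sub-lemma the network is all-inactive with the single message $\mathit{msg}$ available. The negative lemmas (no silent step and no send step from an all-inactive network) rule out everything except a \texttt{receive}, and since the list is a singleton and \texttt{detag} of a receive appends nothing, the consumed message must be exactly $\mathit{msg}$ and the residual list is $[\,]$ again. The receiver becomes active by the ``receive yields \texttt{just}'' lemma, restoring the one-active invariant; I pair the original send step with this receive step into one \texttt{comm-step} and recurse back into the first sub-lemma. Termination is immediate: each recursive call is on a strictly shorter derivation, and a send/receive pair consumes two asynchronous steps to produce one synchronous one.

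The main obstacle will be the message-list bookkeeping: the \texttt{step} rule splices the \texttt{detag} output between arbitrary prefixes $\mathit{msgs}_{l}$ and suffixes $\mathit{msgs}_{r}$, so at each use one must argue, from ``the list is $[\,]$'' or ``the list is $[\mathit{msg}]$'', that these fragments are forced empty and that no stray message can be in flight — this is the source of the many impossible (\verb|()|) cases. The second delicate point is guaranteeing that nothing interleaves between a send and its matching receive; this hinges entirely on the fact that after a send the network is \emph{totally} inactive together with the all-inactive negative lemmas, and it is exactly here that the single-active-node precondition (and, implicitly, point-to-point delivery and determinism of the local relation) is indispensable.
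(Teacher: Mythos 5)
Your proposal matches the paper's proof essentially exactly: the formalisation proves the same two mutually recursive lemmas (\ensuremath{\Varid{⟶Async⁺-to-⟶Sync⁺-lemma}} for the empty-list/one-active-node invariant and \ensuremath{\Varid{⟶Async⁺-to-⟶Sync⁺-lemma-receive}} for the pending-send state), with the same case analysis on the tagged message, the same use of the all-inactive negative lemmas to force the matching receive, and the same pairing of send and receive into a single \ensuremath{\MyConid{comm-step}}. The top-level statement is likewise obtained by instantiating the first lemma with \ensuremath{\Varid{refl}} proofs that both message lists are empty.
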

This is a key result because it means that it does not matter whether
we choose to look at synchronous or asynchronous networks for single
threaded computations.
With this result in place, we will from now on focus on the simpler
synchronous networks.

We define what it means for a synchronous \DCESHn{} network {\textsmaller[.5]{\ensuremath{\Varid{nodes}}}} to
\emph{terminate with a value {\textsmaller[.5]{\ensuremath{\Varid{v}}}}} ({\textsmaller[.5]{\ensuremath{\Varid{nodes}\;\downarrow_{\Varid{Sync}}\;\Varid{v}}}}), \emph{terminate}
({\textsmaller[.5]{\ensuremath{\Varid{nodes}\;\downarrow_{\Varid{Sync}}}}}), and \emph{diverge} ({\textsmaller[.5]{\ensuremath{\Varid{nodes}\;\uparrow_{\Varid{Sync}}}}}).  A network
terminates with a value {\textsmaller[.5]{\ensuremath{\Varid{v}}}} if it can step to a network where only one
node is active, and that node has reached the {\textsmaller[.5]{\ensuremath{\MyConid{END}}}} instruction with
the value {\textsmaller[.5]{\ensuremath{\Varid{v}}}} on top of its stack. The other definitions are analogous to
those of the CES(H) machine.
\iffullversion
\begin{hscode}\SaveRestoreHook
\column{B}{@{}>{\hspre}l<{\hspost}@{}}%
\column{E}{@{}>{\hspre}l<{\hspost}@{}}%
\>[B]{}{\dummy\downarrow_{\Varid{Sync}}\dummy}\;\mathbin{:}\;\Conid{SyncNetwork}\;\Varid{→}\;\Conid{Value}\;\Varid{→}\;\star{}\<[E]%
\ColumnHook
\end{hscode}\resethooks
\removecodespace
\fi
\begin{hscode}\SaveRestoreHook
\column{B}{@{}>{\hspre}l<{\hspost}@{}}%
\column{3}{@{}>{\hspre}l<{\hspost}@{}}%
\column{12}{@{}>{\hspre}l<{\hspost}@{}}%
\column{E}{@{}>{\hspre}l<{\hspost}@{}}%
\>[B]{}\Varid{nodes}\;\downarrow_{\Varid{Sync}}\;\Varid{v}\;\mathrel{=}\;\Varid{∃}\;\Varid{λ}\;\Varid{nodes'}\;\Varid{→}\;\Varid{nodes}\;{\xrightarrow[\Varid{Sync}]{}^*}\;\Varid{nodes'}\;\Varid{×}\;{}\<[E]%
\\
\>[B]{}\hsindent{3}{}\<[3]%
\>[3]{}\Varid{∃}\;\Varid{λ}\;\Varid{i}\;\Varid{→}\;{}\<[12]%
\>[12]{}\Varid{all}\;\Varid{nodes'}\;\Varid{except}\;\Varid{i}\;\Varid{are}\;\Varid{inactive}\;\Varid{×}\;\Varid{∃}\;\Varid{λ}\;\Varid{heaps}\;\Varid{→}\;{}\<[E]%
\\
\>[B]{}\hsindent{3}{}\<[3]%
\>[3]{}\Varid{nodes'}\;\Varid{i}\;\Varid{≡}\;(\MyConid{just}\;(\MyConid{END},[\mskip1.5mu \mskip1.5mu],\MyConid{val}\;\Varid{v}\;\Varid{∷}\;[\mskip1.5mu \mskip1.5mu],\MyConid{nothing}),\Varid{heaps}){}\<[E]%
\ColumnHook
\end{hscode}\resethooks
\iffullversion
\removecodespace
\begin{hscode}\SaveRestoreHook
\column{B}{@{}>{\hspre}l<{\hspost}@{}}%
\column{E}{@{}>{\hspre}l<{\hspost}@{}}%
\>[B]{}{\dummy\downarrow_{\Varid{Sync}}}\;\mathbin{:}\;\Varid{∀}\;\Varid{nodes}\;\Varid{→}\;\star{}\<[E]%
\\
\>[B]{}\Varid{nodes}\;\downarrow_{\Varid{Sync}}\;\mathrel{=}\;\Varid{∃}\;\Varid{λ}\;\Varid{v}\;\Varid{→}\;\Varid{nodes}\;\downarrow_{\Varid{Sync}}\;\Varid{v}{}\<[E]%
\\
\>[B]{}{\dummy\uparrow_{\Varid{Sync}}}\;\mathbin{:}\;\Varid{∀}\;\Varid{nodes}\;\Varid{→}\;\star{}\<[E]%
\\
\>[B]{}{\dummy\uparrow_{\Varid{Sync}}}\;\mathrel{=}\;\Varid{↑}\;{\dummy\xrightarrow[\Varid{Sync}]{}\dummy}{}\<[E]%
\ColumnHook
\end{hscode}\resethooks
\fi

\subsection{Correctness} \label{section:DCESH-bisim}
To prove the correctness of the machine, we will now establish a
bisimulation between the CESH and the \DCESHn{} machines.

To simplify this development, we extend the CESH machine with
a rule for the {\textsmaller[.5]{\ensuremath{\MyConid{REMOTE}\;\Varid{c}\;\Varid{i}}}} instruction so that both machines run the
same bytecode. This rule is almost a no-op, but since we are assuming
that the code we run remotely is closed, the environment is emptied,
and since the compiled code {\textsmaller[.5]{\ensuremath{\Varid{c}}}} will end in a {\textsmaller[.5]{\ensuremath{\MyConid{RET}}}} instruction
a return continuation is pushed on the stack.
\iffullversion
\begin{hscode}\SaveRestoreHook
\column{B}{@{}>{\hspre}l<{\hspost}@{}}%
\column{3}{@{}>{\hspre}l<{\hspost}@{}}%
\column{5}{@{}>{\hspre}l<{\hspost}@{}}%
\column{E}{@{}>{\hspre}l<{\hspost}@{}}%
\>[B]{}\Keyword{data}\;{\dummy\xrightarrow[\Varid{CESH}]{}\dummy}\;\mathbin{:}\;\Conid{Rel}\;\Conid{Config}\;\Conid{Config}\;\Keyword{where}{}\<[E]%
\\
\>[B]{}\hsindent{3}{}\<[3]%
\>[3]{}\Varid{...}{}\<[E]%
\\
\>[B]{}\hsindent{3}{}\<[3]%
\>[3]{}\MyConid{REMOTE}\;\mathbin{:}\;\Varid{∀}\;\{\mskip1.5mu \Varid{c'}\;\Varid{i}\;\Varid{c}\;\Varid{e}\;\Varid{s}\;\Varid{h}\mskip1.5mu\}\;\Varid{→}\;{}\<[E]%
\\
\>[3]{}\hsindent{2}{}\<[5]%
\>[5]{}(\MyConid{REMOTE}\;\Varid{c'}\;\Varid{i}\;\MyConid{;}\;\Varid{c},\Varid{e},\Varid{s},\Varid{h})\;\xrightarrow[\Varid{CESH}]{}\;(\Varid{c'},[\mskip1.5mu \mskip1.5mu],\MyConid{cont}\;(\Varid{c},\Varid{e})\;\Varid{∷}\;\Varid{s},\Varid{h}){}\<[E]%
\ColumnHook
\end{hscode}\resethooks
\else
\begin{hscode}\SaveRestoreHook
\column{B}{@{}>{\hspre}l<{\hspost}@{}}%
\column{E}{@{}>{\hspre}l<{\hspost}@{}}%
\>[B]{}(\MyConid{REMOTE}\;\Varid{c'}\;\Varid{i}\;\MyConid{;}\;\Varid{c},\Varid{e},\Varid{s},\Varid{h})\;\xrightarrow[\Varid{CESH}]{}\;(\Varid{c'},[\mskip1.5mu \mskip1.5mu],\MyConid{cont}\;(\Varid{c},\Varid{e})\;\Varid{∷}\;\Varid{s},\Varid{h}){}\<[E]%
\ColumnHook
\end{hscode}\resethooks
\fi

The intuition behind the relation that we are to construct should be
similar to the intuition for the relation between CES and CESH
configurations, i.e.\ that it is almost equality, but since
values may be pointers to closures, we need to parameterise it by
heaps. The problem now is that \emph{both} machines use pointers, and
the \DCESHn{} machine even uses \emph{remote} pointers and has two
heaps for each node. This means that we have to parameterise the
relations by all the heaps in the system.

\iffullversion
As before, two fragments of code are related if they are equal.
\begin{hscode}\SaveRestoreHook
\column{B}{@{}>{\hspre}l<{\hspost}@{}}%
\column{E}{@{}>{\hspre}l<{\hspost}@{}}%
\>[B]{}\Varid{R}_{\Varid{Code}}\;\mathbin{:}\;\Conid{Rel}\;\Conid{Code}\;\Conid{Code}{}\<[E]%
\\
\>[B]{}\Varid{R}_{\Varid{Code}}\;\Varid{c₁}\;\Varid{c₂}\;\mathrel{=}\;\Varid{c₁}\;\Varid{≡}\;\Varid{c₂}{}\<[E]%
\ColumnHook
\end{hscode}\resethooks
\fi

We define the type of the extra parameter that we need as a synonym
for an indexed family of the closure and continuation
\ifnotfullversion
heaps, {\textsmaller[.5]{\ensuremath{\Conid{Heaps}\;\mathrel{=}\;\Conid{Node}\;\Varid{→}\;\Conid{DCESH.ClosHeap}\;\Varid{×}\;\Conid{DCESH.ContHeap}}}}.
\else
heaps (here {\textsmaller[.5]{\ensuremath{\Conid{DCESH.ContHeap}\;\mathrel{=}\;\Conid{Heap}\;(\Conid{DCESH.Closure}\;\Varid{×}\;\Conid{DCESH.Stack})}}}):
\fi
\iffullversion
\begin{hscode}\SaveRestoreHook
\column{B}{@{}>{\hspre}l<{\hspost}@{}}%
\column{E}{@{}>{\hspre}l<{\hspost}@{}}%
\>[B]{}\Conid{Heaps}\;\mathrel{=}\;\Conid{Node}\;\Varid{→}\;\Conid{DCESH.ClosHeap}\;\Varid{×}\;\Conid{DCESH.ContHeap}{}\<[E]%
\ColumnHook
\end{hscode}\resethooks
\fi

Simply following the recipe that we used for the relation between the
CES and the CESH machines would not prove effective this time around. When we
constructed that, we could be sure that there would be no circularity,
since it was constructed inductively on the structure of the CES
configuration. But now both systems, CESH and \DCESHn{}, have heaps where
there is a potential for circular references (e.g.  a closure,
residing in a heap, whose environment contains a pointer to itself),
so a direct structural induction cannot work.  This is perhaps the
most mathematically (and formally) challenging point
of the paper. To fix this we parameterise the affected relation
definitions by a natural number {\textsmaller[.5]{\ensuremath{\Varid{rank}}}}, which records how many times
pointers are allowed to be dereferenced, in addition to the heap
parameters.

The relation for environments and closures is as before, but with the additional
parameters.
\iffullversion
\begin{hscode}\SaveRestoreHook
\column{B}{@{}>{\hspre}l<{\hspost}@{}}%
\column{12}{@{}>{\hspre}l<{\hspost}@{}}%
\column{15}{@{}>{\hspre}l<{\hspost}@{}}%
\column{E}{@{}>{\hspre}l<{\hspost}@{}}%
\>[B]{}\Varid{R}_{\Varid{Env}}\;{}\<[12]%
\>[12]{}\mathbin{:}\;{}\<[15]%
\>[15]{}\Conid{ℕ}\;\Varid{→}\;\Conid{CESH.ClosHeap}\;\Varid{→}\;\Conid{Heaps}\;\Varid{→}\;\Conid{Rel}\;\Conid{CESH.Env}\;\Conid{DCESH.Env}{}\<[E]%
\\
\>[B]{}\Varid{R}_{\Varid{Clos}}\;{}\<[12]%
\>[12]{}\mathbin{:}\;{}\<[15]%
\>[15]{}\Conid{ℕ}\;\Varid{→}\;\Conid{CESH.ClosHeap}\;\Varid{→}\;\Conid{Heaps}\;\Varid{→}\;\Conid{Rel}\;\Conid{CESH.Closure}\;\Conid{DCESH.Closure}{}\<[E]%
\\
\>[B]{}\Varid{R}_{\Varid{Clos}}\;\Varid{rank}\;\Varid{h}\;\Varid{hs}\;(\Varid{c₁},\Varid{e₁})\;(\Varid{c₂},\Varid{e₂})\;\mathrel{=}\;\Varid{R}_{\Varid{Code}}\;\Varid{c₁}\;\Varid{c₂}\;\Varid{×}\;\Varid{R}_{\Varid{Env}}\;\Varid{rank}\;\Varid{h}\;\Varid{hs}\;\Varid{e₁}\;\Varid{e₂}{}\<[E]%
\ColumnHook
\end{hscode}\resethooks
\fi
The relation for closure pointers is where the rank is used.
If the rank is zero, the relation is trivially fulfilled.
If the rank is non-zero, it makes sure that the CESH pointer points to
a closure in the CESH heap, that the remote pointer of the \DCESHn{} network
points to a closure in the heap of the location that the pointer refers to, and that
the two closures are related:
\begin{hscode}\SaveRestoreHook
\column{B}{@{}>{\hspre}l<{\hspost}@{}}%
\column{3}{@{}>{\hspre}l<{\hspost}@{}}%
\column{12}{@{}>{\hspre}l<{\hspost}@{}}%
\column{19}{@{}>{\hspre}l<{\hspost}@{}}%
\column{E}{@{}>{\hspre}l<{\hspost}@{}}%
\>[B]{}\Varid{R}_{\Varid{rptr}_{\Varid{cl}}}\;\mathbin{:}\;{}\<[12]%
\>[12]{}\Conid{ℕ}\;\Varid{→}\;\Conid{CESH.ClosHeap}\;\Varid{→}\;\Conid{Heaps}\;\Varid{→}\;{}\<[E]%
\\
\>[12]{}\Conid{Rel}\;\Conid{CESH.ClosPtr}\;\Conid{DCESH.ClosPtr}{}\<[E]%
\\
\>[B]{}\Varid{R}_{\Varid{rptr}_{\Varid{cl}}}\;\Varid{0}\;\anonymous \;\anonymous \;\anonymous \;\anonymous \;\mathrel{=}\;\Varid{⊤}{}\<[E]%
\\
\>[B]{}\Varid{R}_{\Varid{rptr}_{\Varid{cl}}}\;(1+\;\Varid{rank})\;\Varid{h}\;\Varid{hs}\;\Varid{ptr₁}\;(\Varid{ptr₂},\Varid{loc})\;\mathrel{=}\;{}\<[E]%
\\
\>[B]{}\hsindent{3}{}\<[3]%
\>[3]{}\Varid{∃₂}\;\Varid{λ}\;\Varid{cl₁}\;\Varid{cl₂}\;\Varid{→}\;{}\<[19]%
\>[19]{}\Varid{h}\;\mathbin{!}\;\Varid{ptr₁}\;\Varid{≡}\;\MyConid{just}\;\Varid{cl₁}\;\Varid{×}\;{}\<[E]%
\\
\>[19]{}\Varid{proj₁}\;(\Varid{hs}\;\Varid{loc})\;\mathbin{!}\;\Varid{ptr₂}\;\Varid{≡}\;\MyConid{just}\;\Varid{cl₂}\;\Varid{×}\;{}\<[E]%
\\
\>[19]{}\Varid{R}_{\Varid{Clos}}\;\Varid{rank}\;\Varid{h}\;\Varid{hs}\;\Varid{cl₁}\;\Varid{cl₂}{}\<[E]%
\ColumnHook
\end{hscode}\resethooks
The relation for values is also as before, but with the extra parameters.
\iffullversion
\begin{hscode}\SaveRestoreHook
\column{B}{@{}>{\hspre}l<{\hspost}@{}}%
\column{10}{@{}>{\hspre}l<{\hspost}@{}}%
\column{29}{@{}>{\hspre}l<{\hspost}@{}}%
\column{42}{@{}>{\hspre}l<{\hspost}@{}}%
\column{E}{@{}>{\hspre}l<{\hspost}@{}}%
\>[B]{}\Varid{R}_{\Varid{Val}}\;\mathbin{:}\;{}\<[10]%
\>[10]{}\Conid{ℕ}\;\Varid{→}\;\Conid{CESH.ClosHeap}\;\Varid{→}\;\Conid{Heaps}\;\Varid{→}\;\Conid{Rel}\;\Conid{CESH.Value}\;\Conid{DCESH.Value}{}\<[E]%
\\
\>[B]{}\Varid{R}_{\Varid{Val}}\;\Varid{rank}\;\Varid{h}\;\Varid{hs}\;(\MyConid{nat}\;\Varid{n₁})\;{}\<[29]%
\>[29]{}(\MyConid{nat}\;\Varid{n₂})\;{}\<[42]%
\>[42]{}\mathrel{=}\;\Varid{n₁}\;\Varid{≡}\;\Varid{n₂}{}\<[E]%
\\
\>[B]{}\Varid{R}_{\Varid{Val}}\;\Varid{rank}\;\Varid{h}\;\Varid{hs}\;(\MyConid{nat}\;\anonymous )\;{}\<[29]%
\>[29]{}(\MyConid{clos}\;\anonymous )\;{}\<[42]%
\>[42]{}\mathrel{=}\;\Varid{⊥}{}\<[E]%
\\
\>[B]{}\Varid{R}_{\Varid{Val}}\;\Varid{rank}\;\Varid{h}\;\Varid{hs}\;(\MyConid{clos}\;\anonymous )\;{}\<[29]%
\>[29]{}(\MyConid{nat}\;\anonymous )\;{}\<[42]%
\>[42]{}\mathrel{=}\;\Varid{⊥}{}\<[E]%
\\
\>[B]{}\Varid{R}_{\Varid{Val}}\;\Varid{rank}\;\Varid{h}\;\Varid{hs}\;(\MyConid{clos}\;\Varid{ptr})\;{}\<[29]%
\>[29]{}(\MyConid{clos}\;\Varid{rptr})\;{}\<[42]%
\>[42]{}\mathrel{=}\;\Varid{R}_{\Varid{rptr}_{\Varid{cl}}}\;\Varid{rank}\;\Varid{h}\;\Varid{hs}\;\Varid{ptr}\;\Varid{rptr}{}\<[E]%
\ColumnHook
\end{hscode}\resethooks
\fi
\iffullversion
The relation for environments is also as before, but included for completeness:
\begin{hscode}\SaveRestoreHook
\column{B}{@{}>{\hspre}l<{\hspost}@{}}%
\column{28}{@{}>{\hspre}l<{\hspost}@{}}%
\column{39}{@{}>{\hspre}l<{\hspost}@{}}%
\column{40}{@{}>{\hspre}l<{\hspost}@{}}%
\column{E}{@{}>{\hspre}l<{\hspost}@{}}%
\>[B]{}\Varid{R}_{\Varid{Env}}\;\Varid{rank}\;\Varid{h}\;\Varid{hs}\;[\mskip1.5mu \mskip1.5mu]\;{}\<[28]%
\>[28]{}[\mskip1.5mu \mskip1.5mu]\;{}\<[39]%
\>[39]{}\mathrel{=}\;\Varid{⊤}{}\<[E]%
\\
\>[B]{}\Varid{R}_{\Varid{Env}}\;\Varid{rank}\;\Varid{h}\;\Varid{hs}\;[\mskip1.5mu \mskip1.5mu]\;{}\<[28]%
\>[28]{}(\Varid{x}\;\Varid{∷}\;\Varid{e₂})\;{}\<[39]%
\>[39]{}\mathrel{=}\;\Varid{⊥}{}\<[E]%
\\
\>[B]{}\Varid{R}_{\Varid{Env}}\;\Varid{rank}\;\Varid{h}\;\Varid{hs}\;(\Varid{x₁}\;\Varid{∷}\;\Varid{e₁})\;{}\<[28]%
\>[28]{}[\mskip1.5mu \mskip1.5mu]\;{}\<[40]%
\>[40]{}\mathrel{=}\;\Varid{⊥}{}\<[E]%
\\
\>[B]{}\Varid{R}_{\Varid{Env}}\;\Varid{rank}\;\Varid{h}\;\Varid{hs}\;(\Varid{x₁}\;\Varid{∷}\;\Varid{e₁})\;{}\<[28]%
\>[28]{}(\Varid{x₂}\;\Varid{∷}\;\Varid{e₂})\;{}\<[40]%
\>[40]{}\mathrel{=}\;\Varid{R}_{\Varid{Val}}\;\Varid{rank}\;\Varid{h}\;\Varid{hs}\;\Varid{x₁}\;\Varid{x₂}\;\Varid{×}\;\Varid{R}_{\Varid{Env}}\;\Varid{rank}\;\Varid{h}\;\Varid{hs}\;\Varid{e₁}\;\Varid{e₂}{}\<[E]%
\ColumnHook
\end{hscode}\resethooks
\fi
The relation for stack elements is almost as before, but now requires that
for \emph{any} natural number {\textsmaller[.5]{\ensuremath{\Varid{rank}}}}, i.e. for any finite number of
pointer dereferencings, the relations hold:
\begin{hscode}\SaveRestoreHook
\column{B}{@{}>{\hspre}l<{\hspost}@{}}%
\column{3}{@{}>{\hspre}l<{\hspost}@{}}%
\column{16}{@{}>{\hspre}l<{\hspost}@{}}%
\column{30}{@{}>{\hspre}l<{\hspost}@{}}%
\column{42}{@{}>{\hspre}l<{\hspost}@{}}%
\column{E}{@{}>{\hspre}l<{\hspost}@{}}%
\>[B]{}\Varid{R}_{\Varid{StackElem}}\;\mathbin{:}\;{}\<[16]%
\>[16]{}\Conid{CESH.ClosHeap}\;\Varid{→}\;\Conid{Heaps}\;\Varid{→}\;{}\<[E]%
\\
\>[B]{}\hsindent{3}{}\<[3]%
\>[3]{}\Conid{Rel}\;\Conid{CESH.StackElem}\;\Conid{DCESH.StackElem}{}\<[E]%
\\
\>[B]{}\Varid{R}_{\Varid{StackElem}}\;\Varid{h}\;\Varid{hs}\;(\MyConid{val}\;\Varid{v₁})\;{}\<[30]%
\>[30]{}(\MyConid{val}\;\Varid{v₂})\;{}\<[42]%
\>[42]{}\mathrel{=}\;{}\<[E]%
\\
\>[B]{}\hsindent{3}{}\<[3]%
\>[3]{}\Varid{∀}\;\Varid{rank}\;\Varid{→}\;\Varid{R}_{\Varid{Val}}\;\Varid{rank}\;\Varid{h}\;\Varid{hs}\;\Varid{v₁}\;\Varid{v₂}{}\<[E]%
\\
\>[B]{}\Varid{R}_{\Varid{StackElem}}\;\Varid{h}\;\Varid{hs}\;(\MyConid{val}\;\anonymous )\;{}\<[30]%
\>[30]{}(\MyConid{cont}\;\anonymous )\;{}\<[42]%
\>[42]{}\mathrel{=}\;\Varid{⊥}{}\<[E]%
\\
\>[B]{}\Varid{R}_{\Varid{StackElem}}\;\Varid{h}\;\Varid{hs}\;(\MyConid{cont}\;\anonymous )\;{}\<[30]%
\>[30]{}(\MyConid{val}\;\anonymous )\;{}\<[42]%
\>[42]{}\mathrel{=}\;\Varid{⊥}{}\<[E]%
\\
\>[B]{}\Varid{R}_{\Varid{StackElem}}\;\Varid{h}\;\Varid{hs}\;(\MyConid{cont}\;\Varid{cl₁})\;{}\<[30]%
\>[30]{}(\MyConid{cont}\;\Varid{cl₂})\;{}\<[42]%
\>[42]{}\mathrel{=}\;{}\<[E]%
\\
\>[B]{}\hsindent{3}{}\<[3]%
\>[3]{}\Varid{∀}\;\Varid{rank}\;\Varid{→}\;\Varid{R}_{\Varid{Clos}}\;\Varid{rank}\;\Varid{h}\;\Varid{hs}\;\Varid{cl₁}\;\Varid{cl₂}{}\<[E]%
\ColumnHook
\end{hscode}\resethooks
The relation for stacks now takes into account that the \DCESHn{}
stacks may end in a pointer representing a remote continuation. It
makes sure that the pointer points to something in the
continuation heap of the location of the pointer, related to the CESH
stack element.
\begin{hscode}\SaveRestoreHook
\column{B}{@{}>{\hspre}l<{\hspost}@{}}%
\column{12}{@{}>{\hspre}l<{\hspost}@{}}%
\column{E}{@{}>{\hspre}l<{\hspost}@{}}%
\>[B]{}\Varid{R}_{\Varid{Stack}}\;\mathbin{:}\;{}\<[12]%
\>[12]{}\Conid{CESH.ClosHeap}\;\Varid{→}\;\Conid{Heaps}\;\Varid{→}\;{}\<[E]%
\\
\>[12]{}\Conid{Rel}\;\Conid{CESH.Stack}\;\Conid{DCESH.Stack}{}\<[E]%
\ColumnHook
\end{hscode}\resethooks
\removecodespace
\iffullversion
\savecolumns
\begin{hscode}\SaveRestoreHook
\column{B}{@{}>{\hspre}l<{\hspost}@{}}%
\column{29}{@{}>{\hspre}l<{\hspost}@{}}%
\column{52}{@{}>{\hspre}l<{\hspost}@{}}%
\column{55}{@{}>{\hspre}l<{\hspost}@{}}%
\column{E}{@{}>{\hspre}l<{\hspost}@{}}%
\>[B]{}\Varid{R}_{\Varid{Stack}}\;\Varid{h}\;\Varid{hs}\;[\mskip1.5mu \mskip1.5mu]\;{}\<[29]%
\>[29]{}([\mskip1.5mu \mskip1.5mu],\MyConid{nothing})\;{}\<[52]%
\>[52]{}\mathrel{=}\;\Varid{⊤}{}\<[E]%
\\
\>[B]{}\Varid{R}_{\Varid{Stack}}\;\Varid{h}\;\Varid{hs}\;[\mskip1.5mu \mskip1.5mu]\;{}\<[29]%
\>[29]{}(\Varid{x}\;\Varid{∷}\;\Varid{stack₂},\Varid{r})\;{}\<[52]%
\>[52]{}\mathrel{=}\;\Varid{⊥}{}\<[E]%
\\
\>[B]{}\Varid{R}_{\Varid{Stack}}\;\Varid{h}\;\Varid{hs}\;(\Varid{x₁}\;\Varid{∷}\;\Varid{stack₁})\;{}\<[29]%
\>[29]{}(\Varid{x₂}\;\Varid{∷}\;\Varid{stack₂},\Varid{r})\;{}\<[52]%
\>[52]{}\mathrel{=}\;{}\<[55]%
\>[55]{}\Varid{R}_{\Varid{StackElem}}\;\Varid{h}\;\Varid{hs}\;\Varid{x₁}\;\Varid{x₂}\;\Varid{×}\;{}\<[E]%
\\
\>[55]{}\Varid{R}_{\Varid{Stack}}\;\Varid{h}\;\Varid{hs}\;\Varid{stack₁}\;(\Varid{stack₂},\Varid{r}){}\<[E]%
\\
\>[B]{}\Varid{R}_{\Varid{Stack}}\;\Varid{h}\;\Varid{hs}\;(\Varid{x}\;\Varid{∷}\;\Varid{stack₁})\;{}\<[29]%
\>[29]{}([\mskip1.5mu \mskip1.5mu],\MyConid{nothing})\;{}\<[52]%
\>[52]{}\mathrel{=}\;\Varid{⊥}{}\<[E]%
\\
\>[B]{}\Varid{R}_{\Varid{Stack}}\;\Varid{h}\;\Varid{hs}\;[\mskip1.5mu \mskip1.5mu]\;{}\<[29]%
\>[29]{}([\mskip1.5mu \mskip1.5mu],\MyConid{just}\;\anonymous )\;{}\<[52]%
\>[52]{}\mathrel{=}\;\Varid{⊥}{}\<[E]%
\ColumnHook
\end{hscode}\resethooks
\else
\begin{hscode}\SaveRestoreHook
\column{B}{@{}>{\hspre}l<{\hspost}@{}}%
\column{E}{@{}>{\hspre}l<{\hspost}@{}}%
\>[B]{}\Varid{...}{}\<[E]%
\ColumnHook
\end{hscode}\resethooks
\fi
\removecodespace
\iffullversion
\restorecolumns
\fi
\begin{hscode}\SaveRestoreHook
\column{B}{@{}>{\hspre}l<{\hspost}@{}}%
\column{3}{@{}>{\hspre}l<{\hspost}@{}}%
\column{6}{@{}>{\hspre}l<{\hspost}@{}}%
\column{29}{@{}>{\hspre}l<{\hspost}@{}}%
\column{54}{@{}>{\hspre}l<{\hspost}@{}}%
\column{E}{@{}>{\hspre}l<{\hspost}@{}}%
\>[B]{}\Varid{R}_{\Varid{Stack}}\;\Varid{h}\;\Varid{hs}\;(\Varid{cont₁}\;\Varid{∷}\;\Varid{s₁})\;{}\<[29]%
\>[29]{}([\mskip1.5mu \mskip1.5mu],\MyConid{just}\;(\Varid{ptr},\Varid{loc}))\;{}\<[54]%
\>[54]{}\mathrel{=}\;{}\<[E]%
\\
\>[B]{}\hsindent{3}{}\<[3]%
\>[3]{}\Varid{∃₂}\;\Varid{λ}\;\Varid{cont₂}\;\Varid{s₂}\;\Varid{→}\;\Varid{proj₂}\;(\Varid{hs}\;\Varid{loc})\;\mathbin{!}\;\Varid{ptr}\;\Varid{≡}\;\MyConid{just}\;(\Varid{cont₂},\Varid{s₂})\;\Varid{×}\;{}\<[E]%
\\
\>[3]{}\hsindent{3}{}\<[6]%
\>[6]{}\Varid{R}_{\Varid{StackElem}}\;\Varid{h}\;\Varid{hs}\;\Varid{cont₁}\;(\MyConid{cont}\;\Varid{cont₂})\;\Varid{×}\;{}\<[E]%
\\
\>[3]{}\hsindent{3}{}\<[6]%
\>[6]{}\Varid{R}_{\Varid{Stack}}\;\Varid{h}\;\Varid{hs}\;\Varid{s₁}\;\Varid{s₂}{}\<[E]%
\ColumnHook
\end{hscode}\resethooks
Finally, a CESH configuration and a \DCESHn{} thread are related if the
thread is running and the constituents are pointwise related:
\begin{hscode}\SaveRestoreHook
\column{B}{@{}>{\hspre}l<{\hspost}@{}}%
\column{3}{@{}>{\hspre}l<{\hspost}@{}}%
\column{15}{@{}>{\hspre}l<{\hspost}@{}}%
\column{36}{@{}>{\hspre}l<{\hspost}@{}}%
\column{59}{@{}>{\hspre}l<{\hspost}@{}}%
\column{E}{@{}>{\hspre}l<{\hspost}@{}}%
\>[B]{}\Varid{R}_{\Varid{Thread}}\;\mathbin{:}\;\Conid{Heaps}\;\Varid{→}\;\Conid{Rel}\;\Conid{Config}\;(\Conid{Maybe}\;\Conid{Thread}){}\<[E]%
\\
\>[B]{}\Varid{R}_{\Varid{Thread}}\;\Varid{hs}\;{}\<[15]%
\>[15]{}\anonymous \;{}\<[36]%
\>[36]{}\MyConid{nothing}\;{}\<[59]%
\>[59]{}\mathrel{=}\;\Varid{⊥}{}\<[E]%
\\
\>[B]{}\Varid{R}_{\Varid{Thread}}\;\Varid{hs}\;{}\<[15]%
\>[15]{}(\Varid{c₁},\Varid{e₁},\Varid{s₁},\Varid{h₁})\;{}\<[36]%
\>[36]{}(\MyConid{just}\;(\Varid{c₂},\Varid{e₂},\Varid{s₂}))\;{}\<[59]%
\>[59]{}\mathrel{=}\;{}\<[E]%
\\
\>[B]{}\hsindent{3}{}\<[3]%
\>[3]{}\Varid{R}_{\Varid{Code}}\;\Varid{c₁}\;\Varid{c₂}\;\Varid{×}\;(\Varid{∀}\;\Varid{rank}\;\Varid{→}\;\Varid{R}_{\Varid{Env}}\;\Varid{rank}\;\Varid{h₁}\;\Varid{hs}\;\Varid{e₁}\;\Varid{e₂})\;\Varid{×}\;{}\<[E]%
\\
\>[B]{}\hsindent{3}{}\<[3]%
\>[3]{}\Varid{R}_{\Varid{Stack}}\;\Varid{h₁}\;\Varid{hs}\;\Varid{s₁}\;\Varid{s₂}{}\<[E]%
\ColumnHook
\end{hscode}\resethooks
\iffullversion
A configuration is related to an asynchronous \DCESHn{} network if the network
has exactly one running node, {\textsmaller[.5]{\ensuremath{\Varid{i}}}}, that is related to the configuration, and
there are no messages in the message soup:
\begin{hscode}\SaveRestoreHook
\column{B}{@{}>{\hspre}l<{\hspost}@{}}%
\column{3}{@{}>{\hspre}l<{\hspost}@{}}%
\column{29}{@{}>{\hspre}l<{\hspost}@{}}%
\column{E}{@{}>{\hspre}l<{\hspost}@{}}%
\>[B]{}\Varid{R}_{\Varid{Async}}\;\mathbin{:}\;\Conid{Rel}\;\Conid{Config}\;\Conid{AsyncNetwork}{}\<[E]%
\\
\>[B]{}\Varid{R}_{\Varid{Async}}\;\Varid{cfg}\;(\Varid{nodes},[\mskip1.5mu \mskip1.5mu])\;{}\<[29]%
\>[29]{}\mathrel{=}\;\Varid{∃}\;\Varid{λ}\;\Varid{i}\;\Varid{→}\;{}\<[E]%
\\
\>[B]{}\hsindent{3}{}\<[3]%
\>[3]{}\Varid{all}\;\Varid{nodes}\;\Varid{except}\;\Varid{i}\;\Varid{are}\;\Varid{inactive}\;\Varid{×}\;{}\<[E]%
\\
\>[B]{}\hsindent{3}{}\<[3]%
\>[3]{}\Varid{R}_{\Varid{Thread}}\;(\Varid{proj₂}\;\Varid{∘}\;\Varid{nodes})\;\Varid{cfg}\;(\Varid{proj₁}\;(\Varid{nodes}\;\Varid{i})){}\<[E]%
\\
\>[B]{}\Varid{R}_{\Varid{Async}}\;\Varid{cfg}\;(\Varid{nodes},\Varid{msgs})\;{}\<[29]%
\>[29]{}\mathrel{=}\;\Varid{⊥}{}\<[E]%
\ColumnHook
\end{hscode}\resethooks
A configuration is related to a synchronous \DCESHn{} network if it is related
to the asynchronous network gotten by pairing the synchronous network
with an empty list of messages:
\begin{hscode}\SaveRestoreHook
\column{B}{@{}>{\hspre}l<{\hspost}@{}}%
\column{E}{@{}>{\hspre}l<{\hspost}@{}}%
\>[B]{}\Varid{R}_{\Varid{Sync}}\;\mathbin{:}\;\Conid{Rel}\;\Conid{Config}\;\Conid{SyncNetwork}{}\<[E]%
\\
\>[B]{}\Varid{R}_{\Varid{Sync}}\;\Varid{cfg}\;\Varid{nodes}\;\mathrel{=}\;\Varid{R}_{\Varid{Async}}\;\Varid{cfg}\;(\Varid{nodes},[\mskip1.5mu \mskip1.5mu]){}\<[E]%
\ColumnHook
\end{hscode}\resethooks
\else
A CESH configuration is related to a synchronous network if the network
has exactly one running machine that is related to the configuration:
\begin{hscode}\SaveRestoreHook
\column{B}{@{}>{\hspre}l<{\hspost}@{}}%
\column{3}{@{}>{\hspre}l<{\hspost}@{}}%
\column{E}{@{}>{\hspre}l<{\hspost}@{}}%
\>[B]{}\Varid{R}_{\Varid{Sync}}\;\mathbin{:}\;\Conid{Rel}\;\Conid{Config}\;\Conid{SyncNetwork}{}\<[E]%
\\
\>[B]{}\Varid{R}_{\Varid{Sync}}\;\Varid{cfg}\;\Varid{nodes}\;\mathrel{=}\;\Varid{∃}\;\Varid{λ}\;\Varid{i}\;\Varid{→}\;\Varid{all}\;\Varid{nodes}\;\Varid{except}\;\Varid{i}\;\Varid{are}\;\Varid{inactive}\;\Varid{×}\;{}\<[E]%
\\
\>[B]{}\hsindent{3}{}\<[3]%
\>[3]{}\Varid{R}_{\Varid{Thread}}\;(\Varid{proj₂}\;\Varid{∘}\;\Varid{nodes})\;\Varid{cfg}\;(\Varid{proj₁}\;(\Varid{nodes}\;\Varid{i})){}\<[E]%
\ColumnHook
\end{hscode}\resethooks
\fi

We order heaps of a \DCESHn{} network pointwise as follows (called
{\textsmaller[.5]{\ensuremath{\Varid{⊆s}}}} since it is the ``plural'' of {\textsmaller[.5]{\ensuremath{\Varid{⊆}}}}):
\iffullversion
\begin{hscode}\SaveRestoreHook
\column{B}{@{}>{\hspre}l<{\hspost}@{}}%
\column{E}{@{}>{\hspre}l<{\hspost}@{}}%
\>[B]{}\Varid{\char95 ⊆s\char95 }\;\mathbin{:}\;(\Varid{hs}\;\Varid{hs'}\;\mathbin{:}\;\Conid{Heaps})\;\Varid{→}\;\star{}\<[E]%
\ColumnHook
\end{hscode}\resethooks
\removecodespace
\fi
\begin{hscode}\SaveRestoreHook
\column{B}{@{}>{\hspre}l<{\hspost}@{}}%
\column{20}{@{}>{\hspre}l<{\hspost}@{}}%
\column{23}{@{}>{\hspre}l<{\hspost}@{}}%
\column{25}{@{}>{\hspre}l<{\hspost}@{}}%
\column{42}{@{}>{\hspre}l<{\hspost}@{}}%
\column{E}{@{}>{\hspre}l<{\hspost}@{}}%
\>[B]{}\Varid{hs}\;\Varid{⊆s}\;\Varid{hs'}\;\mathrel{=}\;\Varid{∀}\;\Varid{i}\;\Varid{→}\;{}\<[20]%
\>[20]{}\Keyword{let}\;{}\<[25]%
\>[25]{}(\Varid{h}_{\Varid{cl}},\Varid{h}_{\Varid{cnt}})\;{}\<[42]%
\>[42]{}\mathrel{=}\;\Varid{hs}\;\Varid{i}{}\<[E]%
\\
\>[25]{}(\Varid{h'}_{\Varid{cl}},\Varid{h'}_{\Varid{cnt}})\;{}\<[42]%
\>[42]{}\mathrel{=}\;\Varid{hs'}\;\Varid{i}{}\<[E]%
\\
\>[20]{}\hsindent{3}{}\<[23]%
\>[23]{}\Keyword{in}\;\Varid{h}_{\Varid{cl}}\;\Varid{⊆}\;\Varid{h'}_{\Varid{cl}}\;\Varid{×}\;\Varid{h}_{\Varid{cnt}}\;\Varid{⊆}\;\Varid{h'}_{\Varid{cnt}}{}\<[E]%
\ColumnHook
\end{hscode}\resethooks
\iffullversion
\removecodespace
\begin{hscode}\SaveRestoreHook
\column{B}{@{}>{\hspre}l<{\hspost}@{}}%
\column{3}{@{}>{\hspre}l<{\hspost}@{}}%
\column{6}{@{}>{\hspre}l<{\hspost}@{}}%
\column{9}{@{}>{\hspre}l<{\hspost}@{}}%
\column{10}{@{}>{\hspre}l<{\hspost}@{}}%
\column{21}{@{}>{\hspre}l<{\hspost}@{}}%
\column{E}{@{}>{\hspre}l<{\hspost}@{}}%
\>[B]{}\Varid{⊆s-refl}\;\mathbin{:}\;(\Varid{hs}\;\mathbin{:}\;\Conid{Heaps})\;\Varid{→}\;\Varid{hs}\;\Varid{⊆s}\;\Varid{hs}{}\<[E]%
\\
\>[B]{}\Varid{⊆s-refl}\;\Varid{hs}\;\Varid{node}\;\mathrel{=}\;\Keyword{let}\;(\Varid{h}_{\Varid{cl}},\Varid{h}_{\Varid{cnt}})\;\mathrel{=}\;\Varid{hs}\;\Varid{node}{}\<[E]%
\\
\>[B]{}\hsindent{21}{}\<[21]%
\>[21]{}\Keyword{in}\;\Varid{⊆-refl}\;\Varid{h}_{\Varid{cl}},\Varid{⊆-refl}\;\Varid{h}_{\Varid{cnt}}{}\<[E]%
\\
\>[B]{}\Varid{⊆s-trans}\;\mathbin{:}\;\{\mskip1.5mu \Varid{hs₁}\;\Varid{hs₂}\;\Varid{hs₃}\;\mathbin{:}\;\Conid{Heaps}\mskip1.5mu\}\;\Varid{→}\;{}\<[E]%
\\
\>[B]{}\hsindent{3}{}\<[3]%
\>[3]{}\Varid{hs₁}\;\Varid{⊆s}\;\Varid{hs₂}\;\Varid{→}\;\Varid{hs₂}\;\Varid{⊆s}\;\Varid{hs₃}\;\Varid{→}\;\Varid{hs₁}\;\Varid{⊆s}\;\Varid{hs₃}{}\<[E]%
\\
\>[B]{}\Varid{⊆s-trans}\;\Varid{hs₁⊆shs₂}\;\Varid{hs₂⊆shs₃}\;\Varid{node}\;{}\<[E]%
\\
\>[B]{}\hsindent{3}{}\<[3]%
\>[3]{}\mathrel{=}\;\Keyword{let}\;(\Varid{clh₁⊆clh₂},\Varid{conth₁⊆conth₂})\;\mathrel{=}\;\Varid{hs₁⊆shs₂}\;\Varid{node}{}\<[E]%
\\
\>[3]{}\hsindent{6}{}\<[9]%
\>[9]{}(\Varid{clh₂⊆clh₃},\Varid{conth₂⊆conth₃})\;\mathrel{=}\;\Varid{hs₂⊆shs₃}\;\Varid{node}{}\<[E]%
\\
\>[3]{}\hsindent{3}{}\<[6]%
\>[6]{}\Keyword{in}\;{}\<[10]%
\>[10]{}\Varid{⊆-trans}\;\Varid{clh₁⊆clh₂}\;\Varid{clh₂⊆clh₃},{}\<[E]%
\\
\>[10]{}\Varid{⊆-trans}\;\Varid{conth₁⊆conth₂}\;\Varid{conth₂⊆conth₃}{}\<[E]%
\ColumnHook
\end{hscode}\resethooks
\fi
\begin{lemma}[{\textsmaller[.5]{\ensuremath{\Conid{HeapUpdate.env}}}}, {\textsmaller[.5]{\ensuremath{\Conid{HeapUpdate.stack}}}}]
Given CESH closure heaps {\textsmaller[.5]{\ensuremath{\Varid{h}}}} and {\textsmaller[.5]{\ensuremath{\Varid{h'}}}} such that {\textsmaller[.5]{\ensuremath{\Varid{h}\;\Varid{⊆}\;\Varid{h'}}}} and families of \DCESHn{} heaps {\textsmaller[.5]{\ensuremath{\Varid{hs}}}} and {\textsmaller[.5]{\ensuremath{\Varid{hs'}}}}
such that {\textsmaller[.5]{\ensuremath{\Varid{hs}\;\Varid{⊆s}\;\Varid{hs'}}}},
\ifnotfullversion
then {\textsmaller[.5]{\ensuremath{\Varid{R}_{\Varid{Env}}\;\Varid{n}\;\Varid{h}\;\Varid{hs}\;\Varid{e₁}\;\Varid{e₂}}}} implies {\textsmaller[.5]{\ensuremath{\Varid{R}_{\Varid{Env}}\;\Varid{n}\;\Varid{h'}\;\Varid{hs'}\;\Varid{e₁}\;\Varid{e₂}}}} and {\textsmaller[.5]{\ensuremath{\Varid{R}_{\Varid{Stack}}\;\Varid{h}\;\Varid{hs}\;\Varid{s₁}\;\Varid{s₂}}}} implies {\textsmaller[.5]{\ensuremath{\Varid{R}_{\Varid{Stack}}\;\Varid{h'}\;\Varid{hs'}\;\Varid{s₁}\;\Varid{s₂}}}}.
\else
then we can prove the following:
\begin{hscode}\SaveRestoreHook
\column{B}{@{}>{\hspre}l<{\hspost}@{}}%
\column{3}{@{}>{\hspre}l<{\hspost}@{}}%
\column{24}{@{}>{\hspre}l<{\hspost}@{}}%
\column{E}{@{}>{\hspre}l<{\hspost}@{}}%
\>[3]{}\Varid{env}\;\mathbin{:}\;\Varid{∀}\;\{\mskip1.5mu \Varid{n}\mskip1.5mu\}\;\Varid{e₁}\;\Varid{e₂}\;\Varid{→}\;{}\<[24]%
\>[24]{}\Varid{R}_{\Varid{Env}}\;\Varid{n}\;\Varid{h}\;\Varid{hs}\;\Varid{e₁}\;\Varid{e₂}\;\Varid{→}\;\Varid{R}_{\Varid{Env}}\;\Varid{n}\;\Varid{h'}\;\Varid{hs'}\;\Varid{e₁}\;\Varid{e₂}{}\<[E]%
\ColumnHook
\end{hscode}\resethooks
\removecodespace
\begin{hscode}\SaveRestoreHook
\column{B}{@{}>{\hspre}l<{\hspost}@{}}%
\column{3}{@{}>{\hspre}l<{\hspost}@{}}%
\column{22}{@{}>{\hspre}l<{\hspost}@{}}%
\column{E}{@{}>{\hspre}l<{\hspost}@{}}%
\>[3]{}\Varid{stack}\;\mathbin{:}\;\Varid{∀}\;\Varid{s₁}\;\Varid{s₂}\;\Varid{→}\;{}\<[22]%
\>[22]{}\Varid{R}_{\Varid{Stack}}\;\Varid{h}\;\Varid{hs}\;\Varid{s₁}\;\Varid{s₂}\;\Varid{→}\;\Varid{R}_{\Varid{Stack}}\;\Varid{h'}\;\Varid{hs'}\;\Varid{s₁}\;\Varid{s₂}{}\<[E]%
\ColumnHook
\end{hscode}\resethooks
\fi
\end{lemma}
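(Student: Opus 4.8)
The statement is a ``monotonicity under heap growth'' lemma for the relations $R_{\Varid{Env}}$ and $R_{\Varid{Stack}}$. The plan is to prove it by mutual structural induction on the relevant data, mirroring the proof of the analogous lemma \textsmaller[.5]{\ensuremath{\Conid{HeapUpdate.config}}} for the CES/CESH bisimulation, but with extra care for the two new complications: the natural-number \emph{rank} parameter and the \emph{remote} pointers (with their two-heap-per-node structure). Since the relations $R_{\Varid{Env}}$, $R_{\Varid{Clos}}$, $R_{\Varid{Val}}$, $R_{\Varid{rptr}_{\Varid{cl}}}$, $R_{\Varid{StackElem}}$, $R_{\Varid{Stack}}$, $R_{\Varid{Thread}}$ are all defined by mutual recursion, the monotonicity proof must be carried out as one big mutual induction over the same family of auxiliary functions: \textsmaller[.5]{\ensuremath{\Varid{ptr}_{\Varid{cl}}}} (for $R_{\Varid{rptr}_{\Varid{cl}}}$), \textsmaller[.5]{\ensuremath{\Varid{value}}}, \textsmaller[.5]{\ensuremath{\Varid{closure}}}, \textsmaller[.5]{\ensuremath{\Varid{env}}}, \textsmaller[.5]{\ensuremath{\Varid{stackelem}}}, \textsmaller[.5]{\ensuremath{\Varid{stack}}}, each transporting a witness from \textsmaller[.5]{\ensuremath{(\Varid{h},\Varid{hs})}} to \textsmaller[.5]{\ensuremath{(\Varid{h'},\Varid{hs'})}}.

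First I would set up the local module parameterised by \textsmaller[.5]{\ensuremath{\Varid{h}\;\Varid{⊆}\;\Varid{h'}}} and \textsmaller[.5]{\ensuremath{\Varid{hs}\;\Varid{⊆s}\;\Varid{hs'}}}, and prove the base helper: if a closure pointer relation witness \textsmaller[.5]{\ensuremath{\Varid{R}_{\Varid{rptr}_{\Varid{cl}}}\;\Varid{n}\;\Varid{h}\;\Varid{hs}\;\Varid{ptr}\;(\Varid{dptr},\Varid{loc})}}} holds, then so does the one over \textsmaller[.5]{\ensuremath{\Varid{h'},\Varid{hs'}}}. For rank zero this is trivial (\textsmaller[.5]{\ensuremath{\Varid{tt}}}). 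For rank \textsmaller[.5]{\ensuremath{1+\Varid{rank}}}, I unpack the witness \textsmaller[.5]{\ensuremath{(\Varid{cl}_1,\Varid{cl}_2,\ldots)}}, reuse \textsmaller[.5]{\ensuremath{\Varid{h⊆h'}}} on the first lookup, project the closure-heap component of \textsmaller[.5]{\ensuremath{\Varid{hs⊆shs'}\;\Varid{loc}}} (the \textsmaller[.5]{\ensuremath{\Varid{proj₁}}} component, since \textsmaller[.5]{\ensuremath{\Varid{⊆s}}} bundles \textsmaller[.5]{\ensuremath{\Varid{h}_{\Varid{cl}}\;\Varid{⊆}\;\Varid{h'}_{\Varid{cl}}}} with \textsmaller[.5]{\ensuremath{\Varid{h}_{\Varid{cnt}}\;\Varid{⊆}\;\Varid{h'}_{\Varid{cnt}}}}) on the second lookup, and recurse via \textsmaller[.5]{\ensuremath{\Varid{closure}}} at rank \textsmaller[.5]{\ensuremath{\Varid{rank}}} on the tail. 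Then \textsmaller[.5]{\ensuremath{\Varid{value}}} is a straightforward case split: \textsmaller[.5]{\ensuremath{\MyConid{nat}/\MyConid{nat}}} returns the equality unchanged, the mismatched \textsmaller[.5]{\ensuremath{\MyConid{nat}/\MyConid{clos}}} cases are vacuous (\textsmaller[.5]{\ensuremath{()}}), and \textsmaller[.5]{\ensuremath{\MyConid{clos}/\MyConid{clos}}} delegates to \textsmaller[.5]{\ensuremath{\Varid{ptr}_{\Varid{cl}}}}. Next \textsmaller[.5]{\ensuremath{\Varid{closure}}} splits the pair and applies \textsmaller[.5]{\ensuremath{\Varid{env}}} to the environment component; \textsmaller[.5]{\ensuremath{\Varid{env}}} does list induction applying \textsmaller[.5]{\ensuremath{\Varid{value}}} at each position; \textsmaller[.5]{\ensuremath{\Varid{stackelem}}} branches on \textsmaller[.5]{\ensuremath{\MyConid{val}/\MyConid{cont}}} and, crucially, handles the ``for all \textsmaller[.5]{\ensuremath{\Varid{rank}}}'' quantifier by taking the argument \textsmaller[.5]{\ensuremath{\Varid{n}}} and feeding it through \textsmaller[.5]{\ensuremath{\Varid{value}}}/\textsmaller[.5]{\ensuremath{\Varid{closure}}} at rank \textsmaller[.5]{\ensuremath{\Varid{n}}}.

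The main statement of the lemma, \textsmaller[.5]{\ensuremath{\Varid{stack}}}, is then a list recursion with three kinds of branches mirroring the definition of $R_{\Varid{Stack}}$: the empty/\textsmaller[.5]{\ensuremath{\MyConid{nothing}}} case is \textsmaller[.5]{\ensuremath{\Varid{tt}}}; the cons/cons case pairs an application of \textsmaller[.5]{\ensuremath{\Varid{stackelem}}} with a recursive call; and the interesting case is the empty-CESH-stack-element-list paired with \textsmaller[.5]{\ensuremath{([\,],\MyConid{just}\;(\Varid{ptr},\Varid{loc}))}}, i.e.\ where the \DCESHn\ stack terminates in a remote continuation pointer. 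There I unpack the witness \textsmaller[.5]{\ensuremath{(\Varid{cont}_2,\Varid{s}_2,\ldots)}}, use the \textsmaller[.5]{\ensuremath{\Varid{proj₂}}} component of \textsmaller[.5]{\ensuremath{\Varid{hs⊆shs'}\;\Varid{loc}}} (the continuation-heap ordering) on the \textsmaller[.5]{\ensuremath{\Varid{proj₂}\;(\Varid{hs}\;\Varid{loc})\;\mathbin{!}\;\Varid{ptr}}} lookup, transport the \textsmaller[.5]{\ensuremath{\Varid{R}_{\Varid{StackElem}}}} witness for the continuation closure via \textsmaller[.5]{\ensuremath{\Varid{stackelem}}}, and recurse on the tail \textsmaller[.5]{\ensuremath{\Varid{s}_1,\Varid{s}_2}}.

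The hard part will be making the mutual recursion \emph{terminate} in Agda's eyes, since the structural descent is spread across the rank (in \textsmaller[.5]{\ensuremath{\Varid{ptr}_{\Varid{cl}}}}, which drops \textsmaller[.5]{\ensuremath{1+\Varid{rank}}} to \textsmaller[.5]{\ensuremath{\Varid{rank}}} before calling \textsmaller[.5]{\ensuremath{\Varid{closure}}}) and over the spine of environments/stacks (in \textsmaller[.5]{\ensuremath{\Varid{env}}} and \textsmaller[.5]{\ensuremath{\Varid{stack}}}), with the closure-through-pointer step bounded precisely by the finite rank --- this is the same device (the natural-number ``fuel'') that the paper flags as ``perhaps the most mathematically (and formally) challenging point.'' Concretely, I expect the accepted definition to interleave \textsmaller[.5]{\ensuremath{\Varid{ptr}_{\Varid{cl}}}}, \textsmaller[.5]{\ensuremath{\Varid{value}}}, \textsmaller[.5]{\ensuremath{\Varid{closure}}}, \textsmaller[.5]{\ensuremath{\Varid{env}}} in a single \textsmaller[.5]{\ensuremath{\Keyword{mutual}}} block whose termination is seen by lexicographic combination of rank then structure, with \textsmaller[.5]{\ensuremath{\Varid{stackelem}}}/\textsmaller[.5]{\ensuremath{\Varid{stack}}} layered on top since they only ever \emph{increase} rank freely (they universally quantify it) and descend on the stack spine. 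The remaining bookkeeping --- threading the two projections of \textsmaller[.5]{\ensuremath{\Varid{⊆s}}} through \textsmaller[.5]{\ensuremath{\Varid{loc}}}-indexed lookups, and discharging the vacuous constructor-mismatch cases --- is routine, exactly as in the CES/CESH version.
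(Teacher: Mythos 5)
Your proposal matches the paper's proof essentially exactly: the formalisation defines a module \textsmaller[.5]{\ensuremath{\Conid{HeapUpdate}}} parameterised by \textsmaller[.5]{\ensuremath{\Varid{h⊆h'}}} and \textsmaller[.5]{\ensuremath{\Varid{hs⊆shs'}}} containing mutually recursive functions \textsmaller[.5]{\ensuremath{\Varid{ptr}_{\Varid{cl}}}}, \textsmaller[.5]{\ensuremath{\Varid{value}}}, \textsmaller[.5]{\ensuremath{\Varid{closure}}}, \textsmaller[.5]{\ensuremath{\Varid{env}}}, \textsmaller[.5]{\ensuremath{\Varid{stackelem}}}, \textsmaller[.5]{\ensuremath{\Varid{stack}}}, with the rank-zero case trivially \textsmaller[.5]{\ensuremath{\Varid{tt}}}, the successor case descending in rank through \textsmaller[.5]{\ensuremath{\Varid{proj₁}\;(\Varid{hs⊆shs'}\;\Varid{loc})}}, and the remote-continuation stack case using the \textsmaller[.5]{\ensuremath{\Varid{proj₂}}} component, exactly as you describe. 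The only nit is a slip of terminology (the interesting \textsmaller[.5]{\ensuremath{\Varid{stack}}} case pairs a \emph{non-empty} CESH stack with a \DCESHn{} stack whose element list is empty but whose pointer is \textsmaller[.5]{\ensuremath{\MyConid{just}}}), which your subsequent description makes clear you understand correctly.
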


\begin{theorem}[{\textsmaller[.5]{\ensuremath{\Varid{simulation}_{\Varid{Sync}}}}}]
{\textsmaller[.5]{\ensuremath{\Varid{R}_{\Varid{Sync}}}}} is a simulation relation.
\iffullversion
\begin{hscode}\SaveRestoreHook
\column{B}{@{}>{\hspre}l<{\hspost}@{}}%
\column{E}{@{}>{\hspre}l<{\hspost}@{}}%
\>[B]{}\Varid{simulation}_{\Varid{Sync}}\;\mathbin{:}\;\Conid{Simulation}\;{\dummy\xrightarrow[\Varid{CESH}]{}\dummy}\;{\dummy\xrightarrow[\Varid{Sync}]{}\dummy}\;\Varid{R}_{\Varid{Sync}}{}\<[E]%
\ColumnHook
\end{hscode}\resethooks
\fi
\end{theorem}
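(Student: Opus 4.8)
The plan is to prove that $R_{\mathrm{Sync}}$ is a simulation by case analysis on the CESH transition $\mathrm{cfg}_1 \xrightarrow{}_{\mathrm{CESH}} \mathrm{cfg}_1'$, and, given a synchronous network $\mathrm{nodes}$ with $R_{\mathrm{Sync}}\ \mathrm{cfg}_1\ \mathrm{nodes}$, exhibiting a network $\mathrm{nodes}'$ such that $\mathrm{nodes} \xrightarrow{}_{\mathrm{Sync}} \mathrm{nodes}'$ and $R_{\mathrm{Sync}}\ \mathrm{cfg}_1'\ \mathrm{nodes}'$. From $R_{\mathrm{Sync}}$ we extract the unique active node $i$, the witness that all other nodes are inactive, and the witness $R_{\mathrm{Thread}}$ relating $\mathrm{cfg}_1$ to $\mathrm{nodes}\ i$; this pins down that $\mathrm{nodes}\ i = (\mathrm{just}\ (c_2, e_2, s_2), h_{cl}, h_{cnt})$ with the code, environment and stack componentwise related to those of $\mathrm{cfg}_1$.

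The cases split into two groups. For the \emph{local} instructions ($\mathtt{VAR}$, $\mathtt{CLOS}$, $\mathtt{LIT}$, $\mathtt{OP}$, $\mathtt{COND}$-$0$, $\mathtt{COND}$-$1{+}n$, and the local $\mathtt{APPL}$ and $\mathtt{RET}$ where the closure/continuation pointer lands on node $i$) the DCESH machine makes a single $\mathtt{silent}$ step, lifted to the network via $\mathtt{silent\text{-}step}$. Here I would use the lemmas already in the excerpt: $\mathtt{lookup\text{-}var}$ to transport environment lookups across $R_{\mathrm{Env}}$ for $\mathtt{VAR}$; for $\mathtt{CLOS}$ I allocate in the closure heap and use $\mathtt{!\text{-}▸}$ together with $\mathtt{h⊆h▸x}$ and the $\mathtt{HeapUpdate}$ lemmas ($\mathtt{HeapUpdate.env}$, $\mathtt{HeapUpdate.stack}$) to re-establish the relations under the enlarged heap; for local $\mathtt{APPL}$ I peel off $R_{\mathrm{rptr}_{cl}}$ at a successor rank via $\mathtt{R\text{-}clptr\text{-}find}$ to obtain the target closure and the related environment. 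Since $R_{\mathrm{StackElem}}$ quantifies over all ranks, I also invoke the $\mathtt{R\text{-}val\text{-}pred}$-style reasoning to pass between ``for all $n$'' and ``for all $1{+}n$''. The $\mathtt{REMOTE}$ case of CESH (the no-op-ish rule added for bytecode parity) is the one where DCESH does \emph{not} match with a silent step: there I use $\mathtt{REMOTE\text{-}send}$ to allocate a remote continuation, deactivating node $i$, followed by $\mathtt{REMOTE\text{-}receive}$ on node $i'$ — a $\mathtt{comm\text{-}step}$ — and I must check that the resulting network again has exactly one active node (using $\mathtt{nodes\text{-}ia\text{-}update}$ and the one-active-preservation lemmas) and that $R_{\mathrm{Thread}}$ holds for the freshly started thread, whose stack is $([\,], \mathtt{just}\ \mathrm{rptr}_{cnt})$ related to the CESH stack $\mathtt{cont}\,(c,e) \mathbin{::} s$ via the remote-continuation branch of $R_{\mathrm{Stack}}$, using $\mathtt{!\text{-}▸}$ on the continuation heap.

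I expect the main obstacle to be the bookkeeping around the rank-indexed relations combined with the two-step ($\mathtt{comm\text{-}step}$) cases. Whenever a heap grows — in $\mathtt{CLOS}$, $\mathtt{APPL\text{-}send}$, $\mathtt{REMOTE\text{-}send}$ — every previously established $R_{\mathrm{Env}}$, $R_{\mathrm{Stack}}$, $R_{\mathrm{Thread}}$ fact must be re-derived at the new heap, which is exactly what $\mathtt{HeapUpdate}$ provides, but threading $\subseteq$ and $\subseteq s$ through the $\mathtt{update}$ of the indexed family (via $\mathtt{update\text{-}heaps\text{-}⊆s}$ and friends) and keeping the ``all except $i$ inactive'' invariant synchronized across the send-then-receive is delicate. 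The non-structural, rank-based definition is precisely there to make this go through, so the core of the argument is checking that the rank can always be chosen large enough (or universally quantified) at each reconstruction step; once that pattern is set up, each case is mechanical. Finally I assemble the witnesses: the new active node (either $i$ for local steps, or the receiver for $\mathtt{comm\text{-}step}$s), the inactivity proof, and the reconstructed $R_{\mathrm{Thread}}$, packaging them into $R_{\mathrm{Sync}}\ \mathrm{cfg}_1'\ \mathrm{nodes}'$.
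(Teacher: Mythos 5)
Your proposal is correct and takes essentially the same route as the paper: case analysis on the CESH transition, matching each case with a silent step (local instructions, including local \textsf{APPL}/\textsf{RET}) or a \textsf{comm-step} (remote \textsf{APPL}/\textsf{RET} and \textsf{REMOTE}), and re-establishing the relations after heap growth via the \textsf{HeapUpdate} lemmas. The paper states its proof in one line, but the underlying formal development follows exactly the decomposition, auxiliary lemmas, and rank/inactivity bookkeeping you describe.
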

\begin{proof}
By cases on the CESH transition. In each case, the \DCESHn{}
network can make analogous transitions. Use the {\textsmaller[.5]{\ensuremath{\Conid{HeapUpdate}}}} lemmas to
show that {\textsmaller[.5]{\ensuremath{\Varid{R}_{\Varid{Sync}}}}} is preserved.
\end{proof}

\begin{theorem}[{\textsmaller[.5]{\ensuremath{\Varid{presimulation}_{\Varid{Sync}}}}}]
The inverse of {\textsmaller[.5]{\ensuremath{\Varid{R}_{\Varid{Sync}}}}} is a presimulation.
\iffullversion
\begin{hscode}\SaveRestoreHook
\column{B}{@{}>{\hspre}l<{\hspost}@{}}%
\column{37}{@{}>{\hspre}l<{\hspost}@{}}%
\column{E}{@{}>{\hspre}l<{\hspost}@{}}%
\>[B]{}\Varid{presimulation}_{\Varid{Sync}}\;\mathbin{:}\;\Conid{Presimulation}\;{}\<[37]%
\>[37]{}{\dummy\xrightarrow[\Varid{Sync}]{}\dummy}\;{\dummy\xrightarrow[\Varid{CESH}]{}\dummy}\;(\Varid{R}_{\Varid{Sync}}\;\Varid{⁻¹}){}\<[E]%
\ColumnHook
\end{hscode}\resethooks
\fi
\end{theorem}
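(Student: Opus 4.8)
The plan is to mirror the CES/CESH presimulation argument in the network setting, reading the statement as a progress property: whenever a synchronous \DCESHn{} network that is related to a CESH configuration $b$ takes a step, $b$ can take a $\xrightarrow[\Varid{CESH}]{}$ step. First I would unfold $R_{\Varid{Sync}}$: the relation gives a node identifier $i$ with all other nodes inactive and $R_{\Varid{Thread}}$ between $b$ and the machine at node $i$. Since $R_{\Varid{Thread}}$ is $\bot$ when the machine is $\MyConid{nothing}$, node $i$ is running a thread, and $R_{\Varid{Thread}}$ supplies the three facts I will actually use: the code fragment of $b$ equals that of the thread ($R_{\Varid{Code}}$), their environments are pointwise related (for every rank), and their stacks are related.

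Next I would case on the $\xrightarrow[\Varid{Sync}]{}$ transition, which is either a $\MyConid{silent-step}$ by some node or a $\MyConid{comm-step}$ consisting of a send half by a node $s$ followed by a receive half by a node $r$. In both cases the acting node — the one doing the silent step, respectively the send — must be $i$, because an inactive ($\MyConid{nothing}$) machine has neither a silent nor a send transition; this is the role of the auxiliary \DCESHn{} lemmas about inactive nodes (e.g.\ $\Varid{all-except-find-⟶<silent>}$ and $\Varid{all-except-find-⟶<send>}$, built on $\Varid{all-inactive-¬silent}$/$\Varid{all-inactive-¬send}$). Hence in every case the machine at node $i$ makes a transition, and since its code equals $b$'s code, the head instruction of $b$ is pinned down.

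The core of the proof is then a case analysis on that machine transition, producing in each case one analogous CESH step. In the silent cases ($\MyConid{VAR}$, $\MyConid{CLOS}$, $\MyConid{LIT}$, $\MyConid{OP}$, $\MyConid{COND-0}$, $\MyConid{COND-1+n}$, and the local/same-node $\MyConid{APPL}$ and $\MyConid{RET}$), the matching CESH rule applies once the head instruction and the shape of the top of the stack are known; the stack shape is forced by $R_{\Varid{Stack}}$ and $R_{\Varid{StackElem}}$ from the \DCESHn{} side (e.g.\ a $\MyConid{nat}$ on the \DCESHn{} stack cannot face a $\MyConid{clos}$ on the CESH stack). For $\MyConid{VAR}$ I would use the analogue of $\Varid{lookup-var'}$: related environments have the same list spine, so success of the \DCESHn{} lookup forces success of the CESH lookup. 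For local $\MyConid{APPL}$, instantiating the closure-pointer relation $R_{\Varid{rptr}_{\Varid{cl}}}$ at rank $1$ shows that the CESH pointer dereferences in $b$'s heap to a closure related to the one \DCESHn{} entered, so CESH can take $\MyConid{APPL}$; local $\MyConid{RET}$ is similar via the $\MyConid{cont}$ case of $R_{\Varid{StackElem}}$. In the send cases ($\MyConid{REMOTE-send}$, cross-node $\MyConid{APPL-send}$, $\MyConid{RET-send}$) the presimulation need only exhibit a single CESH step even though \DCESHn{} performs a compound send/receive — this is fine because the fused CESH rule absorbs the receive half and every premise it needs is available on the sending side: $\MyConid{REMOTE-send}$ is matched by the \emph{extended} CESH $\MyConid{REMOTE}$ rule, which always fires; $\MyConid{APPL-send}$ is matched by CESH $\MyConid{APPL}$, using $R_{\Varid{rptr}_{\Varid{cl}}}$ at rank $1$ for the closure; and $\MyConid{RET-send}$ is matched by CESH $\MyConid{RET}$, where the \DCESHn{} stack $\MyConid{val}\,v :: []$ ending in a remote continuation forces (since the empty-versus-remote-pointer case of $R_{\Varid{Stack}}$ is $\bot$) a $\MyConid{cont}$ below the value on the CESH stack.

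The main obstacle is bookkeeping rather than conceptual: the large number of transition cases, and, in the pointer cases, carefully extracting the ``CESH pointer dereferences to a related closure'' fact from the rank-indexed relations (always at rank $1$, relying on the fact that $R_{\Varid{StackElem}}$ and $R_{\Varid{Thread}}$ quantify over all ranks so rank $1$ is available). The genuinely delicate point is ensuring the forced stack and environment shapes line up so each CESH rule's premises are met; once the structure of $R_{\Varid{Stack}}$/$R_{\Varid{StackElem}}$/$R_{\Varid{Env}}$ pins these down, exhibiting the CESH step is immediate. Combined with $\Varid{simulation}_{\Varid{Sync}}$ and $\Varid{determinism}_{\Varid{Sync}}$, this result feeds $\Varid{presimulation-to-simulation}$ to yield the \DCESHn{} bisimulation.
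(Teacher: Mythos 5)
Your proposal is correct and follows essentially the same route as the paper's formalisation: case on the {\textsmaller[.5]{\ensuremath{\xrightarrow[\Varid{Sync}]{}}}} step, use the inactivity lemmas to pin the acting (silent or sending) node to the unique active, related node, then case on that machine transition and exhibit the matching CESH step, extracting dereferencing facts from the rank-indexed relations at rank~$1$ and discharging the receive half of {\textsmaller[.5]{\ensuremath{\MyConid{comm-step}}}} for free. The paper packages the last stage into the helper lemmas {\textsmaller[.5]{\ensuremath{\Varid{presimulation-sync-<silent>}}}} and {\textsmaller[.5]{\ensuremath{\Varid{presimulation-sync-<send>}}}}, but the content is exactly what you describe.
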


\begin{theorem}[{\textsmaller[.5]{\ensuremath{\Varid{bisimulation}_{\Varid{Sync}}}}}]
{\textsmaller[.5]{\ensuremath{\Varid{R}_{\Varid{Sync}}}}} is a bisimulation.
\iffullversion
\begin{hscode}\SaveRestoreHook
\column{B}{@{}>{\hspre}l<{\hspost}@{}}%
\column{E}{@{}>{\hspre}l<{\hspost}@{}}%
\>[B]{}\Varid{bisimulation}_{\Varid{Sync}}\;\mathbin{:}\;\Conid{Bisimulation}\;{\dummy\xrightarrow[\Varid{CESH}]{}\dummy}\;{\dummy\xrightarrow[\Varid{Sync}]{}\dummy}\;\Varid{R}_{\Varid{Sync}}{}\<[E]%
\ColumnHook
\end{hscode}\resethooks
\fi
\end{theorem}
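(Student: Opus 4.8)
The plan is to obtain the bisimulation exactly as in the CESH case (Theorem~\ref{...}: the proof of \textsf{bisimulation} for $R_{\mathit{Cfg}}$), namely by combining the simulation result with the presimulation result via the generic lemma \textsf{presimulation-to-simulation}. Concretely, we already have \textsf{simulation}$_{\mathit{Sync}}$, which says $R_{\mathit{Sync}}$ is a simulation from $\xrightarrow[\mathit{CESH}]{}$ to $\xrightarrow[\mathit{Sync}]{}$, and \textsf{presimulation}$_{\mathit{Sync}}$, which says $(R_{\mathit{Sync}})^{-1}$ is a presimulation from $\xrightarrow[\mathit{Sync}]{}$ to $\xrightarrow[\mathit{CESH}]{}$. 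To upgrade the presimulation to a full simulation in the backward direction, \textsf{presimulation-to-simulation} additionally requires that $\xrightarrow[\mathit{Sync}]{}$ is deterministic at every synchronous network state $b$ that is related (via $R_{\mathit{Sync}}$) to some CESH configuration $a$. So the structure of the proof is: (1) establish this conditional determinism; (2) feed it, together with \textsf{simulation}$_{\mathit{Sync}}$ and \textsf{presimulation}$_{\mathit{Sync}}$, into \textsf{presimulation-to-simulation} to get that $(R_{\mathit{Sync}})^{-1}$ is a simulation; (3) pair this with \textsf{simulation}$_{\mathit{Sync}}$ to conclude $R_{\mathit{Sync}}$ is a bisimulation.

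For step (1), I would unfold the definition of $R_{\mathit{Sync}}$: if $R_{\mathit{Sync}}\;\mathit{cfg}\;\mathit{nodes}$ holds, then there is some node $i$ with \textsf{all nodes except $i$ are inactive}, and node $i$ is active (since $R_{\mathit{Thread}}$ on $\mathbf{nothing}$ is $\bot$). That is precisely the hypothesis of \textsf{determinism}$_{\mathit{Sync}}$, which states that a synchronous network with all nodes but one inactive takes a deterministic next step. So step (1) is a short unfolding argument followed by an appeal to \textsf{determinism}$_{\mathit{Sync}}$ (which in turn rests on \textsf{point-to-point} and the per-tag determinism lemmas for $\xrightarrow[\mathit{Machine}]{}$). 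This mirrors exactly the \textsf{det} helper used inside the proof of the CESH \textsf{bisimulation} theorem, with \textsf{determinism}$_{\mathit{CESH}}$ replaced by \textsf{determinism}$_{\mathit{Sync}}$.

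The main obstacle is not in this top-level assembly — which is essentially bookkeeping — but in the two inputs I am allowed to assume, \textsf{simulation}$_{\mathit{Sync}}$ and \textsf{presimulation}$_{\mathit{Sync}}$, and in making sure the determinism hypothesis of \textsf{presimulation-to-simulation} is discharged with the \emph{right} relatedness side condition. The subtlety is that \textsf{presimulation-to-simulation} only needs determinism \emph{at states related to something}, and that weakening is exactly what lets us use \textsf{determinism}$_{\mathit{Sync}}$, which does not hold for arbitrary networks (two active nodes could step independently) but does hold once we know \textsf{all nodes except $i$ are inactive}; extracting that invariant from $R_{\mathit{Sync}}$ is the one place where the shape of the relation genuinely matters. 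Once that is in hand, I would write:

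\begin{proof}
By definition, $R_{\mathit{Sync}}\;\mathit{cfg}\;\mathit{nodes}$ entails that all nodes but one are inactive, so by \textsf{determinism}$_{\mathit{Sync}}$ the relation $\xrightarrow[\mathit{Sync}]{}$ is deterministic at any such $\mathit{nodes}$. Applying \textsf{presimulation-to-simulation} to \textsf{simulation}$_{\mathit{Sync}}$, \textsf{presimulation}$_{\mathit{Sync}}$ and this determinism shows that $(R_{\mathit{Sync}})^{-1}$ is a simulation, which together with \textsf{simulation}$_{\mathit{Sync}}$ yields that $R_{\mathit{Sync}}$ is a bisimulation.
\end{proof}
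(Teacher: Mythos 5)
Your proposal matches the paper's proof exactly: the paper also obtains the result by applying \textsf{presimulation-to-simulation} to \textsf{simulation}$_{\mathit{Sync}}$ and \textsf{presimulation}$_{\mathit{Sync}}$, discharging the determinism side condition by unpacking $R_{\mathit{Sync}}$ to obtain the ``all nodes except one inactive'' witness and feeding it to \textsf{determinism}$_{\mathit{Sync}}$. No gaps.
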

\begin{proof}
Theorem {\textsmaller[.5]{\ensuremath{\Varid{presimulation-to-simulation}}}} applied to
{\textsmaller[.5]{\ensuremath{\Varid{determinism}_{\Varid{Sync}}}}} and {\textsmaller[.5]{\ensuremath{\Varid{simulation}_{\Varid{Sync}}}}} implies that {\textsmaller[.5]{\ensuremath{\Varid{R}_{\Varid{Sync}}\;\Varid{⁻¹}}}} is a simulation, which together with {\textsmaller[.5]{\ensuremath{\Varid{simulation}_{\Varid{Sync}}}}} shows that
{\textsmaller[.5]{\ensuremath{\Varid{R}_{\Varid{Sync}}}}} is a bisimulation.
\end{proof}
\begin{corollary}[\mbox{{\textsmaller[.5]{\ensuremath{\Varid{termination-agrees}_{\Varid{Sync}}}}}, {\textsmaller[.5]{\ensuremath{\Varid{divergence-agrees}_{\Varid{Sync}}}}}}]
\ifnotfullversion
In particular, if {\textsmaller[.5]{\ensuremath{\Varid{R}_{\Varid{Sync}}\;\Varid{cfg}\;\Varid{nodes}}}} then {\textsmaller[.5]{\ensuremath{\Varid{cfg}\;\downarrow_{\Varid{CESH}}\;\MyConid{nat}\;\Varid{n}\;\Varid{↔}\;\Varid{nodes}\;\downarrow_{\Varid{Sync}}\;\MyConid{nat}\;\Varid{n}}}} and {\textsmaller[.5]{\ensuremath{\Varid{cfg}\;\uparrow_{\Varid{CESH}}\;\Varid{↔}\;\Varid{nodes}\;\uparrow_{\Varid{Sync}}}}}.
\else
In particular, a CESH configuration terminates with a natural number
{\textsmaller[.5]{\ensuremath{\Varid{n}}}} (diverges) if and only if a related synchronous DCESH network
terminates with a natural number {\textsmaller[.5]{\ensuremath{\Varid{n}}}} (diverges).
\begin{hscode}\SaveRestoreHook
\column{B}{@{}>{\hspre}l<{\hspost}@{}}%
\column{3}{@{}>{\hspre}l<{\hspost}@{}}%
\column{E}{@{}>{\hspre}l<{\hspost}@{}}%
\>[B]{}\Varid{termination-agrees}_{\Varid{Sync}}\;\mathbin{:}\;\Varid{∀}\;\Varid{cfg}\;\Varid{nodes}\;\Varid{n}\;\Varid{→}\;\Varid{R}_{\Varid{Sync}}\;\Varid{cfg}\;\Varid{nodes}\;\Varid{→}\;{}\<[E]%
\\
\>[B]{}\hsindent{3}{}\<[3]%
\>[3]{}\Varid{cfg}\;\downarrow_{\Varid{CESH}}\;\MyConid{nat}\;\Varid{n}\;\Varid{↔}\;\Varid{nodes}\;\downarrow_{\Varid{Sync}}\;\MyConid{nat}\;\Varid{n}{}\<[E]%
\ColumnHook
\end{hscode}\resethooks
\removecodespace
\begin{hscode}\SaveRestoreHook
\column{B}{@{}>{\hspre}l<{\hspost}@{}}%
\column{3}{@{}>{\hspre}l<{\hspost}@{}}%
\column{E}{@{}>{\hspre}l<{\hspost}@{}}%
\>[B]{}\Varid{divergence-agrees}_{\Varid{Sync}}\;\mathbin{:}\;\Varid{∀}\;\Varid{cfg₁}\;\Varid{cfg₂}\;\Varid{→}\;\Varid{R}_{\Varid{Sync}}\;\Varid{cfg₁}\;\Varid{cfg₂}\;\Varid{→}\;{}\<[E]%
\\
\>[B]{}\hsindent{3}{}\<[3]%
\>[3]{}\Varid{cfg₁}\;\uparrow_{\Varid{CESH}}\;\Varid{↔}\;\Varid{cfg₂}\;\uparrow_{\Varid{Sync}}{}\<[E]%
\ColumnHook
\end{hscode}\resethooks
\fi
\end{corollary}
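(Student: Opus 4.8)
The plan is to derive this corollary from Theorem~$\mathit{bisimulation}_{\mathrm{Sync}}$, exactly as the analogous corollary of Section~\ref{section:CESH-bisim} was derived from the CES/CESH bisimulation. For divergence I would invoke the generic fact that a bisimulation preserves (two-sided) divergence: since $R_{\mathrm{Sync}}$ is a bisimulation between $\to_{\mathrm{CESH}}$ and $\to_{\mathrm{Sync}}$ and $R_{\mathrm{Sync}}\ \mathit{cfg}\ \mathit{nodes}$ holds, one immediately obtains $\mathit{cfg}\uparrow_{\mathrm{CESH}}\leftrightarrow\mathit{nodes}\uparrow_{\mathrm{Sync}}$, with no case analysis. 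This is the same $\mathit{divergence\mbox{-}bisimulation}$ lemma used for CES/CESH; it only needs that from an $R$-related pair each side can always match the other's next step, which is exactly what a bisimulation gives.

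For termination I would prove the two implications separately, each by induction on the length of the reduction sequence. For the forward direction, suppose $\mathit{cfg}\to_{\mathrm{CESH}}^{*}(\mathsf{END},[\,],[\mathsf{val}\,(\mathsf{nat}\,n)],h)$. Using the simulation half of $\mathit{bisimulation}_{\mathrm{Sync}}$ step by step, I would build a synchronous reduction $\mathit{nodes}\to_{\mathrm{Sync}}^{*}\mathit{nodes}'$ tracking the CESH run, carrying an $R_{\mathrm{Sync}}$-witness at every intermediate state, so that in the end the final CESH configuration is $R_{\mathrm{Sync}}$-related to $\mathit{nodes}'$. It then remains to read off the shape of $\mathit{nodes}'$ from the relation: $R_{\mathrm{Sync}}$ says exactly one node $i$ is active and its thread is $R_{\mathrm{Thread}}$-related to $(\mathsf{END},[\,],[\mathsf{val}\,(\mathsf{nat}\,n)],h)$; unfolding $R_{\mathrm{Thread}}$, $R_{\mathrm{Env}}$, $R_{\mathrm{Stack}}$ and $R_{\mathrm{Val}}$ forces the code to be $\mathsf{END}$, the environment to be empty, the stack to be the single value $\mathsf{val}\,(\mathsf{nat}\,n)$ with \emph{no} bottom remote-continuation pointer (the $\mathsf{nothing}$ case, since the CESH stack contains no continuation), and the number component to equal $n$ — which is precisely $\mathit{nodes}'\downarrow_{\mathrm{Sync}}\mathsf{nat}\,n$. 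The backward direction is symmetric: from $\mathit{nodes}\downarrow_{\mathrm{Sync}}\mathsf{nat}\,n$ I would pull the run back through the inverse-simulation half to get $\mathit{cfg}\to_{\mathrm{CESH}}^{*}\mathit{cfg}'$ with $\mathit{cfg}'$ related to the final "one active node reaching $\mathsf{END}$" network, then do the dual shape analysis to conclude $\mathit{cfg}'=(\mathsf{END},[\,],[\mathsf{val}\,(\mathsf{nat}\,n)],h')$ for some heap $h'$, i.e.\ $\mathit{cfg}\downarrow_{\mathrm{CESH}}\mathsf{nat}\,n$. Here I would additionally use the uniqueness-of-active-node fact (if a network is "inactive except $i$" and also "inactive except $i'$" while node $i$ is actually active, then $i=i'$) to identify the active node named in $\mathit{nodes}\downarrow_{\mathrm{Sync}}$ with the one in the $R_{\mathrm{Sync}}$ witness.

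The routine-but-bulky part, and the only real obstacle, is the base-case shape analysis in the backward direction: as in the CES/CESH proof one must enumerate the possible shapes of a CESH configuration that could be $R_{\mathrm{Sync}}$-related to a final network and rule out all but the intended one (wrong-length stacks, a continuation rather than a value on top, a closure rather than a number, a non-empty environment, a leftover remote continuation pointer, etc.), each time extracting a contradiction from the matching $\bot$ clause of the relation. This is mechanical and mirrors the $\mathit{termination\mbox{-}agrees}$ argument of Section~\ref{section:CESH-bisim} almost line for line; the only genuinely new ingredient relative to that proof is the extra layer of network bookkeeping (the "all except $i$ are inactive" predicate and the remote-continuation case of $R_{\mathrm{Stack}}$), which the forward direction discharges directly and the backward direction handles via the uniqueness lemma above.
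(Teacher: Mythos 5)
Your proposal matches the paper's proof: divergence is discharged by the generic \emph{divergence-bisimulation} lemma applied to $\mathit{bisimulation}_{\mathrm{Sync}}$, and termination is proved as two implications by induction on the reduction sequence, threading the simulation (resp.\ inverse-simulation) half of the bisimulation through each step and finishing with the case analysis on the shape of the related final configuration, using the uniqueness-of-active-node fact (\textsf{all-except-eq}) in the backward direction. No gaps; this is essentially the same argument.
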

We also have that initial configurations are in {\textsmaller[.5]{\ensuremath{\Varid{R}_{\Varid{Sync}}}}}:
\begin{hscode}\SaveRestoreHook
\column{B}{@{}>{\hspre}l<{\hspost}@{}}%
\column{42}{@{}>{\hspre}l<{\hspost}@{}}%
\column{E}{@{}>{\hspre}l<{\hspost}@{}}%
\>[B]{}\Varid{initial-related}_{\Varid{Sync}}\;\mathbin{:}\;\Varid{∀}\;\Varid{c}\;\Varid{i}\;\Varid{→}\;\Varid{R}_{\Varid{Sync}}\;{}\<[42]%
\>[42]{}(\Varid{c},[\mskip1.5mu \mskip1.5mu],[\mskip1.5mu \mskip1.5mu],\Varid{∅})\;{}\<[E]%
\\
\>[42]{}(\Varid{initial-network}_{\Varid{Sync}}\;\Varid{c}\;\Varid{i}){}\<[E]%
\ColumnHook
\end{hscode}\resethooks

These final results complete the picture for the \DCESHn{} machine. We
have established that we get the same final result regardless of
whether we choose to run a fragment of code using the CES, the CESH,
or the \DCESHn{} machine.

\section{Related work}

There is a multitude of programming languages and libraries for distributed
computing. We focus mostly on those with a functional flavour. For surveys, see
\cite{DBLP:journals/jfp/TrinderLP02, DBLP:journals/lisp/LoidlRSHHKLMPPPT03}.
Broadly speaking, we can divide them into those that use some form of explicit
message passing, and those that have more implicit mechanisms for distribution
and communication.

\paragraph*{Explicit}
A prime example of a language for distributed computing that uses explicit
message passing is Erlang~\cite{DBLP:books/daglib/0073501}. Erlang is a very
successful language used prominently in the telecommunication industry.
Conceptually similar solutions include MPI~\cite{gropp1999using} and Cloud
Haskell~\cite{DBLP:conf/haskell/EpsteinBJ11}.
The theoretically advanced projects Nomadic
Pict~\cite{DBLP:journals/ieeecc/WojciechowskiS00} and the distributed join
calculus~\cite{DBLP:conf/concur/FournetGLMR96} both support a notion of
mobility for distributed agents, which enables more expressivity for the
distribution of a program than the fairly static networks that our work uses.
In general, explicit languages are well-proven, but far away in the language
design-space from the seamless distributed computing that we envision because
they place the burden of explicit communication on the programmer.

\paragraph*{Implicit}
Our work can be seen as a generalised Remote Procedure
Call (RPC)~\cite{DBLP:journals/tocs/BirrelN84}.  In \emph{loc. cit.} it is argued
that emulating a shared address space is infeasible since it requires each
pointer to also contain location information, and that it is questionable
whether acceptable efficiency can be achieved.  These arguments certainly apply
to our work, where we do just this. With the goal of expressivity in mind,
however, we believe that we should \emph{enable} the programmer to write the
potentially inefficient programs that (internally) use remote pointers, because
not all programs are performance critical. Furthermore, using a tagged pointer
representation~\cite{DBLP:conf/icfp/MarlowYJ07} for closure pointers means that
we can tag pointers that are remote, and pay a very low, if any, performance
penalty for local pointers.

Remote Evaluation (REV)~\cite{DBLP:journals/toplas/StamosG90} is another
generalisation of RPC, siding with us on enabling the use of
higher-order functions across node boundaries. The main differences between REV
and our work is that REV relies on sending code and that it has a more general
distribution mechanism.

The well-researched project Eden~\cite{DBLP:journals/jfp/LoogenOP05}, which
builds on Haskell, is a semi-implicit language. Eden allows expressing
distributed algorithms at a high level of abstraction, and is mostly implicit
about communication, but explicit about process creation. Eden is specified
operationally using a two-level semantics similar to ours.

Hop~\cite{DBLP:conf/oopsla/SerranoGL06}, Links~\cite{DBLP:conf/fmco/CooperLWY06},
and ML5~\cite{DBLP:conf/tgc/VIICH07}
are examples of so called
\emph{tierless} languages that allow writing (for instance) the client and
server code of web applications in unified languages with more or less
seamless interoperability between them. We believe that our work shows
how a principled back-end and semantics can work for such languages.

\section{Conclusion and further work}
We have seen the definition and correctness proofs of \DCESHn{}, a distributed
abstract machine. Previously we have argued that distributed and heterogeneous
programming would benefit from languages that are architecture-independent,
using compilation based on the idea of seamless
computing~\cite{DBLP:conf/lics/FredrikssonG13}. This would allow the programmer
to focus on solving algorithmic problems without having to worry about the
low-level details of the underlying computational system. Our previous work
shows how to achieve this, but is very different from conventional compilation
techniques, relying on game semantics. This means that the vast literature on compiler optimisation does
not generally apply to it, and that it is difficult to interface with legacy
code.  We believe that the current work alleviates these issues, since it shows
a way to do distributed execution as a conservative extension of existing
abstract machines.  Additionally, \DCESHn{} adds very little overhead, if any,
for \emph{local} execution, while permitting any sub-terms to be seamlessly
distributed.


\paragraph*{Implementation}
An implementation of the \DCESHn{} machine can be constructed by 
\iffullversion
either a bytecode
interpreter or 
\fi
compiling the bytecode into a low-level language by macro expansion.  We have a
prototype implementation that does
\iffullversion
the latter,
\else
this,
\fi
illustrating the potential for
using \DCESHn{} as a basis for a usable 
\ifnotfullversion
compiler \cite{SourceCode}.
\else
compiler.
\fi

\paragraph*{Outstanding questions}
\begin{itemize}
  \item Do the proofs generalise to a language with parallelism?
  \item Can we efficiently do distributed garbage collection
  \cite{DBLP:conf/iwmm/PlainfosseS95}? This is necessary, since \DCESHn{}, in
  contrast to our previous work, never reclaims heap garbage. 
  \iffullversion
  It would also be interesting to find out if parts of programs can use
  \emph{local} garbage collection for better performance.
  \fi
  \item Can we find a way to express more complicated distribution patterns
  than those made possible by locus specifiers? From our experience, locus
  specifiers are excellent for simple programs (especially those with
  client-server disciplines), but due to the static nature of the specifiers,
  it is hard to express dynamic distributed algorithms.
  We believe that our work can be extended with dynamic locus specifiers to
  handle this.
  \iffullversion
  A simple first step would be to add support for compiling parts of a program
  for more than one node at a time, making it possible to pass (references to)
  functions already existing on some remote node to it.
  \item Can we add support for sending code code (like
  REV~\cite{DBLP:journals/toplas/StamosG90}) when the code is location-independent?
  \fi
\end{itemize}
\iffullversion
Two other language features that our abstract machines currently do not handle,
but that we would like to implement are abstract data types and mutable
references.
\fi

\section*{Acknowledgements}
\iffullversion
The author would like to thank
Mart\'{i}n Escard\'{o} for assistance with Agda,
Fredrik Nordvall Forsberg for rubber ducking,
Dan Ghica for fruitful discussions and supervision,
and Paul Blain Levy for simplifying some of the definitions.

\else
The author would like to thank
Mart\'{i}n Escard\'{o},
Fredrik Nordvall Forsberg,
Dan Ghica,
and Paul Blain Levy.
\fi
This work was supported by Microsoft Research through its PhD Scholarship
Programme.


\bibliographystyle{IEEEtran}
\iffullversion
\bibliography{IEEEabrv,\jobname,dblprefs}
\else
\bibliography{IEEEabrv,\jobname,shortdblp}
\fi

\end{document}